\documentclass[draftcls,onecolumn]{IEEEtran}

\usepackage{amsmath, amssymb, amsthm}
\usepackage{graphicx, soul, color}
\usepackage{algorithm, algorithmic}
\usepackage{hyperref}
\usepackage{cite}
\usepackage{dsfont}
\usepackage{enumitem}
\usepackage{bm}
\usepackage[normalem]{ulem}
\usepackage{subcaption}
\usepackage{multirow}
\usepackage{comment}

\captionsetup[subfigure]{font=normalsize,labelfont=normalsize}

\setlist{itemsep=4pt}

\renewcommand\vec[1]{\boldsymbol{#1}}
\newcommand\mat[1]{\boldsymbol{#1}}
\newcommand\trans{T}
\newcommand\norm[1]{\left\| #1 \right\|}

\DeclareMathOperator{\diag}{diag}
\DeclareMathOperator{\cone}{cone}
\DeclareMathOperator{\vol}{Vol}

\newtheorem{theorem}{Theorem}
\newtheorem{proposition}{Proposition}
\newtheorem{lemma}{Lemma}
\newtheorem{corollary}{Corollary}

\newtheorem{property}{Property}
\newtheorem{definition}{Definition}

\newif\ifhideproofs
% \hideproofstrue % uncomment to hide proofs

\ifhideproofs
    \usepackage{environ}
    \NewEnviron{hide}{}

\fi

% \newif\ifshowappendix
% %\showappendixtrue
% \showappendixfalse

% \ifshowappendix
%   \includecomment{wrap}
% \else
%   \excludecomment{wrap}
% \fi

\hyphenation{endpoint}

%----------------------------------------------------%https://www.overleaf.com/project/63d15c59dca6338ee324cc93
\begin{document}

\title{Identifiable Solutions to Foreground Signature Extraction from Hyperspectral Images in an Intimate Mixing Scenario}

\author{Jarrod Hollis, Raviv Raich, Jinsub Kim, Barak Fishbain, and Shai Kendler \thanks{This paper extends our preliminary work presented in \emph{Hollis J, Raich R, Kim J, Fishbain B, Kendler S, ``Foreground signature extraction for an intimate mixing model in hyperspectral image classification'', IEEE Intl.~Conf.~on Acoustics, Speech and Signal Processing (ICASSP) 2020, pp. 4732-4736.} \cite{hollis2020foreground} This work was partially supported by the Israel Ministry of Science and Technology Research Program, the Israel Ministry of Environmental Protection.}}

\markboth{}%
{Hollis \MakeLowercase{\textit{et al.}}: Identifiable Solutions to Foreground Signature Extraction for a Hyperspectral Intimate Mixing Model}

\maketitle

\begin{abstract}
The problem of foreground material signature extraction in an intimate (nonlinear) mixing setting is considered. It is possible for a foreground material signature to appear in combination with multiple background material signatures. We explore a framework for foreground material signature extraction based on a patch model that accounts for such background variation. We identify data conditions under which a foreground material signature can be extracted up to scaling and elementwise-inverse variations. We present algorithms based on volume minimization and endpoint member identification to recover foreground material signatures under these conditions. Numerical experiments on real and synthetic data illustrate the efficacy of the proposed algorithms. 
\end{abstract}

\begin{IEEEkeywords}
endmember extraction, hyperspectral imaging, identifiability, intimate mixing model, nonlinear unmixing
\end{IEEEkeywords}

\IEEEpeerreviewmaketitle

%----------------------------------------------------%
\section{Introduction}

\IEEEPARstart{T}{he} hyperspectral unmixing problem has applications in endmember signature extraction \cite{winter1999n, dobigeon2009joint} and classification \cite{ghamisi2017advanced, camps2005kernel} tasks. A pixel from a hyperspectral image may be viewed as a mixture of the spectral signatures from materials located within the spatial bounds of the pixel. In remote sensing, mixtures are commonly described as non-negative linear combinations of spectral signatures weighted by the proportions of the pixel covered by each corresponding material. This setting gives rise to the linear mixing model, for which several approaches to solving the unmixing problem have been proposed \cite{winter1999n, dobigeon2009joint, bioucas2009variable}. Other works have explored settings in which nonlinear effects, such as reflection and refraction, have a non-negligible effect on the spectral mixtures represented by each pixel. There are a wide range of potential nonlinear unmixing models \cite{heylen2014review, dobigeon2014nonlinear}, and dedicated algorithms for cases where the measured signature is noisy \cite{kendler2019detection,kendler2019non,kendler2022hyperspectral}.

\begin{figure}[t]
    \centering
    \begin{subfigure}{0.45\textwidth}
        \centering
        \includegraphics[width=0.75\linewidth]{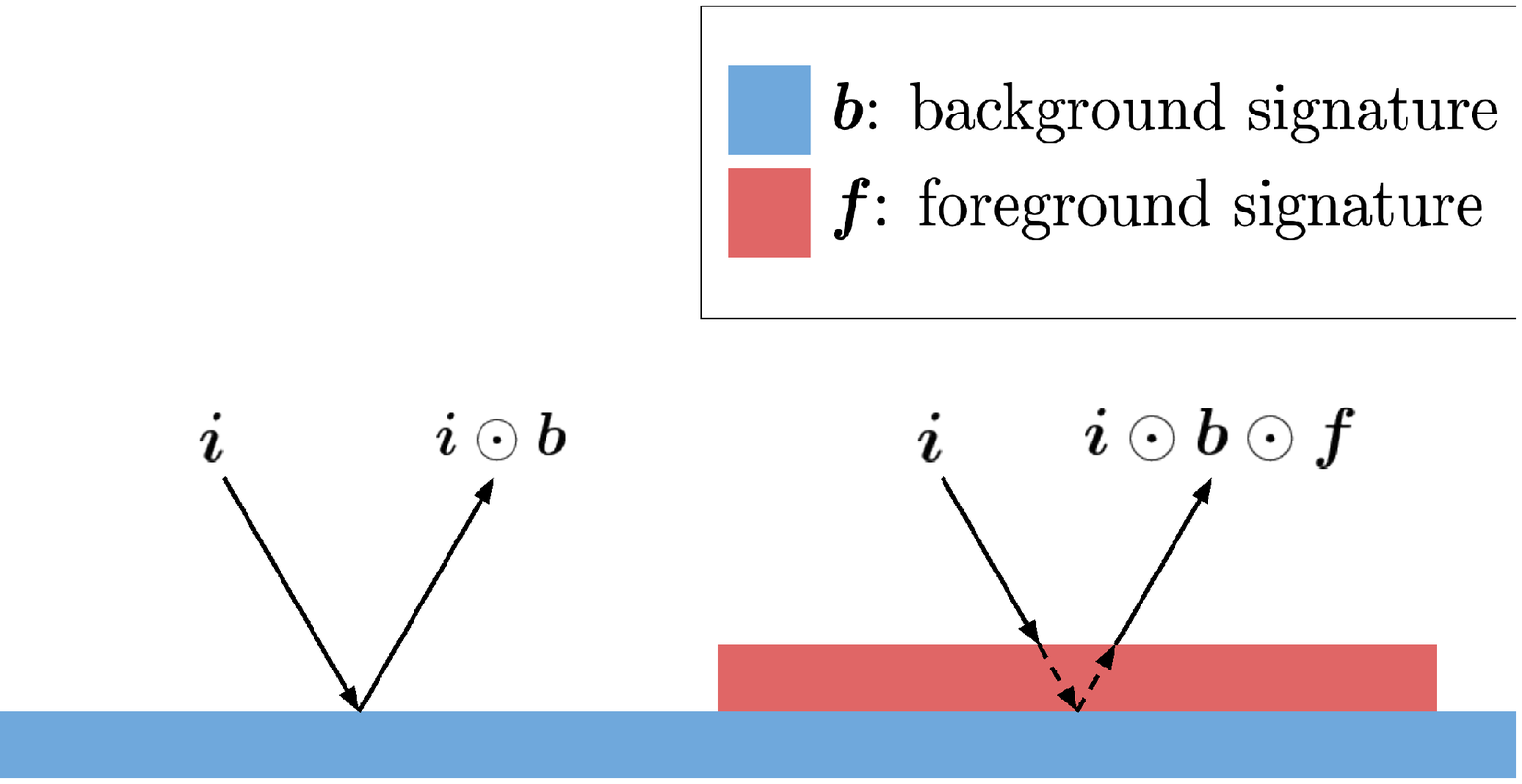}
        \caption{Effect of background only vs background and foreground.}
    \end{subfigure}
    \hfill
    \begin{subfigure}{0.45\textwidth}
        \centering
        \includegraphics[width=0.75\linewidth]{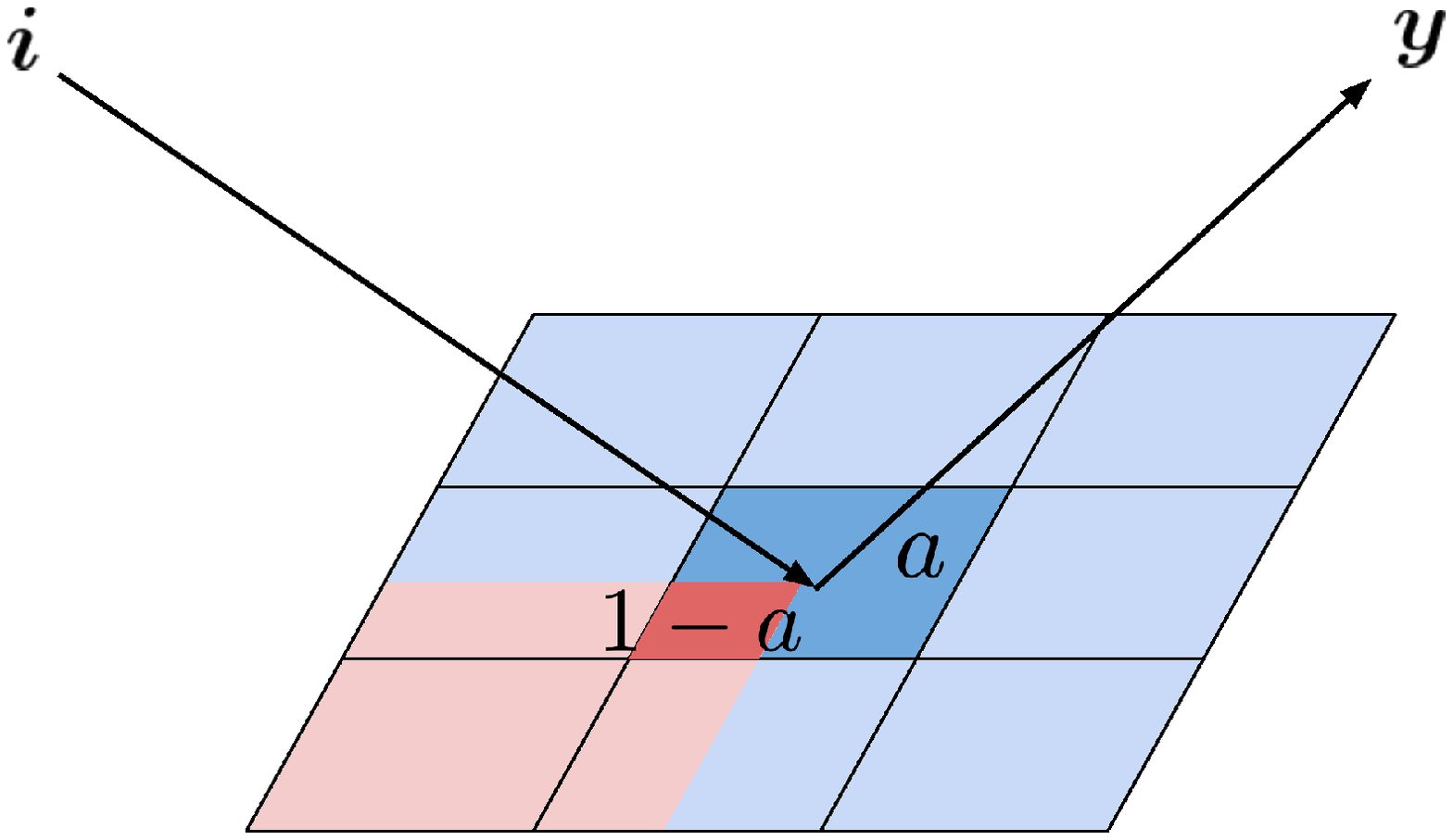}
        \caption{Example of fractional coverage by the foreground material.}
    \end{subfigure}
    
    \caption{Visualizations of the intimate mixing model described in \eqref{eq:kendler_model}. Indexing by $p$ is omitted.}
    
    \label{fig:intimate_mixing_examples}
\end{figure}

% introduce the intimate mixing model
In this paper, we consider the non-linear intimate mixing scenario of \cite{kendler2019detection}. In this setting, the measurement of each pixel is modeled as a nonnegative linear combination of the spectral signature from a background material and the product of the spectral signatures from a foreground material and the background material (see Fig.~\ref{fig:intimate_mixing_examples}). The model is expressed as
\begin{equation} \label{eq:kendler_model}
    \vec{y}_p = (\vec{i}_p \odot \vec{b}_p) \odot (a_p \vec{1} + (1-a_p) \vec{f}_p),
\end{equation}
where $\vec{f} \in \mathds{R}^M$ is the foreground material signature, $\vec{b}_p \in \mathds{R}^M$ is the background material signature, $\vec{i}_p \in \mathds{R}^M$ is the reference illumination, and $a_p$ is the foreground material coverage. The notation $\vec{u} \odot \vec{v}$  indicates the Hadamard (element-wise) product between same-sized vectors $\vec{u}$ and $\vec{v}$. As these quantities represent physical phenomena, the material and illumination signatures are strictly positive, and the coverage coefficient satisfies $a_p \in [0, 1]$. In \cite{kendler2019non}, this setting is used to model the detection of suspicious material in envelopes. Our task of interest is to extract the foreground material signature from a set of pixels following the intimate mixing model. This is an unsupervised problem; no a priori information of the material signatures or the material distributions within the set of pixels is assumed. To facilitate this task, we consider some modeling assumptions. We assume only one foreground material is present. Similar to \cite{kendler2019detection} and \cite{lu2013manifold}, we assume that the background material is nearly invariant within small neighborhoods of pixels $\mathcal{N}_k$ (henceforth referred to as \emph{patches}) for $k = 1, \ldots, K$ such that each patch contains exactly one background material. In \cite{kendler2019non}, reference illumination is assumed to be constant; we consider per-pixel scaling of illumination to account for attenuation from physical phenomena such as shadowing. Under these assumptions, the intimate mixing model reduces to our \emph{bag-of-patches} model, and can be expressed for the measurement of pixel $p$ as
\begin{equation} \label{eq:intimate_mixing_model}
    \vec{y}_p = K_p (\vec{i} \odot \vec{b}_k) \odot (a_p \vec{1} + (1 - a_p) \vec{f}), \enspace p \in \mathcal{N}_k,
\end{equation}
where $K_p$ is the non-negative scaling factor for illumination. In practice, patches may be obtained by sampling small regions of pixels such that the assumption of a nearly invariant background material signature holds. Even with these assumptions, the task of extracting the foreground material signature from data following the bag-of-patches model is challenging due to the non-linearity of the model and the redundancy associated with the model parameterization. To illustrate these challenges, we begin by reviewing similar challenges present in the well-known linear mixing model. Later, we will review the bilinear mixing model, which incorporates similar non-linear aspects to the intimate mixing model.

% introduce the linear mixing model and associated methods
\textbf{Linear mixing model}
Similar to our goal of extracting the foreground material signature in the intimate mixing model, the task of extracting material signatures has been one of the key challenges in the linear mixing model. The measurement model for this setting is
\begin{equation} \label{eq:linear_mixing_model}
    \vec{y}_p = \sum_{r=1}^R \alpha_{r,p} \vec{m}_r,
\end{equation}
where the measurement $\vec{y}_p \in \mathds{R}^M$ of pixel $p$ is expressed as a non-negative linear combination of the material signatures, or \emph{endmembers}, $\vec{m}_1, \vec{m}_2, \ldots, \vec{m}_R \in \mathds{R}^M$, weighted by their \emph{abundances} $\alpha_{1,p}, \alpha_{2,p}, \ldots, \alpha_{R,p}$. Proposed unmixing methods based on a geometric approach are of particular relevance to our work. Geometric approaches can be categorized into pure pixel based methods and volume minimization methods. Pure pixel based methods utilize the \emph{pure pixel} assumption, where the data to be unmixed is assumed to contain at least one pixel for each endmember containing only that endmember. Methods such as the pixel purity test \cite{boardman1993automating}, N-FINDR \cite{winter1999n}, VCA \cite{nascimento2005vertex}, and AVMAX \cite{chan2011simplex} reduce to finding such pure pixels. Volume minimization based methods find a minimum volume simplex that encloses all pixels in the data; the endmembers are the vertices of the obtained simplex. Methods such as MVES \cite{chan2009convex} and MinVolNMF \cite{leplat2019minimum} enforce the enclosure of pixels as a hard constraint, while other methods such as MVSA \cite{li2008minimum} and SISAL \cite{bioucas2009variable} attempt to account for noise by allowing negative abundance estimates with some penalty term. Geometric approaches based on volume minimization do not require the pure pixel assumption to be satisfied, but other data conditions may be necessary.

% introduce the bilinear mixing model and associated methods
\textbf{Bilinear mixing model}
Generalizations of the linear mixing model to address nonlinear mixtures of signatures are also considered, such as the bilinear mixing model. The measurement model is
\begin{equation} \label{eq:bilinear_mixing_model}
    \vec{y}_p = \sum_{r=1}^R \alpha_{r,p} \vec{m}_r + \sum_{i=1}^R \sum_{j=1}^R \beta_{i,j,p} \vec{m}_i \odot \vec{m}_j,
\end{equation}
where the measurement $\vec{y}_p \in \mathds{R}^M$ is expressed as a linear combination of the material signatures $\vec{m}_1, \vec{m}_2, \ldots, \vec{m}_R \in \mathds{R}^M$ weighted by the linear abundances $\alpha_{1,p}, \alpha_{2,p}, \ldots, \alpha_{R,p}$, and a linear combination of pairwise products of these material signatures weighted by the bilinear abundances $\beta_{1,1,p}, \beta_{1,2,p}, \ldots, \beta_{R,R,p}$. Several methods have been proposed for solving the bilinear unmixing problem in a \emph{supervised} setting. In \cite{nascimento2009nonlinear}, material signatures are obtained via an oracle (using either label information or expert identification). A material signature matrix is formed with columns containing both the previously identified endmembers and their bilinear combinations, and abundances are then estimated by solving a constrained linear least squares problem as in the linear mixing model (see \cite{heinz2001fully}). In \cite{yokoya2014nonlinear}, it is similarly assumed that endmembers are available directly. Estimates of the abundances for the linear and bilinear combinations are obtained with an alternating minimization of a fitting error objective, alternating between updates for abundances of each type of combination. The iterative update with respect to each parameter has the form of a semi-NMF problem and is solved using existing algorithms \cite{ding2010convex}. In \cite{halimi2011nonlinear}, material signatures are again assumed to be given or estimated via existing algorithms for the linear unmixing problem, such as the pixel purity test or N-FINDR. The abundance coefficients are then obtained via Bayesian estimation, where priors are derived from the constraints on the abundance coefficients and from assumed additive Gaussian random noise in the data. The aforementioned supervised approaches share a common limitation: they are only applicable when the foreground material signature is included in the set of known training signatures. To our knowledge, no works have explored the \emph{unsupervised} unmixing problem in the bilinear mixing model.

% discuss uniqueness of solutions to the linear mixing model
Identifiability analysis of the unmixing problem is a fundamental challenge. For the \emph{linear mixing model}, material signatures are said to be identifiable if they can be identified up to some trivial ambiguities (i.e., scaling and permutation) based on the data. There are some known sufficient data conditions that ensure the identifiability of the material signatures. One such condition is \emph{separability} \cite{donoho2004does,laurberg2008theorems}: each material signature must appear in isolation in at least one pixel (referred to as \emph{endmembers}). Such pixels are referred to as endpoints. For a single patch under our proposed model, this would be equivalent to the patch containing a pixel with no foreground material and a pixel entirely covered with foreground material. A more relaxed condition known as sufficient scattering has also been proposed \cite{huang2014non,fu2018identifiability}. \footnote{In the intimate mixing model, the sufficient scattering condition is equivalent to the separability condition.} It has been shown that under the aforementioned data conditions for the linear mixing model, algorithms such as MinVolNMF \cite{leplat2019minimum} can recover the true material signatures. To our knowledge, no works have developed equivalent identifiability conditions for the bilinear mixing model.

In this paper, we focus on \emph{unsupervised} extraction of the foreground signature in the intimate mixing model with identifiability guarantees. We note that the intimate mixing model \eqref{eq:intimate_mixing_model} may be viewed as a special case of the linear mixing model and the bilinear mixing model. However, the existing signature extraction approaches for the linear and bilinear mixing models are not readily applicable to our problem. For instance, the signature extraction methods for the linear mixing model \eqref{eq:linear_mixing_model} could be viable approaches if we are given a patch $\mathcal{N}_k$ that contains two endpoint pixels, i.e., one pixel data being a scalar multiple of $\vec{i} \odot \vec{b}_k$ and another pixel data being a scalar multiple of $\vec{i} \odot \vec{b}_k \odot \vec{f}$ In such a case, the separability condition for the linear mixing model is satisfied by this patch, and thus any signature extraction approach for the linear model can be applied to this patch to identify the two endmembers, $\vec{e}_1 = \vec{i} \odot \vec{b}_k$ and $\vec{e}_2 = \vec{i} \odot \vec{b}_k \odot \vec{f}$, which can be used to obtain $\vec{f}$. However, in practice there is no guarantee that a patch with two endpoint pixels would exist. Furthermore, even when such a patch exists, its identity is unknown and difficult to infer due to the variation of background material signatures across patches. The intimate mixing model \eqref{eq:intimate_mixing_model} can also be seen as a special instance of the bilinear mixing model \eqref{eq:bilinear_mixing_model} after proper reparameterization. However, it does not lead to a viable solution approach due to the lack of unsupervised signature extraction approaches for the bilinear model.  We also note that existing methods for unmixing in the bilinear mixing model do not have identifiability guarantees. Such guarantees are necessary for foreground signature extraction, where the true foreground material signature is potentially unknown and, therefore, not verifiable with an external oracle. The main contributions of this paper are:
\begin{enumerate}
    \item Introduction of a bag-of-patches model for the intimate mixing problem, and characterization of the solution space for the problem;
    \item Development of identifiability data conditions for solutions of the foreground material signature under the bag-of-patches model. We show that, under appropriate data conditions, solutions satisfying the minimum volume and/or endpoint fit properties will match the true foreground material signature up to variations of scaling and element-wise inversion. In contrast to existing identifiability data conditions requiring two endmembers in every patch, the proposed condition requires only two endmembers among \emph{all} patches; and
    \item Proposal of algorithms based on the proposed identifiability criteria to find solutions under this model with identifiability guarantees. One algorithm considers a volume minimization criterion, and the other algorithm is based on an endpoint identification approach.
\end{enumerate}

The remainder of this paper is organized as follows. The observation model and the associated foreground material signature extraction problem are described in Section~\ref{sec:problem_setup}. Conditions under which an identifiable solution for a foreground material signature can be obtained are developed in Section~\ref{sec:theory}. Section~\ref{sec:algorithms} presents our proposed algorithms based on the previously identified conditions. The performance of our proposed methods is evaluated with numerical experiments on both synthetic and real data. The process and results are given in Section~\ref{sec:experiments}, and conclusions and future works are stated in Section~\ref{sec:conclusion}. Proofs of the various theoretical results in the paper are given in the appendix.
 
\section{Problem Setup and Challenges}
\label{sec:problem_setup}

We introduce a summary of notations used in the paper. We then proceed with a formal description of our problem, including the intimate mixing model, our proposed bag-of-patches model, and the ambiguity of representation in this model.

\subsection{Notations}
In this paper, small letters (both roman and greek) are used to denote scalars (e.g., $a$, $\alpha$), boldface small letters (both roman and greek) are used to denote column vectors  (e.g., $\vec{v}$, $\vec{\theta}$), and boldface capital letters are used to denote matrices (e.g., $\vec{M}$). The Hadamard product and division between two vectors is denoted $\vec{u} \odot \vec{v}$ and $\vec{u} \oslash \vec{v}$, respectively, for same sized vectors $\vec{u}$ and $\vec{v}$. The element-wise matrix inequality is denoted $\mat{A} \geq c$ for arbitrarily sized matrix $\mat{A}$ and scalar $c$.

\subsection{Bag-of-Patches Model}
\label{ssec:bag_of_patches_model}

Recall the intimate mixing bag-of-patches model introduced in \eqref{eq:intimate_mixing_model}. This model has many sources of ambiguity: the scaling factors and coverage coefficients are not separable, as well as the reference illumination and each background material signature. The unique identification of these parameters is not relevant to the task of foreground material signature extraction, so we consider the following reparameterization: let $c_{1,p} := K_p a_p$, let $c_{2,p} := K_p (1 - a_p)$, and let $\vec{v}_k := \vec{i} \odot \vec{b}_k$. Note that the restrictions of $K_p$ and $a_p$ (i.e., $K_p \ge 0$ and $a_p \in [0,1]$) require that each $c_{1,p}$ and $c_{2,p}$ also be non-negative. Similarly, the strict positivity of $\vec{i}$ and $\vec{b}_k$ require that $\vec{v}_k$ be strictly positive. The reformulated bag-of-patches model can be expressed as
\begin{equation} \label{eq:bag_of_patches_model}
\begin{gathered}
    \mat{Y}^{(k)} = \diag(\vec{v}^{(k)}) \begin{bmatrix} \vec{f} & \vec{1} \end{bmatrix} \mat{C}^{(k)}, \\
    \vec{v}^{(1)}, \ldots, \vec{v}^{(K)}, \vec{f} > 0, \enspace \mat{C}^{(k)} \geq 0,
\end{gathered}
\end{equation}
where $N_k$ is the number of pixels in each patch, each patch $\mat{Y}^{(k)}$ is an $M \times N_k$ matrix with $\mat{Y}^{(k)}:= \begin{bmatrix} \vec{y}_1 & \cdots & \vec{y}_{N_k} \end{bmatrix}$, $\vec{f}$ and each $\vec{v}^{(k)}$ are $M \times 1$ vectors, and each $\mat{C}^{(k)}$ is a $2 \times N_k$ matrix with $[\mat{C}^{(k)}]_{ij} := c_{i,j}^{(k)}$. We further require that each patch $\mat{Y}^{(k)}$ is rank 2. \footnote{We note that with a rank 1 patch $\mat{Y}^{(k)}$, for any estimate of foreground material signature $\vec{\tilde{f}}$ there exists a pair of vector $\vec{\tilde{v}}^{(k)}$ and matrix $\mat{\tilde{C}}^{(k)}$ such that $\mat{Y}^{(k)} = \diag(\vec{\tilde{v}}^{(k)}) \begin{bmatrix} \vec{\tilde{f}} & \vec{1} \end{bmatrix} \mat{\tilde{C}}^{(k)}$, where $\vec{\tilde{v}}^{(k)} > 0$ and $\mat{\tilde{C}}^{(k)} \geq 0$. Thus, rank 1 patches do not add any additional constraints not introduced by rank 2 patches, and we may ignore the contribution of such patches to the solution.}

\subsection{Problem Formulation}
\label{ssec:problem_formulation}

Given a collection of hyperspectral data $\mat{Y}^{(1)}, \ldots, \mat{Y}^{(K)}$ following the bag-of-patches model in \eqref{eq:bag_of_patches_model} with unknown parameters $\vec{f}$, $\vec{v}^{(k)}$, and $\mat{C}^{(k)}$ for $k = 1, \ldots, K$, our goal is to obtain the foreground material signature $\vec{f}$. To this end, we regard this problem as an estimation problem wherein $\vec{f}$ is the desired parameter and  $\vec{v}^{(k)}$, and $\mat{C}^{(k)}$ for $k = 1, \ldots, K$ are the nuisance parameters.

\noindent {\bf A Key Challenge --- Ambiguity in the Bag-of-Patches Model:}
As previously identified, the factorization in the bag-of-patches model \eqref{eq:bag_of_patches_model} is not unique. Consider the application of a transformation matrix $\mat{T} = \begin{bmatrix} \alpha & \gamma \\ \beta & \delta \end{bmatrix}$ to each patch $\mat{Y}^{(k)}$ in the model:
\begin{equation} \label{eq:transformation_in_model}
\begin{gathered}
\begin{aligned}
    \mat{Y}^{(k)} &= \diag(\vec{v}^{(k)}) \begin{bmatrix} \vec{f} & \vec{1} \end{bmatrix} \mat{T} \mat{T}^{-1} \mat{C}^{(k)} \\
    &= \diag(\vec{\tilde{v}}^{(k)}) \begin{bmatrix} \vec{\tilde{f}} & \vec{1} \end{bmatrix} \mat{\tilde{C}}^{(k)},
\end{aligned} \\
    \vec{\tilde{v}}^{(k)} = (\gamma \vec{f} + \delta \vec{1}) \odot \vec{v}^{(k)}, \\
    \mat{\tilde{C}}^{(k)} = \mat{T}^{-1} \mat{C}^{(k)}, \enspace \vec{\tilde{f}} = (\alpha \vec{f} + \beta \vec{1}) \oslash (\gamma \vec{f} + \delta \vec{1}).
\end{gathered}
\end{equation}
The alternative factorization must still respect the properties of the model: the alternative signatures $\vec{\tilde{f}}$ and $\vec{\tilde{v}}^{(k)}$ must be strictly positive, and the coefficient matrices $\mat{\tilde{C}}^{(k)}$ must be non-negative. These constraints are dependent on the data in the set of patches, and the intersection of these constraints defines the space of admissible transformations in the model. It is clear that the bag-of-patches model may have multiple representations for a given set of hyperspectral data. This is typical of unmixing problems; as previously noted, in the linear mixing model estimates of the endmember and abundance matrices may be obtained up to variations of scaling and permutation \cite{jia2009constrained,fu2018identifiability}. Similarly, we will focus on determining identifiability conditions for the intimate mixing model, and defining the characteristic variations of the set of identifiable solutions under such conditions.

\section{Theory}
\label{sec:theory}

We begin this section with a formal definition for solutions to the foreground material signature extraction problem. Next, we identify the inherent ambiguity of solutions and characterize the space of feasible solutions. Finally, we derive identifiability conditions and associated criterion under which solutions for the foreground material signature will match the true foreground material signature up to the variations of scaling and element-wise inversion.

\subsection{Solution to the Bag-of-Patches Model}

Our goal is to estimate a foreground material signature that satisfies the bag-of-patches model \eqref{eq:bag_of_patches_model} for a given set of patches. We refer to such an estimate as a solution to the bag-of-patches model. For any solution, there must exist associated estimates of the background-illumination signature and coefficient matrix for each patch satisfying the constraints of the bag-of-patches model. We consider these as nuisance parameters, and define a solution only in terms of the estimate of the foreground material signature. The definition of a solution is stated in Definition~\ref{def:solution}.

\begin{definition}[Solution to Bag-of-Patches Model] \label{def:solution}
Let $\mat{Y}^{(1)}, \ldots, \mat{Y}^{(K)}$ be a set of patches that satisfies the bag-of-patches model \eqref{eq:bag_of_patches_model}. We say that a vector $\vec{\tilde{f}} \in \mathds{R}_{++}^M$ is a solution to the bag-of-patches model if there exists $\vec{\tilde{v}}^{(k)} \in \mathds{R}_{++}^M$ and $\mat{\tilde{C}}^{(k)} \in \mathds{R}_+^{2 \times N_k}$ for $k = 1, \ldots, K$ that satisfy
\begin{enumerate}[label={\bf(D\ref{def:solution}:\arabic*)}, ref={(D\ref{def:solution}:\arabic*)}, align=left]
    \item $\mat{Y}^{(k)} = \diag(\vec{\tilde{v}}^{(k)}) \begin{bmatrix} \vec{\tilde{f}} & \vec{1} \end{bmatrix} \mat{\tilde{C}}^{(k)}$; \label{enum:def1_cond1}
    \item $\vec{\tilde{f}} > 0$; \label{enum:def1_cond2}
    \item $\vec{\tilde{v}}^{(k)} > 0, \forall k$; and  \label{enum:def1_cond3}
    \item $\mat{\tilde{C}}^{(k)} \geq 0, \forall k$. \label{enum:def1_cond4}
\end{enumerate}
\end{definition}

\subsection{Solution Ambiguity}

The true foreground material signature $\vec{f}$ is a solution to the bag-of-patches model \eqref{eq:bag_of_patches_model} under Definition~\ref{def:solution}, but other solutions may exist. Section~\ref{ssec:problem_formulation} suggests that some alternative solutions may have the form $\vec{\tilde{f}} = (\alpha \vec{f} + \beta \vec{1}) \oslash (\gamma \vec{f} + \delta \vec{1})$. In fact, the entire space of alternative solutions can be characterized by this form. This result is stated in Property~\ref{prop:space_of_solutions}.

\begin{property} \label{prop:space_of_solutions}
Let $\vec{f} \in \mathds{R}_{++}^M$ be the true foreground material signature for a set of patches $\mat{Y}^{(1)}, \ldots, \mat{Y}^{(K)}$ satisfying the bag-of-patches model \eqref{eq:bag_of_patches_model}. Any solution $\vec{\tilde{f}} \in \mathds{R}_{++}^M$ to the bag-of-patches model satisfies
\begin{equation*}
\begin{aligned}
    \vec{\tilde{f}} &= (\alpha \vec{f} + \beta \vec{1}) \oslash (\gamma \vec{f} + \delta \vec{1}), \\
    \vec{\tilde{v}}^{(k)} &= {\epsilon_k}(\gamma \vec{f} + \delta \vec{1}) \odot \vec{v}^{(k)},~\textrm{for } k=1,2,\ldots,K~ \textrm{and} \\
    \mat{\tilde{C}}^{(k)} &= {\frac{1}{\epsilon_k}} \begin{bmatrix}
        \alpha & \gamma \\
        \beta & \delta
    \end{bmatrix}^{-1} \mat{C}^{(k)}, ~\textrm{for $k = 1,2,\ldots,K$,}
\end{aligned}
\end{equation*}
for some ${\alpha, \beta, \gamma, \delta \in \mathds{R}}$ such that ${\alpha \delta - \beta \gamma \neq 0}$, some ${\epsilon_k > 0}$, some ${\tilde{v}^{(k)} \in \mathds{R}_{++}^M}$, and some ${\tilde{C}^{(k)} \in \mathds{R}_+^{2 \times N_k}}$ for ${k = 1, 2, \ldots, K}$.
\end{property}

The proof of Property~\ref{prop:space_of_solutions} is given in Appendix~\ref{app:pf_space_of_solutions}. Note that for every choice of all strictly positive $\epsilon_k$ for $k = 1, 2, \ldots, K$, there is a corresponding parameterization with reversed sign for the parameters $\alpha$, $\beta$, $\gamma$, and $\delta$, and all strictly negative $\epsilon_k$ for $k = 1, 2, \ldots, K$ that yields identical estimates of the material signatures and coefficient matrices. Thus, we can safely restrict the parameterization of the solution to the case of all positive $\epsilon_k$.

Property~\ref{prop:space_of_solutions} shows that any solution to the bag-of-patches model may be parameterized by coefficients $\alpha,\beta,\gamma,\delta \in \mathds{R}$, $\epsilon_k > 0$ for $k = 1,2,\ldots,K$, and the true model parameters $\vec{f}$ and $\vec{\tilde{v}}^{(k)}, \mat{\tilde{C}}^{(k)}$ for $k = 1,2,\ldots,K$. The estimated illumination-background vectors $\vec{\tilde{v}}^{(k)}$ and coefficient matrices $\mat{\tilde{C}}^{(k)}$ are nuisance parameters, so we seek conditions that characterize a solution only by the previously listed coefficients and the true foreground material signature. Dependence on the true illumination-background vectors can be removed by substituting the definitions of the estimated parameters from Property~\ref{prop:space_of_solutions} in the conditions for a solution given in Definition~\ref{def:solution}. The resulting characterization of a solution is given in the following proposition:

\begin{proposition} \label{prop:feasible_space_solution}
Let $\mat{Y}^{(1)}, \ldots, \mat{Y}^{(K)}$ be a set of patches that satisfies the bag-of-patches model \eqref{eq:bag_of_patches_model} with a true foreground material signature $\vec{f} \in \mathds{R}_{++}^M$ and true coefficient matrices $\mat{C} \in \mathds{R}_+^{2 \times N_k}$ for $k = 1, \ldots, K$. A vector $\vec{\tilde{f}} \in \mathds{R}_{++}^M$ is a solution to the bag-of-patches model according to Definition~\ref{def:solution} if and only if there exist $\alpha, \beta, \gamma, \delta \in \mathds{R}$ such that
\begin{enumerate}[label={\bf(P\ref{prop:feasible_space_solution}:\arabic*)}, ref={(P\ref{prop:feasible_space_solution}:\arabic*)}, align=left]
    \item $\vec{\tilde{f}} = (\alpha \vec{f} + \beta \vec{1}) \oslash (\gamma \vec{f} + \delta \vec{1})$; \label{enum:prop1_cond1}
    \item $\alpha \vec{f} + \beta \vec{1} > 0$; \label{enum:prop1_cond2}
    \item $\gamma \vec{f} + \delta \vec{1} > 0$; \label{enum:prop1_cond3}
    \item $\begin{bmatrix}
        \alpha & \gamma \\
        \beta & \delta
    \end{bmatrix}^{-1} \mat{C}^{(k)} \geq 0, ~\textrm{for $k=1, 2, \ldots, K$}$; and \label{enum:prop1_cond4}
    \item $\alpha \delta - \beta \gamma \neq 0$. \label{enum:prop1_cond5}
\end{enumerate}
\end{proposition}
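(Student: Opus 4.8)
The plan is to build directly on Property~\ref{prop:space_of_solutions}, which already fixes the form of every solution, and to translate the four conditions of Definition~\ref{def:solution} into constraints on the scalar parameters $\alpha,\beta,\gamma,\delta$. The statement is an equivalence, so I would prove the two implications separately.

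($\Rightarrow$) Suppose $\vec{\tilde f}$ is a solution in the sense of Definition~\ref{def:solution}. Applying Property~\ref{prop:space_of_solutions} produces scalars $\alpha,\beta,\gamma,\delta$ with $\alpha\delta-\beta\gamma\neq 0$ (which is \ref{enum:prop1_cond5}), scalars $\epsilon_k>0$, a vector identity $\vec{\tilde f}=(\alpha\vec f+\beta\vec 1)\oslash(\gamma\vec f+\delta\vec 1)$ (which is \ref{enum:prop1_cond1}), and associated nuisance parameters $\vec{\tilde v}^{(k)}=\epsilon_k(\gamma\vec f+\delta\vec 1)\odot\vec v^{(k)}\in\mathds{R}_{++}^M$ and $\mat{\tilde C}^{(k)}=\epsilon_k^{-1}\mat T^{-1}\mat C^{(k)}\in\mathds{R}_+^{2\times N_k}$, where $\mat T:=\begin{bmatrix}\alpha&\gamma\\\beta&\delta\end{bmatrix}$. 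From $\vec{\tilde v}^{(k)}>0$ together with $\epsilon_k>0$ and $\vec v^{(k)}>0$ (the latter guaranteed by the bag-of-patches model) I would conclude $\gamma\vec f+\delta\vec 1>0$, establishing \ref{enum:prop1_cond3}; substituting this into the quotient identity for $\vec{\tilde f}$, positivity of $\vec{\tilde f}$ forces the numerator $\alpha\vec f+\beta\vec 1>0$, establishing \ref{enum:prop1_cond2}; and $\mat{\tilde C}^{(k)}\geq 0$ with $\epsilon_k>0$ gives $\mat T^{-1}\mat C^{(k)}\geq 0$, establishing \ref{enum:prop1_cond4}. The order matters here: \ref{enum:prop1_cond3} must be obtained before \ref{enum:prop1_cond2}, since positivity of an element-wise quotient only yields positivity of the numerator once the denominator is known to be positive.

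($\Leftarrow$) Conversely, given $\alpha,\beta,\gamma,\delta$ satisfying \ref{enum:prop1_cond1}--\ref{enum:prop1_cond5}, I would exhibit the nuisance parameters explicitly by taking $\epsilon_k=1$: set $\vec{\tilde v}^{(k)}:=(\gamma\vec f+\delta\vec 1)\odot\vec v^{(k)}$ and $\mat{\tilde C}^{(k)}:=\mat T^{-1}\mat C^{(k)}$, the latter well defined because $\mat T$ is invertible by \ref{enum:prop1_cond5}. Conditions \ref{enum:def1_cond2}, \ref{enum:def1_cond3}, \ref{enum:def1_cond4} then follow immediately from \ref{enum:prop1_cond2}, \ref{enum:prop1_cond3}, \ref{enum:prop1_cond4} together with the model's positivity of $\vec v^{(k)}$. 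The one remaining item, the factorization \ref{enum:def1_cond1}, is the single computational step: starting from the true factorization $\mat Y^{(k)}=\diag(\vec v^{(k)})\begin{bmatrix}\vec f&\vec 1\end{bmatrix}\mat C^{(k)}$ and inserting $\mat T\mat T^{-1}$, it reduces to the elementary Hadamard identity $\diag(\gamma\vec f+\delta\vec 1)\begin{bmatrix}\vec{\tilde f}&\vec 1\end{bmatrix}=\begin{bmatrix}\vec f&\vec 1\end{bmatrix}\mat T$, after which $\diag(\vec v^{(k)})\diag(\gamma\vec f+\delta\vec 1)=\diag(\vec{\tilde v}^{(k)})$ closes the chain.

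Since Property~\ref{prop:space_of_solutions} has already done the structural work, what is left is mostly bookkeeping. The only place I expect to need care is the algebra of Hadamard products and divisions --- specifically verifying the identity $\diag(\gamma\vec f+\delta\vec 1)\begin{bmatrix}\vec{\tilde f}&\vec 1\end{bmatrix}=\begin{bmatrix}\vec f&\vec 1\end{bmatrix}\mat T$ column by column, and keeping track of the fact that no entry of $\gamma\vec f+\delta\vec 1$ vanishes so that the passage between $\vec{\tilde f}$ and its defining quotient is legitimate --- so I would isolate that identity and display it explicitly before carrying out the substitution.
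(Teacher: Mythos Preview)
Your forward direction is essentially identical to the paper's proof: both invoke Property~\ref{prop:space_of_solutions} to obtain $\alpha,\beta,\gamma,\delta$ with $\alpha\delta-\beta\gamma\neq 0$, derive \ref{enum:prop1_cond3} from strict positivity of $\vec{\tilde v}^{(k)}$, then \ref{enum:prop1_cond2} from strict positivity of $\vec{\tilde f}$ together with \ref{enum:prop1_cond3}, and \ref{enum:prop1_cond4} from nonnegativity of $\mat{\tilde C}^{(k)}$ --- in the same order and for the same reasons.

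Your proposal is in fact more complete than the paper's: the paper states the proposition as an ``if and only if'' but its proof only establishes the forward implication. You supply the converse by explicitly constructing the nuisance parameters with $\epsilon_k=1$ and verifying the factorization via the Hadamard identity $\diag(\gamma\vec f+\delta\vec 1)\begin{bmatrix}\vec{\tilde f}&\vec 1\end{bmatrix}=\begin{bmatrix}\vec f&\vec 1\end{bmatrix}\mat T$. This is the natural construction (it is the calculation displayed in \eqref{eq:transformation_in_model}), and your plan to check it column by column is the right way to make it airtight.
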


\begin{proof}
Suppose $\mat{Y}^{(1)}, \ldots, \mat{Y}^{(K)}$ are a set of patches that satisfies the bag-of-patches model \eqref{eq:bag_of_patches_model} with a true foreground material signature $\vec{f} \in \mathds{R}_{++}^M$ and true coefficient matrices $\mat{C}^{(k)} \in \mathds{R}_+^{2 \times N_k}$ for $k = 1, 2, \ldots, K$. Let $\vec{\tilde{f}} \in \mathds{R}_{++}^M$ be a solution to the bag-of-patches model. Condition \ref{enum:prop1_cond1} holds directly from Property~\ref{prop:feasible_space_solution}. Also, from Property~\ref{prop:feasible_space_solution} we have that $\vec{\tilde{v}}^{(k)} = {\epsilon_k}(\gamma \vec{f} + \delta \vec{1}) \odot \vec{v}^{(k)}$ for $k=1,2,\ldots,K$, and $\epsilon > 0$. From Definition~\ref{def:solution}, each $\vec{\tilde{v}}^{(k)}$ is strictly positive. This holds if and only if $\gamma \vec{f} + \delta \vec{1}$ is strictly positive, yielding \ref{enum:prop1_cond3}. Similarly, from Property~\ref{prop:feasible_space_solution} and Definition~\ref{def:solution} we have that $\vec{\tilde{f}} = (\alpha \vec{f} + \beta \vec{1}) \oslash (\gamma \vec{f} + \delta \vec{1})$ and is strictly positive. Using the strict positivity of $\gamma \vec{f} + \delta \vec{1}$, it must hold that $\alpha \vec{f} + \beta \vec{1}$ is strictly positive, yielding \ref{enum:prop1_cond2}. Lastly, substituting the definition of $\mat{\tilde{C}}^{(k)}$ from Property~\ref{prop:space_of_solutions} in \ref{enum:def1_cond4} yields \ref{enum:prop1_cond4}, and \ref{enum:prop1_cond5} holds directly from Property~\ref{prop:feasible_space_solution}.
\end{proof}

Proposition~\ref{prop:feasible_space_solution} connects the notion of a solution to the bag-of-patches model \eqref{eq:bag_of_patches_model} with the space of feasible solutions described in Property~\ref{prop:feasible_space_solution}. Notably, Proposition~\ref{prop:feasible_space_solution} allows a solution to the bag-of-patches model to be described without reference to the true or estimated background-illumination signatures. It is further possible to replace the multiple per-patch constraints in \ref{enum:prop1_cond4} with a single constraint; consider the following lemma:

\begin{lemma}
\label{lem:alternate_coefficient_constraint}
Let $\mat{C}^{(k)} \in \mathds{R}_+^{2 \times N_k}$ for $k = 1, \ldots, K$ be non-negative matrices such that no column is equal to the zero vector and at least one matrix is rank two. Define $r_a$ and $r_b$ as
\begin{equation} \label{eq:r}
    r_a := \min_{j, k} \frac{c_{2,j}^{(k)}}{c_{1,j}^{(k)}} \quad \text{and} \quad r_b = \min_{j, k} \frac{c_{1,j}^{(k)}}{c_{2,j}^{(k)}}.
\end{equation}
For coefficients $\alpha, \beta, \gamma, \delta \in \mathds{R}$ such that $\alpha \delta - \beta \gamma \neq 0$, it holds that
\begin{equation*}
    \begin{bmatrix}
        \alpha & \gamma \\
        \beta & \delta
    \end{bmatrix}^{-1} \mat{C}^{(k)} \geq 0, \enspace \forall k \iff \begin{bmatrix}
        \alpha & \gamma \\
        \beta & \delta
    \end{bmatrix}^{-1} \begin{bmatrix}
        1 & r_b \\
        r_a & 1
    \end{bmatrix} \geq 0.
\end{equation*}
\end{lemma}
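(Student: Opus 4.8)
The plan is to reduce the family of per‑patch constraints to the single two‑column constraint by identifying $\begin{bmatrix} 1 & r_b \\ r_a & 1 \end{bmatrix}$ as a pair of generators for all columns of all $\mat{C}^{(k)}$. Write $\mat{T} := \begin{bmatrix} \alpha & \gamma \\ \beta & \delta \end{bmatrix}$, which is invertible by hypothesis, and note that for a nonnegative matrix $\mat{A}$ the condition $\mat{T}^{-1}\mat{A} \ge 0$ is equivalent to every column of $\mat{A}$ lying in the cone $\cone\{(\alpha,\beta)^{\trans}, (\gamma,\delta)^{\trans}\} = \{\vec{x} : \mat{T}^{-1}\vec{x} \ge 0\}$. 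Set $\vec{q}_1 := (1, r_a)^{\trans}$ and $\vec{q}_2 := (r_b, 1)^{\trans}$; these are, up to positive scaling, the columns of the $\mat{C}^{(k)}$ with smallest and largest ``slope'' $c_{2}/c_{1} \in [0,+\infty]$, hence the extreme rays of the cone that all those columns generate. The statement to prove is then exactly that the cone generated by all columns is contained in $\{\vec{x} : \mat{T}^{-1}\vec{x} \ge 0\}$ if and only if its two extreme rays $\vec{q}_1, \vec{q}_2$ are, which is a routine fact about finitely generated planar cones once the extreme rays are correctly identified.

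First I would establish the basic properties of $r_a, r_b$: since all columns are nonnegative and nonzero, all the ratios lie in $[0,+\infty]$, so $r_a, r_b \ge 0$; and since at least one $\mat{C}^{(k)}$ has rank two it contains two columns of distinct slopes, which guarantees that the minima defining $r_a$ and $r_b$ are finite and attained by actual columns, and also forces $r_a r_b < 1$, so that $\det\begin{bmatrix} 1 & r_b \\ r_a & 1 \end{bmatrix} = 1 - r_a r_b > 0$ and $\vec{q}_1, \vec{q}_2$ are linearly independent. For the ``$\Leftarrow$'' direction I would take an arbitrary column $\vec{c} = (c_1, c_2)^{\trans}$ of some $\mat{C}^{(k)}$ and solve $\vec{c} = \lambda \vec{q}_1 + \mu \vec{q}_2$, obtaining $\lambda = (c_1 - r_b c_2)/(1 - r_a r_b)$ and $\mu = (c_2 - r_a c_1)/(1 - r_a r_b)$; the definitions of $r_a$ and $r_b$, together with the nonzero‑column property (which handles the cases $c_1 = 0$ or $c_2 = 0$ by forcing $r_b = 0$ or $r_a = 0$ respectively), give $\lambda, \mu \ge 0$, whence $\mat{T}^{-1}\vec{c} = \lambda \mat{T}^{-1}\vec{q}_1 + \mu \mat{T}^{-1}\vec{q}_2 \ge 0$; since $\vec{c}$ and $k$ were arbitrary this yields $\mat{T}^{-1}\mat{C}^{(k)} \ge 0$ for all $k$. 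For ``$\Rightarrow$'' I would use that $r_a$ is attained: some column $\vec{c}^{\star}$ of some $\mat{C}^{(k)}$ has $c_1^{\star} > 0$ and $c_2^{\star}/c_1^{\star} = r_a$, so $\vec{q}_1 = (1/c_1^{\star})\vec{c}^{\star}$ and hence $\mat{T}^{-1}\vec{q}_1 \ge 0$; symmetrically $\mat{T}^{-1}\vec{q}_2 \ge 0$, and stacking the two gives $\mat{T}^{-1}\begin{bmatrix} 1 & r_b \\ r_a & 1 \end{bmatrix} \ge 0$.

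I expect the only real obstacle to be the bookkeeping around degenerate configurations rather than anything conceptually deep: verifying that the rank‑two hypothesis really does make the defining minima finite and attained and makes $1 - r_a r_b > 0$ (if every patch were rank one the conversion matrix could be singular and the reduction would break down), and checking that $\lambda, \mu$ stay nonnegative when a column lies on a coordinate axis. These are precisely the points where the hypotheses ``no column is zero'' and ``at least one matrix is rank two'' get used, so I would isolate and dispatch them explicitly before running the two implications.
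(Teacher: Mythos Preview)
Your proposal is correct and follows essentially the same approach as the paper: both arguments identify $(1,r_a)^{\trans}$ and $(r_b,1)^{\trans}$ as the extreme rays of the planar cone generated by all columns and reduce the constraint to those two columns. The only cosmetic difference is that the paper first normalises each column to have entries summing to one and then writes it as a \emph{convex} combination of the two extreme normalised columns, whereas you work directly with the conic combination and solve explicitly for $\lambda,\mu$; your route is slightly more direct but otherwise identical in content.
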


\noindent The proof of Lemma~\ref{lem:alternate_coefficient_constraint} is given in Appendix~\ref{app:pf_alternate_coefficient_constraint}. This lemma naturally leads to the following corollary:

\begin{corollary} \label{cor:patch_independent_solution}
Let $\mat{Y}^{(1)}, \ldots, \mat{Y}^{(K)}$ be a set of patches that satisfies the bag-of-patches model \eqref{eq:bag_of_patches_model} with a true foreground material signature $\vec{f} \in \mathds{R}_{++}^M$ and true coefficient matrices $\mat{C}^{(k)} \in \mathds{R}_+^{2 \times N_k}$ for $k = 1, 2, \ldots, K$. Define $r_a$ and $r_b$ as in \eqref{eq:r}. A vector $\vec{\tilde{f}} \in \mathds{R}_{++}^M$ is a solution to the bag-of-patches model if and only if there exist $\alpha, \beta, \gamma, \delta \in \mathds{R}$ such that
\begin{enumerate}[label={\bf(C\ref{cor:patch_independent_solution}:\arabic*)}, ref={(C\ref{cor:patch_independent_solution}:\arabic*)}, align=left]
     \item $\vec{\tilde{f}} = (\alpha \vec{f} + \beta \vec{1}) \oslash (\gamma \vec{f} + \delta \vec{1})$; \label{enum:prop2_cond1}
    \item $\alpha \vec{f} + \beta \vec{1} > 0$; \label{enum:prop2_cond2}
    \item $\gamma \vec{f} + \delta \vec{1} > 0$; and \label{enum:prop2_cond3}
    \item $\begin{bmatrix}
        \alpha & \gamma \\
        \beta & \delta
    \end{bmatrix}^{-1} \begin{bmatrix}
        1 & r_b \\
        r_a & 1
    \end{bmatrix} \geq 0$; and \label{enum:prop2_cond4}
    \item $\alpha \delta - \beta \gamma \neq 0$. \label{enum:prop2_cond5}
\end{enumerate}
\end{corollary}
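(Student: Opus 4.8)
The plan is to obtain Corollary~\ref{cor:patch_independent_solution} as an immediate consequence of Proposition~\ref{prop:feasible_space_solution} and Lemma~\ref{lem:alternate_coefficient_constraint}, with the only real work being to check that the true coefficient matrices of a bag-of-patches instance meet the hypotheses of the lemma. First I would invoke Proposition~\ref{prop:feasible_space_solution}: $\vec{\tilde{f}} \in \mathds{R}_{++}^M$ is a solution to the bag-of-patches model if and only if there exist $\alpha,\beta,\gamma,\delta \in \mathds{R}$ satisfying \ref{enum:prop1_cond1}--\ref{enum:prop1_cond5}. Next I would observe that \ref{enum:prop1_cond1}, \ref{enum:prop1_cond2}, \ref{enum:prop1_cond3}, and \ref{enum:prop1_cond5} are literally identical to \ref{enum:prop2_cond1}, \ref{enum:prop2_cond2}, \ref{enum:prop2_cond3}, and \ref{enum:prop2_cond5}. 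Hence it suffices to show that, whenever $\alpha\delta - \beta\gamma \neq 0$, condition \ref{enum:prop1_cond4} is equivalent to \ref{enum:prop2_cond4}; this is exactly what Lemma~\ref{lem:alternate_coefficient_constraint} asserts for the family $\{\mat{C}^{(k)}\}_{k=1}^K$ of true coefficient matrices, with $r_a, r_b$ as in \eqref{eq:r}.

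The one step that needs care is verifying the two hypotheses of Lemma~\ref{lem:alternate_coefficient_constraint} for the true $\mat{C}^{(k)}$: (i) no column of any $\mat{C}^{(k)}$ is the zero vector, and (ii) at least one $\mat{C}^{(k)}$ has rank two. For (i), I would note that the $j$-th column of $\mat{Y}^{(k)}$ equals $\diag(\vec{v}^{(k)})\big(c_{1,j}^{(k)} \vec{f} + c_{2,j}^{(k)} \vec{1}\big)$; since $\vec{v}^{(k)}, \vec{f}, \vec{1} > 0$ and $c_{1,j}^{(k)}, c_{2,j}^{(k)} \geq 0$, this vector vanishes precisely when $c_{1,j}^{(k)} = c_{2,j}^{(k)} = 0$. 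Thus a zero column of $\mat{C}^{(k)}$ corresponds to an identically zero pixel measurement (a degenerate pixel with $K_p = 0$), which we may assume absent without loss of generality — equivalently, such a column contributes $\mat{T}^{-1}\vec{0} = \vec{0} \geq 0$ vacuously in \ref{enum:prop1_cond4} and can simply be discarded before forming $r_a, r_b$. For (ii), I would use the standing requirement that every patch $\mat{Y}^{(k)}$ is rank two: since $\mat{Y}^{(k)} = \diag(\vec{v}^{(k)}) \begin{bmatrix} \vec{f} & \vec{1} \end{bmatrix} \mat{C}^{(k)}$ with $\diag(\vec{v}^{(k)})$ invertible, $\operatorname{rank}\mat{Y}^{(k)} \le \min\{\operatorname{rank}\begin{bmatrix}\vec{f} & \vec{1}\end{bmatrix}, \operatorname{rank}\mat{C}^{(k)}\}$, so rank-two-ness of $\mat{Y}^{(k)}$ forces both $\begin{bmatrix}\vec{f} & \vec{1}\end{bmatrix}$ and $\mat{C}^{(k)}$ to have rank two (in particular $\vec{f}$ is not a scalar multiple of $\vec{1}$). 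In particular at least one $\mat{C}^{(k)}$ is rank two, as required.

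With both hypotheses established, Lemma~\ref{lem:alternate_coefficient_constraint} gives, for any $\alpha,\beta,\gamma,\delta$ with $\alpha\delta - \beta\gamma \neq 0$, the equivalence \ref{enum:prop1_cond4} $\Longleftrightarrow$ \ref{enum:prop2_cond4}; substituting this into Proposition~\ref{prop:feasible_space_solution} yields the stated characterization in terms of \ref{enum:prop2_cond1}--\ref{enum:prop2_cond5}. I do not anticipate any genuine obstacle here: the argument is essentially a substitution, and the only place demanding attention is the hypothesis check just outlined, which is why I would state up front the (harmless) assumption that no pixel measurement is identically zero.
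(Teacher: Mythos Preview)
Your proposal is correct and takes essentially the same approach as the paper, which presents Corollary~\ref{cor:patch_independent_solution} as an immediate consequence of Proposition~\ref{prop:feasible_space_solution} combined with Lemma~\ref{lem:alternate_coefficient_constraint} without writing out a proof at all. Your explicit verification of the lemma's hypotheses (no zero columns, at least one rank-two $\mat{C}^{(k)}$) is more careful than the paper, which leaves these checks implicit.
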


% \begin{proof}
% Suppose $\mat{Y}^{(1)}, \ldots, \mat{Y}^{(K)}$ are a set of patches that satisfies the bag-of-patches model \eqref{eq:bag_of_patches_model} with a true foreground material signature $\vec{f} \in \mathds{R}_{++}^M$ and true coefficient matrices $\mat{C}^{(k)} \in \mathds{R}_+^{2 \times N_k}$ for $k = 1, 2, \ldots, K$. Let $\vec{\tilde{f}} \in \mathds{R}_{++}^M$ be a solution to the bag-of-patches model. By Proposition~\ref{prop:feasible_space_solution}, $\vec{\tilde{f}}$ is a solution if and only if there exist $\alpha, \beta, \gamma, \delta \in \mathds{R}$ such that \ref{enum:prop2_cond1}-\ref{enum:prop2_cond3}, \ref{enum:prop2_cond5}, and \ref{enum:prop1_cond4} are satisfied. By Lemma~\ref{lem:alternate_coefficient_constraint}, \ref{enum:prop1_cond4} holds if and only if \ref{enum:prop2_cond4} holds. Then $\vec{\tilde{f}}$ is a solution if and only if there exist $\alpha, \beta, \gamma, \delta \in \mathds{R}$ such that \ref{enum:prop2_cond1}-\ref{enum:prop2_cond5} hold.
% \end{proof}

We have formally defined the set of feasible solutions under the bag-of-patches model in Definition~\ref{def:solution}, and derived a simplified representation of this set in terms of only the foreground material signature in Corollary~\ref{cor:patch_independent_solution}. The set of feasible solutions includes many nonlinear variations of the true foreground material signature, which is difficult to consider for tasks such as foreground material identification or characterization. To address this challenge, we seek conditions and solution criteria under which a set of solutions with a simpler variations may be obtained.

\subsection{Restricting the Solution Space via Identifiability Criteria}

Corollary~\ref{cor:patch_independent_solution} suggests a space of feasible solutions to the bag-of-patches model \eqref{eq:bag_of_patches_model} with many nonlinear variations. We seek a smaller set of feasible solutions with simpler variations, as in identifiable solutions under the linear mixing model. To achieve this, we first explore restricting the solution space by requiring solutions to satisfy additional criteria. We will introduce two potential criteria: the minimum volume criterion, and the endpoint fit criterion. We will show that solutions satisfying either of these criteria are restricted in a manner that will lead to the desired set of solutions if the appropriate identifiability condition is also satisfied.

\vskip \baselineskip
\noindent{\bf Minimum-volume solution: }\quad
The first identifiability criterion we introduce is the minimum volume solution. An estimated foreground material signature $\vec{\tilde{f}}$ satisfies this criterion if it is a local minimum of some volume measure $\vol(\vec{\tilde{f}})$ and satisfies the standard solution constraints. We consider the normalized determinant as our volume measure:
\begin{equation}
    \vol(\vec{\tilde{f}}) = \frac{\det(\begin{bmatrix} \vec{\tilde{f}} & \vec{1} \end{bmatrix}^\trans \begin{bmatrix} \vec{\tilde{f}} & \vec{1} \end{bmatrix})}{\|\vec{1}\|_2^2 \|\vec{\tilde{f}}\|_2^2} = 1 - \left( \frac{(\vec{\tilde{f}}^\trans \vec{1})}{\|\vec{1}\|_2 \|\vec{\tilde{f}}\|_2} \right)^2. \label{eq:volume_measure}
\end{equation}
Using this volume measure, a minimum-volume solution is described in Definition~\ref{def:minimum_volume_solution}.

\begin{definition}[Minimum-Volume Solution] \label{def:minimum_volume_solution}
Let $\mat{Y}^{(1)}, \ldots, \mat{Y}^{(K)}$ be a set of patches that satisfies the bag-of-patches model \eqref{eq:bag_of_patches_model}. A vector $\vec{f^*} \in \mathds{R}_{++}^M$ is a minimum-volume solution to the bag-of-patches model if $\vec{f^*}$ is a local minimizer of \eqref{eq:volume_measure} subject to the constraints in Definition~\ref{def:solution}.
\end{definition}

Minimum-volume solutions to the bag-of-patches model \eqref{eq:bag_of_patches_model} (those that satisfy Definition~\ref{def:minimum_volume_solution}) are well-defined. Such solutions may be stated in terms of a unique element of the feasible space described in Property~\ref{prop:space_of_solutions}, denoted by $\vec{f_0}$ and defined as
\begin{equation} \label{eq:f_0}
    \vec{f_0} = (\vec{f} + r_a \vec{1}) \oslash (r_b \vec{f} + \vec{1}),
\end{equation}
where $r_a$ and $r_b$ are defined as in Lemma~\ref{lem:alternate_coefficient_constraint}. We note that $c \vec{f_0}$ and $c \vec{1} \oslash \vec{f_0}$ for any $c > 0$ are feasible solutions to the bag-of-patches model, as they satisfy the conditions of Corollary~\ref{cor:patch_independent_solution}. \footnote{This may be verified by setting $(\alpha, \beta, \gamma, \delta) = (c, cr_a, r_b, 1)$ for $c \vec{f_0}$ or $(\alpha, \beta, \gamma, \delta) = (cr_b, c, 1, r_a)$ for $c \vec{1} \oslash \vec{f_0}$.} Finally, the nature of minimum-volume solutions is characterized in Theorem~\ref{thm:solutions_to_minvol}.

\begin{theorem} \label{thm:solutions_to_minvol}
Let $\mat{Y}^{(1)}, \ldots, \mat{Y}^{(K)}$ be a set of patches that satisfies the bag-of-patches model \eqref{eq:bag_of_patches_model} with a true foreground material signature $\vec{f} \in \mathds{R}_{++}^M$ and true coefficient matrices $\mat{C}^{(k)} \in \mathds{R}_+^{2 \times N_k}$ for $k = 1, 2, \ldots, K$. Define $\vol(\vec{\tilde{f}})$ as in \eqref{eq:volume_measure}. Let $\vec{f_0}$ be defined as in \eqref{eq:f_0}. A solution $\vec{f^*} \in \mathds{R}_{++}^M$ to the bag-of-patches model is a minimum-volume solution if and only if
\begin{equation*}
    \vec{f^*} = c \vec{f_0} \quad \textrm{or} \quad \vec{f^*} = c \vec{1} \oslash \vec{f_0}, \quad c > 0.
\end{equation*}
\end{theorem}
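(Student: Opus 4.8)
The plan is to make the solution set completely explicit via Corollary~\ref{cor:patch_independent_solution}, reduce $\vol$ over this set to a function of two real parameters, and show that function is strictly monotone in each parameter; this pins the local minimizers down to the two distinguished directions. Concretely, for $\vec{u}\in\mathds{R}_{++}^M$ write $[\vec{u}]$ for its positive ray; since $\vol(c\vec{u})=\vol(\vec{u})$ for $c>0$, it suffices to classify local minimizers among rays. Starting from Corollary~\ref{cor:patch_independent_solution}, I would substitute $\mat{B}:=\begin{bmatrix}\alpha & \gamma \\ \beta & \delta\end{bmatrix}^{-1}\begin{bmatrix}1 & r_b \\ r_a & 1\end{bmatrix}$, so that \ref{enum:prop2_cond4} becomes $\mat{B}\ge 0$ and $\begin{bmatrix}\alpha & \gamma \\ \beta & \delta\end{bmatrix}=\begin{bmatrix}1 & r_b \\ r_a & 1\end{bmatrix}\mat{B}^{-1}$. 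Expanding $\alpha\vec{f}+\beta\vec{1}$ and $\gamma\vec{f}+\delta\vec{1}$ in the nonnegative entries $p,q,r,w$ of $\mat{B}$ and using the identity $\vec{f}+r_a\vec{1}=\vec{f_0}\odot(r_b\vec{f}+\vec{1})$ from \eqref{eq:f_0}, one obtains $\vec{\tilde{f}}=(w\vec{f_0}-r\vec{1})\oslash(p\vec{1}-q\vec{f_0})$, subject to strict positivity of numerator and denominator. Writing $m:=\min_i (f_0)_i$ and $M_0:=\max_i (f_0)_i$ — note $m<M_0$, since the rank-two assumption forces $\vec{f}\not\propto\vec{1}$ and a rank-two patch forces $r_a r_b<1$, hence $\vec{f_0}\not\propto\vec{1}$ — strict positivity together with \ref{enum:prop2_cond5} splits the ray set of solutions into $S_+:=\{[(\vec{f_0}-s\vec{1})\oslash(\vec{1}-t\vec{f_0})]: 0\le s<m,\ 0\le t<1/M_0\}$ (the case $pw-qr>0$, after rescaling) and $S_-:=\{[\vec{1}\oslash\vec{u}]:\vec{u}\in S_+\}$ (the case $pw-qr<0$). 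Thus the task is to classify the local minimizers of $\vol$ over $S_+\cup S_-$.

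The crux is a monotonicity estimate. For a fixed $\vec{g}\in\mathds{R}_{++}^M$ with $\vec{g}\not\propto\vec{1}$, let $\vec{h}_{\vec{g}}(s,t):=(\vec{g}-s\vec{1})\oslash(\vec{1}-t\vec{g})$ on $\{0\le s<\min_i g_i,\ 0\le t<1/\max_i g_i\}$, and put $\phi:=(\vec{1}^\trans\vec{h}_{\vec{g}})^2/\|\vec{h}_{\vec{g}}\|^2$, so that $\vol(\vec{h}_{\vec{g}})=1-\phi/M$. I would compute the partial derivatives; each takes the form $\frac{2(\vec{1}^\trans\vec{h}_{\vec{g}})}{\|\vec{h}_{\vec{g}}\|^4}\big[(\vec{1}^\trans\vec{\xi})\|\vec{h}_{\vec{g}}\|^2-(\vec{1}^\trans\vec{h}_{\vec{g}})(\vec{h}_{\vec{g}}^\trans\vec{\xi})\big]$, with $\vec{\xi}$ the corresponding partial of $\vec{h}_{\vec{g}}$. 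Symmetrizing the bracket over the index pair turns it into $-\tfrac12\sum_{i,j}(a_i-a_j)(h_i-h_j)\,b_i b_j$ for positive weights $b_i$ and auxiliary quantities $a_i$ (for $\partial_s$ one gets $a_i=g_i-s$, $b_i=1/(1-tg_i)$; for $\partial_t$ one gets $a_i=g_i/(1-tg_i)$, $b_i=(h_{\vec g})_i$). The key observation is that on the feasible domain one has $st<1$, so each of $(h_{\vec{g}})_i$ and $a_i$ is a strictly increasing function of $g_i$; hence $(a_i-a_j)$ and $(h_i-h_j)$ always share a sign, the bracket is $\le 0$, and it is $<0$ because $\vec{g}$ has two distinct entries. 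Therefore $\phi$ is strictly decreasing in $s$ and in $t$, i.e., $\vol(\vec{h}_{\vec{g}}(s,t))$ is strictly increasing in each of $s$ and $t$.

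To conclude, apply the estimate with $\vec{g}=\vec{f_0}$: $\vol$ is strictly increasing in $s$ and $t$ along $S_+$, so $[\vec{f_0}]$ (the point $s=t=0$) is the unique local minimizer of $\vol$ restricted to $S_+$ — from any other parameter value one may decrease $s$ or $t$ and reach, by continuity, feasible points arbitrarily close with strictly smaller volume, so no other local minimizer exists on $S_+$. For $S_-$, the identity $\vec{1}\oslash\vec{h}_{\vec{f_0}}(s,t)=\vec{h}_{\vec{p}}(t,s)$ with $\vec{p}:=\vec{1}\oslash\vec{f_0}$ (and $\vec{p}\not\propto\vec{1}$) lets me apply the estimate with $\vec{g}=\vec{p}$, giving that $\vol$ is strictly increasing along $S_-$ in both parameters, with unique local minimizer $[\vec{p}]=[\vec{1}\oslash\vec{f_0}]$. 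Since $\vec{u}\mapsto\vec{1}\oslash\vec{u}$ is a homeomorphism of $\mathds{R}_{++}^M$, any local minimizer of $\vol$ over $S_+\cup S_-$ lying in $S_+$ (resp.\ $S_-$) is a local minimizer of $\vol$ restricted to $S_+$ (resp.\ $S_-$), so the local minimizers are contained in $\{[\vec{f_0}],[\vec{1}\oslash\vec{f_0}]\}$; conversely each of these is a genuine local minimizer over $S_+\cup S_-$ because $[\vec{f_0}]\notin\overline{S_-}$ and $[\vec{1}\oslash\vec{f_0}]\notin\overline{S_+}$ — a short continuity argument, since a sequence in $S_-$ whose parameters leave $[0,m)\times[0,1/M_0)$ has a limit direction with a vanishing component and hence cannot approach the strictly positive $[\vec{f_0}]$ (and symmetrically). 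Combining with scale invariance of $\vol$ yields exactly $\vec{f^*}=c\vec{f_0}$ or $\vec{f^*}=c\vec{1}\oslash\vec{f_0}$, $c>0$.

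I expect the main obstacle to be the monotonicity estimate: getting the partial derivatives into the symmetrized form and verifying the common-sign property, which hinges on the domain inequality $st<1$. A secondary nuisance will be the case bookkeeping in the parameterization step (the sign of $pw-qr$, and the boundary cases $w=0$ or $p=0$) and the two closure facts that keep $S_+$ and $S_-$ from interacting near the distinguished directions.
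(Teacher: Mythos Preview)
Your argument is correct and follows the same high-level architecture as the paper --- reduce the feasible set to a two-parameter family (plus scale), split into two sign cases, and show strict monotonicity of $\vol$ in each parameter so that the only local minimizers sit at the corner --- but the execution is genuinely different in two places. First, the paper parameterizes by $(c,t,u)$ with $\vec{s}=c(t\vec{f}+(1-t)\vec{1})\oslash(u\vec{f}+(1-u)\vec{1})$, proves this is a bijective differentiable chart, and only at the very end translates the corner values back into $\vec{f_0}$; your substitution $\mat{B}=\begin{bmatrix}\alpha&\gamma\\\beta&\delta\end{bmatrix}^{-1}\begin{bmatrix}1&r_b\\r_a&1\end{bmatrix}$ parameterizes directly in terms of $\vec{f_0}$, so the corner $(s,t)=(0,0)$ is visibly $[\vec{f_0}]$. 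Second, for monotonicity the paper computes $dJ/d\theta$ and invokes a separate Cauchy--Schwarz lemma (their Lemma~\ref{lem:support_vol}) to control the signs of $K_1(\vec{s})$ and $K_2(\vec{s})$; your symmetrization $-\tfrac12\sum_{i,j}(a_i-a_j)(h_i-h_j)b_ib_j$ is a Chebyshev-type rearrangement argument that needs only the increasing dependence of $h_i$ and $a_i$ on $g_i$ (guaranteed by $st<1$). Your route is a little more direct and avoids building a global chart; the paper's route makes the separation of the two cases transparent because $\Theta_1$ and $\Theta_2$ live in a single parameter space and are separated by the non-feasible diagonal $t=u$.

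One small gap to tighten: your closure claim $[\vec{f_0}]\notin\overline{S_-}$ treats only the case where the $(s,t)$ parameters escape to the boundary of $[0,m)\times[0,1/M_0)$. You also need to rule out $[\vec{f_0}]\in S_-$ itself, i.e., $c\vec{f_0}=\vec{1}\oslash\vec{h}_{\vec{f_0}}(s^*,t^*)$ for some interior $(s^*,t^*)$; expanding gives that every entry of $\vec{f_0}$ satisfies the same quadratic $x^2+(ct^*-s^*)x-c=0$, so $\vec{f_0}$ would have at most two distinct values. This is excluded by the paper's standing hypothesis that $\vec{1},\vec{f},\vec{f}\odot\vec{f}$ are linearly independent (which transfers to $\vec{f_0}$ via the M\"obius map $x\mapsto(x+r_a)/(r_bx+1)$). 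State that assumption explicitly and your converse direction goes through.
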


The proof of Theorem~\ref{thm:solutions_to_minvol} is given in Appendix~\ref{app:pf_solutions_to_minvol}. Note that the variations present in minimum-volume solutions are scaling and element-wise inversion; this is similar to the variations for identifiable solutions to the NMF problem. In particular, element-wise inversion may be viewed as the result of permuting the position of endmember estimates $\vec{\tilde{v}} \odot \vec{\tilde{f}}$ and $\vec{\tilde{v}}$ when computing the ratio to obtain $\vec{\tilde{f}}$. Figure~\ref{fig:point_plot} illustrates an example of a minimum-volume solution. For the example setting in the figure, it is not possible to further reduce the volume between $\vec{v}$ and $\vec{v} \odot \vec{f}$ without violating the non-negativity constraint for patch coefficients. This figure suggests that a minimum-volume solution may result in columns of patches that lie on the vectors $\vec{\tilde{v}}^{(k_1)}$ and $\vec{\tilde{v}}^{(k_2)} \odot \vec{f^*}$ for some patches $\mat{Y}^{(k_1)}$ and $\mat{Y}^{(k_2)}$. We will later see that this is always the case.

\begin{figure}[tbp]
    \centering
    \includegraphics[width=0.9\textwidth]{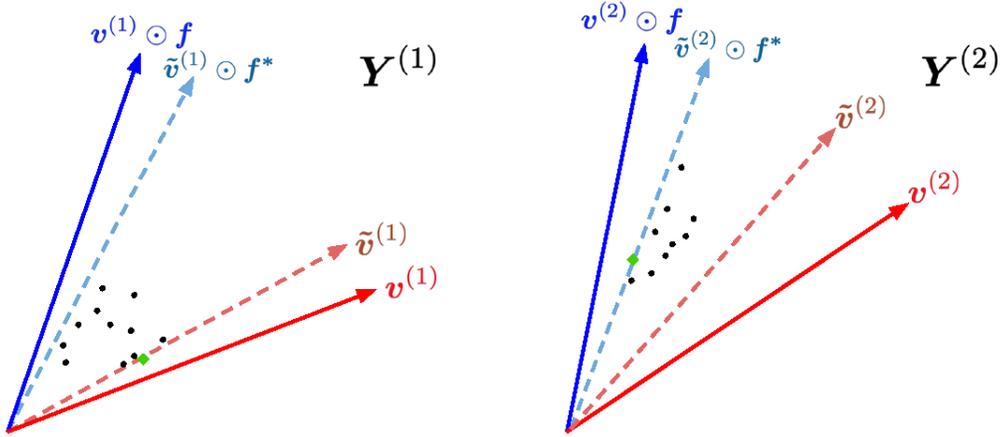}
    \caption{Visualization of a minimum volume (and/or endpoint fit) solution for an example set of two patches $\mat{Y}^{(1)}$ and $\mat{Y}^{(2)}$. The solid blue and red lines correspond to the true $\vec{v}^{(k)} \odot \vec{f}$ and $\vec{v}^{(k)}$ vectors that span each patch. The lighter dashed blue and red lines correspond to alternative $\vec{\tilde{v}}^{(k)} \odot \vec{f^*}$ and $\vec{\tilde{v}}^{(k)}$ vectors. The black dots and green diamonds represent the columns of each patch as they lie in the span of the vector pairs. The green diamonds indicate endpoints.}
    \label{fig:point_plot}
\end{figure}

\vskip \baselineskip
\noindent{\bf End-point solution:}\quad
The second identifiability criterion we introduce is the endpoint fit solution. For a given factorization under the bag-of-patches model, we observe that the columns of a coefficient matrix $\mat{C}^{(k)}$ suggest a notion of coordinates for a given pixel in terms of $\vec{v}^{(k)} \odot \vec{f}$ and $\vec{v}^{(k)}$. The space of valid coordinates is constrained by the non-negativity requirement for coefficients. Consider a coefficient column with one of the following forms: $\begin{bmatrix} x & 0 \end{bmatrix}^\trans$, or $\begin{bmatrix} 0 & y \end{bmatrix}^\trans$, where $x$ and $y$ are positive constants. Any pixel with such a form of coordinates lies at the edge of the coordinate space. We refer to such a pixel as an \emph{endpoint}. A solution $\vec{\tilde{f}}$ satisfies the endpoint fit criterion if, for the corresponding factorization under the bag-of-patches model, there exists at least one of each form of endpoint. The definition follows:

\begin{definition}[Endpoint Fit Solution] \label{def:endpoint_fit_solution}
Let $\mat{Y}^{(1)}, \ldots, \mat{Y}^{(K)}$ be a set of patches that satisfies the bag-of-patches model \eqref{eq:bag_of_patches_model} with a true foreground material signature $\vec{f} \in \mathds{R}_{++}^M$ and true coefficient matrices $\mat{C}^{(k)} \in \mathds{R}_+^{2 \times N_k}$ for $k = 1, 2, \ldots, K$. A solution $\vec{f^*}$ to the bag-of-patches model is an endpoint fit solution if there exist estimated coefficient matrices $\mat{\tilde{C}}^{(k)}$ for $k = 1, 2, \ldots, K$ such that $\mat{\tilde{C}}^{(k)}$ contains a column $\begin{bmatrix} x & 0 \end{bmatrix}^\trans$ for some $k = k_1$, and contains a column $\begin{bmatrix} 0 & y \end{bmatrix}^\trans$ for some $k = k_2$, where $x, y > 0$. \footnote{The indices $k_1$ and $k_2$ need not be distinct. In such a case, both kinds of endpoint occur in the same estimated coefficient matrix.}
\end{definition}

Similar to minimum-volume solutions, endpoint fit solutions to the bag-of-patches model \eqref{eq:bag_of_patches_model} may be stated in terms of $\vec{f_0}$. The nature of endpoint fit solutions is characterized in the following theorem.

\begin{theorem} \label{thm:solutions_to_endfit}
Let $\mat{Y}^{(1)}, \ldots, \mat{Y}^{(K)}$ be a set of patches that satisfies the bag-of-patches model \eqref{eq:bag_of_patches_model} with a true foreground material signature $\vec{f} \in \mathds{R}_{++}^M$ and true coefficient matrices $\mat{C} \in \mathds{R}_+^{2 \times N_k}$ for $k = 1, \ldots, K$. Let $\vec{f_0}$ be defined as in \eqref{eq:f_0}. A solution $\vec{f^*}$ to the bag-of-patches model is an endpoint fit solution if and only if
\begin{equation*}
    \vec{f^*} = c \vec{f_0} \quad \textrm{or} \quad  \vec{f^*} = c \vec{1} \oslash \vec{f_0}, \quad c > 0.
\end{equation*}
\end{theorem}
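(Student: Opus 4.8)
The plan is to prove both directions of the equivalence, leaning on the parameterization of the solution set in Property~\ref{prop:space_of_solutions} and the reduced coefficient constraint of Lemma~\ref{lem:alternate_coefficient_constraint}; the argument runs closely parallel to that of Theorem~\ref{thm:solutions_to_minvol}. Write $\mat{T}=\begin{bmatrix}\alpha&\gamma\\\beta&\delta\end{bmatrix}$ and set $R:=\cone\{(1,r_a)^\trans,(r_b,1)^\trans\}$, which lies in the nonnegative orthant of $\mathds{R}^2$. Before either direction I would record the consequences of the rank-$2$ assumption: $\vec{f}$ and $\vec{1}$ are linearly independent; each $\mat{C}^{(k)}$ has rank $2$ and no zero column; every column of every $\mat{C}^{(k)}$ lies in $R$, and the two extreme rays of $R$ are attained by data columns (those achieving the minima defining $r_a$ and $r_b$), so that $\cone$ of all the data columns equals $R$; and $r_ar_b<1$, so $R$ is genuinely two-dimensional and $\begin{bmatrix}1&r_b\\r_a&1\end{bmatrix}$ is invertible.

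For the ``if'' direction I would verify that $c\vec{f_0}$ is an endpoint fit solution using the witness $(\alpha,\beta,\gamma,\delta)=(c,cr_a,r_b,1)$ from the footnote after \eqref{eq:f_0}: a short computation gives $\det\mat{T}=c(1-r_ar_b)\neq0$, checks \ref{enum:prop2_cond1}--\ref{enum:prop2_cond5} of Corollary~\ref{cor:patch_independent_solution}, and shows $\mat{T}^{-1}(1,r_a)^\trans=(1/c,0)^\trans$ and $\mat{T}^{-1}(r_b,1)^\trans=(0,1)^\trans$. Hence, for the induced factorization $\mat{\tilde{C}}^{(k)}=\mat{T}^{-1}\mat{C}^{(k)}$, the data column attaining the $r_a$-minimum maps to a column $\begin{bmatrix}x&0\end{bmatrix}^\trans$ with $x>0$ and the one attaining the $r_b$-minimum maps to $\begin{bmatrix}0&y\end{bmatrix}^\trans$ with $y>0$, which is exactly the endpoint fit condition. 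The case $\vec{f^*}=c\vec{1}\oslash\vec{f_0}$ follows by redoing the computation with the columns of $\mat{T}$ interchanged, or simply by noting that passing from $\vec{f_0}$ to $\vec{1}\oslash\vec{f_0}$ swaps the two rows of every $\mat{\tilde{C}}^{(k)}$ and so exchanges the two endpoint types.

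For the ``only if'' direction, let $\vec{f^*}$ be an endpoint fit solution and fix a factorization $\{\mat{\tilde{C}}^{(k)},\vec{\tilde{v}}^{(k)}\}$ exhibiting the endpoints. By Property~\ref{prop:space_of_solutions} this factorization has the form $\vec{f^*}=(\alpha\vec{f}+\beta\vec{1})\oslash(\gamma\vec{f}+\delta\vec{1})$, $\vec{\tilde{v}}^{(k)}=\epsilon_k(\gamma\vec{f}+\delta\vec{1})\odot\vec{v}^{(k)}$, $\mat{\tilde{C}}^{(k)}=\tfrac{1}{\epsilon_k}\mat{T}^{-1}\mat{C}^{(k)}$ with $\alpha\delta-\beta\gamma\neq0$ and $\epsilon_k>0$, and it satisfies the feasibility conditions \ref{enum:prop1_cond2}--\ref{enum:prop1_cond4}. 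A column $\begin{bmatrix}x&0\end{bmatrix}^\trans$ ($x>0$) of $\mat{\tilde{C}}^{(k_1)}$ forces $\mat{T}^{-1}\vec{c}=\epsilon_{k_1}(x,0)^\trans$ for the corresponding column $\vec{c}$ of $\mat{C}^{(k_1)}$, i.e.\ $(\alpha,\beta)^\trans=\tfrac{1}{\epsilon_{k_1}x}\vec{c}$; since $\epsilon_{k_1},x>0$ and $\vec{c}$ is a nonzero nonnegative data column, $(\alpha,\beta)^\trans$ is a positive multiple of a data column, hence $(\alpha,\beta)^\trans\in R$. The same reasoning applied to a $\begin{bmatrix}0&y\end{bmatrix}^\trans$ column of $\mat{\tilde{C}}^{(k_2)}$ gives $(\gamma,\delta)^\trans\in R$.

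It remains to pin these two rays onto the extreme rays of $R$. By \ref{enum:prop1_cond4} (equivalently, Lemma~\ref{lem:alternate_coefficient_constraint}), $\mat{T}^{-1}$ maps every data column, hence all of $R$, into the nonnegative orthant; since $\mat{T}$ is invertible and carries the nonnegative orthant $\cone\{(1,0)^\trans,(0,1)^\trans\}$ onto $\cone\{(\alpha,\beta)^\trans,(\gamma,\delta)^\trans\}$, this is equivalent to $R\subseteq\cone\{(\alpha,\beta)^\trans,(\gamma,\delta)^\trans\}$. But $(\alpha,\beta)^\trans$ and $(\gamma,\delta)^\trans$ both lie in $R$ while their conic hull contains $R$; an elementary argument for pointed two-dimensional cones then forces each of them onto an extreme ray of $R$, and $\alpha\delta-\beta\gamma\neq0$ forces distinct ones, so $\{(\alpha,\beta)^\trans,(\gamma,\delta)^\trans\}$ equals $\{(1,r_a)^\trans,(r_b,1)^\trans\}$ up to positive scaling. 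Substituting either of the two assignments into \ref{enum:prop1_cond1} yields $\vec{f^*}=c\vec{f_0}$ or $\vec{f^*}=c\vec{1}\oslash\vec{f_0}$ for some $c>0$. I expect this cone-geometry step --- converting the global nonnegativity constraint into the containment $R\subseteq\cone\{(\alpha,\beta)^\trans,(\gamma,\delta)^\trans\}$ via invertibility of $\mat{T}$, then combining it with $(\alpha,\beta)^\trans,(\gamma,\delta)^\trans\in R$ to trap both rays on $\partial R$ --- to be the main obstacle; everything else is bookkeeping with the explicit $2\times2$ inverse and the Möbius-type form of a solution.
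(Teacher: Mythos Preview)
Your proof is correct and, for the backward direction, essentially identical to the paper's: both exhibit the witness $(\alpha,\beta,\gamma,\delta)=(c,cr_a,r_b,1)$ and observe that the columns achieving the minima in the definitions of $r_a,r_b$ map under $\mat{T}^{-1}$ to the two canonical directions.

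For the forward direction, however, your route is genuinely different from the paper's. The paper argues coordinatewise: it writes out the entries of $\mat{T}^{-1}\mat{C}^{(k)}$ explicitly, splits into the two cases $\alpha\delta-\beta\gamma\gtrless0$, partitions the columns according to whether $c_{1i}^{(k)}$ (resp.\ $c_{2i}^{(k)}$) vanishes, and then argues that since $\beta/\alpha$ must be simultaneously a lower bound for $\{c_{2i}^{(k)}/c_{1i}^{(k)}\}$ and equal to one of its elements, it equals the minimum $r_a$; similarly $\gamma/\delta=r_b$. Positivity of $\alpha,\delta$ is established separately by appealing to \ref{enum:prop1_cond2}--\ref{enum:prop1_cond3}. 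Your argument replaces all of this with two clean cone-geometric observations: the endpoint columns force $(\alpha,\beta)^\trans$ and $(\gamma,\delta)^\trans$ to be \emph{positive} multiples of data columns (hence in $R$), while Lemma~\ref{lem:alternate_coefficient_constraint} forces $R\subseteq\cone\{(\alpha,\beta)^\trans,(\gamma,\delta)^\trans\}$; combining the two traps both columns of $\mat{T}$ on the extreme rays of $R$. This bypasses the sign casework entirely and yields positivity of the scalars for free (since $\epsilon_{k}>0$ and $x,y>0$). The only cost is the appeal to the elementary fact that in a two-dimensional pointed cone, two vectors that both lie in the cone and whose conic hull contains the cone must be its extreme rays; this is straightforward but should be stated or proved explicitly. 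One small remark: your parenthetical that the argument ``runs closely parallel to that of Theorem~\ref{thm:solutions_to_minvol}'' is misleading, since the paper's proof of Theorem~\ref{thm:solutions_to_minvol} goes via a $(c,t,u)$ reparameterization and a gradient computation, which you do not use at all.
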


The proof of Theorem~\ref{thm:solutions_to_endfit} is given in Appendix~\ref{app:pf_solutions_to_endfit}. Figure~\ref{fig:point_plot} can also be used as an example of an endpoint fit solution. This is an endpoint fit because an endpoint in patch $\mat{Y}^{(1)}$ lies on the vector $\vec{\tilde{v}}^{(1)} \odot \vec{f^*}$, and an endpoint in patch $\mat{Y}^{(2)}$ lies on the vector $\vec{\tilde{v}}^{(2)}$. A corollary of Theorem~\ref{thm:solutions_to_minvol} and Theorem~\ref{thm:solutions_to_endfit} is that minimum-volume solutions and endpoint fit solutions are equivalent:

\begin{corollary} \label{cor:equivalence_of_solutions}
Let $\mat{Y}^{(1)}, \ldots, \mat{Y}^{(K)}$ be a set of patches that satisfies the bag-of-patches model \eqref{eq:bag_of_patches_model}. A solution $\vec{f^*} \in \mathds{R}_{++}^M$ to the bag-of-patches model is a minimum-volume solution if and only if it is an endpoint fit solution.
\end{corollary}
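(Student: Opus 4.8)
The plan is to chain together the two preceding theorems, since each gives an \emph{exact} characterization of its respective solution class in terms of the single reference vector $\vec{f_0}$ from \eqref{eq:f_0}. The key point is that Theorem~\ref{thm:solutions_to_minvol} and Theorem~\ref{thm:solutions_to_endfit} both conclude that the solutions in question are precisely those of the form $c\vec{f_0}$ or $c\vec{1}\oslash\vec{f_0}$ with $c>0$, so the two classes coincide by transitivity.

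First I would note that the hypotheses of the corollary --- a set of patches $\mat{Y}^{(1)},\ldots,\mat{Y}^{(K)}$ satisfying the bag-of-patches model \eqref{eq:bag_of_patches_model} --- are exactly what is needed to invoke both earlier theorems (with $\vec{f}$ the true foreground signature, $\mat{C}^{(k)}$ the true coefficient matrices, and $\vec{f_0}$ defined via \eqref{eq:f_0}). Next, I would apply Theorem~\ref{thm:solutions_to_minvol} to get that a solution $\vec{f^*}\in\mathds{R}_{++}^M$ is a minimum-volume solution if and only if $\vec{f^*}=c\vec{f_0}$ or $\vec{f^*}=c\vec{1}\oslash\vec{f_0}$ for some $c>0$, and then apply Theorem~\ref{thm:solutions_to_endfit} to get that $\vec{f^*}$ is an endpoint fit solution if and only if $\vec{f^*}=c\vec{f_0}$ or $\vec{f^*}=c\vec{1}\oslash\vec{f_0}$ for some $c>0$. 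Since both properties are equivalent to membership in the same set $\{\,c\vec{f_0}: c>0\,\}\cup\{\,c\vec{1}\oslash\vec{f_0}: c>0\,\}$, they are equivalent to each other, which is the claim.

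There is essentially no obstacle here: the corollary is an immediate consequence of the two theorems, and the only thing to be careful about is confirming that the two characterizations are stated under matching hypotheses (they are) so that no side conditions are lost when composing them. All of the substantive work --- showing that the feasibility constraints of Corollary~\ref{cor:patch_independent_solution} together with the minimum-volume criterion (resp.\ the endpoint fit criterion) force the transformation parameters $(\alpha,\beta,\gamma,\delta)$ into the two explicit families corresponding to $c\vec{f_0}$ and $c\vec{1}\oslash\vec{f_0}$ --- is carried out in the proofs of Theorem~\ref{thm:solutions_to_minvol} and Theorem~\ref{thm:solutions_to_endfit} in the appendix, and nothing further is required for the corollary.
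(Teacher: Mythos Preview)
Your proposal is correct and matches the paper's approach: the paper presents Corollary~\ref{cor:equivalence_of_solutions} as an immediate consequence of Theorems~\ref{thm:solutions_to_minvol} and~\ref{thm:solutions_to_endfit} without a separate proof, and your argument---chaining the two identical characterizations via the common set $\{c\vec{f_0}:c>0\}\cup\{c\vec{1}\oslash\vec{f_0}:c>0\}$---is exactly the intended reasoning.
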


\noindent According to Corollary~\ref{cor:equivalence_of_solutions}, every minimum-volume solution is an endpoint fit solution, and vice versa. We will use this result later in developing an algorithm to identify an endpoint fit solution.

\subsection{Complete Solutions with Identifiability Conditions}
\label{ssec:complete_solutions}

The form of minimum-volume solutions and endpoint fit solutions, as described in Theorem~\ref{thm:solutions_to_minvol} and Theorem~\ref{thm:solutions_to_endfit}, may yield the form of solution we have previously considered. Specifically, if $r_a = r_b = 0$ then $\vec{f_0} = \vec{f}$. Hence, any minimum volume or endpoint fit solution $\vec{f^*}$ would be positively proportional to either the true foreground material signature $\vec{f}$ or its element-wise inverse $\vec{1} \oslash \vec{f}$. This data-dependent condition is stated in Definition~\ref{def:fully_tight}.

\begin{definition}[Full-tightness] \label{def:fully_tight}
Let $\mat{Y}^{(1)}, \ldots, \mat{Y}^{(K)}$ be a set of patches that satisfies the bag-of-patches model \eqref{eq:bag_of_patches_model} with a true foreground material signature $\vec{f} \in \mathds{R}_{++}^M$ and true coefficient matrices $\mat{C} \in \mathds{R}_+^{2 \times N_k}$ for $k = 1, \ldots, K$. Suppose there exists a patch $\mat{Y}^{(k_1)}$ containing a scaled version of $\vec{v}^{(k_1)} \odot \vec{f}$ and a patch $\mat{Y}^{(k_2)}$ containing a scaled version of $\vec{v}^{(k_2)}$. Note that we do not require $k_1$ and $k_2$ to be distinct. Equivalently, there exists a column $\begin{bmatrix} \alpha & 0 \end{bmatrix}^\trans$ in $\mat{C^{(k_1)}}$ and a column $\begin{bmatrix} 0 & \beta \end{bmatrix}^\trans$ in $\mat{C^{(k_2)}}$ with $\alpha, \beta > 0$ (this implies $r_a = r_b = 0$). We say that such a set of patches is \emph{fully tight with respect to $\vec{f}$}.
\end{definition}

% \begin{figure}[tbp]
%     \centering
%     \includegraphics[width=\textwidth]{}
%     \caption{\hl{JH: fix red b-tilde(1)} Illustration of the full-tightness condition for the intimate mixing model in contrast with the separability condition for the linear mixing model. Two distinct background materials are mixed with a single foreground material. The true endmembers are indicated by black circles. Pixels from the first background are indicated by "+" or "x", and pixels from the second background are indicated by "$\square$" or "$\diamond$". The shaded regions indicate the minimum volume simplex that encloses the associated set of pixels. For the points and shaded region in blue, the separability condition is satisfied. For the points and shaded region in red, only the full-tightness condition is satisfied. When separability is satisfied, ratios of the identified endmembers will yield an identifiable estimate of the true foreground material signature. If full-tightness is satisfied, simple ratios of endmembers may not yield the correct foreground material signature.}
%     \label{fig:full-tightness}
% \end{figure}

The full-tightness condition in Definition~\ref{def:fully_tight}, coupled with the minimum volume solution criterion in Definition~\ref{def:minimum_volume_solution} and/or the endpoint fit solution criterion in Definition~\ref{def:endpoint_fit_solution}, ensure that solutions match the true foreground material signature up to variations of scaling and element-wise inversion. We state this result in Theorem~\ref{thm:complete_solution}.

\begin{theorem} \label{thm:complete_solution}
Let $\mat{Y}^{(1)}, \ldots, \mat{Y}^{(K)}$ be a set of patches that satisfies the bag-of-patches model \eqref{eq:bag_of_patches_model} with a true foreground material signature $\vec{f} \in \mathds{R}_{++}^M$ and true coefficient matrices $\mat{C} \in \mathds{R}_+^{2 \times N_k}$ for $k = 1, \ldots, K$. If $\vec{f^*}$ is either a minimum-volume solution or an endpoint fit solution to the bag-of-patches model, and the set of patches is fully tight with respect to $\vec{f}$, then
\begin{equation*}
    \vec{\tilde{f}} = c \vec{f} \enspace\text{or}\enspace \vec{\tilde{f}} = c \vec{1} \oslash \vec{f}, \quad c > 0.
\end{equation*}
\end{theorem}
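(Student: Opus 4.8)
The plan is to combine the characterization of minimum-volume / endpoint-fit solutions from Theorem~\ref{thm:solutions_to_minvol} and Theorem~\ref{thm:solutions_to_endfit} with the special structure imposed by the full-tightness condition. By either theorem, any such solution $\vec{f^*}$ must satisfy $\vec{f^*} = c \vec{f_0}$ or $\vec{f^*} = c \vec{1} \oslash \vec{f_0}$ for some $c > 0$, where $\vec{f_0} = (\vec{f} + r_a \vec{1}) \oslash (r_b \vec{f} + \vec{1})$ and $r_a, r_b$ are as in Lemma~\ref{lem:alternate_coefficient_constraint}. So the entire argument reduces to showing that full-tightness forces $\vec{f_0} = \vec{f}$, after which the conclusion follows immediately by substitution.

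\textbf{Key steps.} First I would invoke Definition~\ref{def:fully_tight}: the existence of a column $\begin{bmatrix} \alpha & 0 \end{bmatrix}^\trans$ with $\alpha > 0$ in some $\mat{C}^{(k_1)}$ means that in the minimization defining $r_b = \min_{j,k} c_{1,j}^{(k)} / c_{2,j}^{(k)}$, there is an index achieving ratio $0$ (numerator can be nonzero while the relevant entry structure gives $r_b = 0$) — more carefully, the column $\begin{bmatrix} 0 & \beta \end{bmatrix}^\trans$ with $\beta > 0$ gives $c_{1,j}^{(k_2)} = 0$, $c_{2,j}^{(k_2)} = \beta > 0$, so $c_{1,j}^{(k_2)}/c_{2,j}^{(k_2)} = 0$, hence $r_b = 0$ (noting all ratios are nonnegative so the minimum cannot be negative). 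Symmetrically, the column $\begin{bmatrix} \alpha & 0 \end{bmatrix}^\trans$ gives $c_{2,j}^{(k_1)}/c_{1,j}^{(k_1)} = 0$, so $r_a = 0$. Substituting $r_a = r_b = 0$ into \eqref{eq:f_0} gives $\vec{f_0} = \vec{f} \oslash \vec{1} = \vec{f}$. Then $\vec{f^*} = c\vec{f_0} = c\vec{f}$ or $\vec{f^*} = c \vec{1} \oslash \vec{f_0} = c \vec{1} \oslash \vec{f}$, which is exactly the claimed conclusion.

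\textbf{Main obstacle.} The only subtlety — and the one point where I would be careful — is the translation in Definition~\ref{def:fully_tight} between ``the patch contains a scaled version of $\vec{v}^{(k)} \odot \vec{f}$ (resp.\ $\vec{v}^{(k)}$)'' and ``$\mat{C}^{(k)}$ contains a column of the stated one-hot form,'' and then the deduction that these one-hot columns force $r_a = r_b = 0$ rather than merely $r_a, r_b \le 0$. This requires recalling that each $\mat{C}^{(k)} \ge 0$ with no zero column (which is the standing hypothesis in Lemma~\ref{lem:alternate_coefficient_constraint} and is consistent with the bag-of-patches model since a zero column would mean a zero pixel), so every ratio $c_{2,j}^{(k)}/c_{1,j}^{(k)}$ and $c_{1,j}^{(k)}/c_{2,j}^{(k)}$ appearing in the minima is well-defined in $[0, +\infty]$ and nonnegative; hence attaining the value $0$ at a one-hot column makes the minimum exactly $0$. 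Once this bookkeeping is in place, the rest is a one-line substitution, so I do not anticipate any real difficulty beyond stating it cleanly. Note also that the hypothesis already records the parenthetical ``(this implies $r_a = r_b = 0$),'' so the proof may essentially cite that observation and conclude.
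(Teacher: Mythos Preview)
Your proposal is correct and follows essentially the same approach as the paper: invoke Theorem~\ref{thm:solutions_to_minvol} or Theorem~\ref{thm:solutions_to_endfit} to get $\vec{f^*} = c\vec{f_0}$ or $\vec{f^*} = c\vec{1}\oslash\vec{f_0}$, then use full-tightness to conclude $r_a = r_b = 0$ and hence $\vec{f_0} = \vec{f}$. Your additional care in justifying why the one-hot columns force $r_a = r_b = 0$ (rather than merely citing the parenthetical in Definition~\ref{def:fully_tight}) is more detailed than the paper's own proof, which simply asserts this fact and substitutes.
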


\begin{proof}
Let $\mat{Y}^{(1)}, \ldots, \mat{Y}^{(K)}$ be a set of patches that satisfies the bag-of-patches model \eqref{eq:bag_of_patches_model} with a true foreground material signature $\vec{f} \in \mathds{R}_{++}^M$ and true coefficient matrices $\mat{C} \in \mathds{R}_+^{2 \times N_k}$ for $k = 1, \ldots, K$. Suppose this set of patches is fully tight with respect to $\vec{f}$, and let $\vec{f_0}$ be defined as in \eqref{eq:f_0}. Then for $r_a$ and $r_b$ as defined in \eqref{eq:r}, it holds that $r_a = r_b = 0$ and therefore $\vec{f_0} = \vec{f}$. If $\vec{f^*}$ is a minimum volume solution or an endpoint fit solution, then by Theorem~\ref{thm:solutions_to_minvol} or Theorem~\ref{thm:solutions_to_endfit} it must hold that $\vec{f^*}$ satisfies
\begin{equation}
    \vec{f^*} = c \vec{f_0} = c \vec{f} \quad \text{or} \quad  \vec{f^*} = c \vec{1} \oslash \vec{f_0} = c \vec{1} \oslash \vec{f}.
\end{equation}
\end{proof}

\noindent In conclusion, any minimum-volume solution or endpoint fit solution will belong to the identifiable set of solutions if the set of patches satisfies the full-tightness condition.

\section{Algorithms}
\label{sec:algorithms}

In the previous section, we proposed two criterion under which identifiable solutions to the foreground material signature extraction problem may be found: a minimum volume solution, and an endpoint fit solution. In this section, we propose two algorithms to solve the foreground material signature extraction problem based on these criterion. First, we consider a projected block coordinate descent algorithm with volume regularization to find solutions satisfying the minimum volume criterion. Then, we adapt the projected block coordinate descent algorithm \emph{without} regularization to instead find solutions satisfying the endpoint fit criterion.

\subsection{Finding Minimum Volume Solutions}

The first approach we consider is to find an estimated foreground material signature that satisfies the minimum volume criterion. According to Definition~\ref{def:minimum_volume_solution}, such a solution must be a local minimum of the volume measure $\vol(\vec{\tilde{f}})$ in \eqref{eq:volume_measure} subject to the constraints \ref{enum:def1_cond1}-\ref{enum:def1_cond4}. The constraints in \ref{enum:def1_cond2} and \ref{enum:def1_cond3} are strict inequalities, which are difficult to consider for optimization. We consider a relaxation of these constraints to be non-strict. It can be shown that any parameters $\vec{\tilde{f}}$ and $\vec{\tilde{v}}^{(k)}$ for $k = 1, 2, \ldots, K$ satisfying the minimum volume criterion under non-strict inequality constraints must be strictly positive. Thus, minimum volume solutions under relaxed constraints are also minimum volume solutions under strict constraints (see Appendix~\ref{app:pf_solutions_to_minvol}).

The exact fitting constraint in \ref{enum:def1_cond1} is also challenging from an optimization perspective. Instead, we take the approach of previous works \cite{zhuang2019regularization,fu2016robust,miao2007endmember} and reformulate the objective as
\begin{equation}
\begin{aligned}
    & g(\vec{\tilde{f}}, \vec{\tilde{v}}^{(1)}, \ldots, \vec{\tilde{v}}^{(K)}, \mat{\tilde{C}}^{(1)}, \ldots, \mat{\tilde{C}}^{(K)}) \\
    &\quad = \sum_{k = 1}^K \|\mat{Y}^{(k)} - \diag(\vec{\tilde{v}}^{(k)}) \begin{bmatrix} \vec{\tilde{f}} & \vec{1} \end{bmatrix} \mat{\tilde{C}}^{(k)}\|_F^2 + \lambda \vol(\vec{\tilde{f}}),
\end{aligned}
\end{equation}
where $\lambda$ is a positive regularization weight that determines the significance of the volume measure in the optimization problem. To obtain a unique solution, we consider an equality constraint for the norms of $\vec{\tilde{f}}$ and $\vec{\tilde{v}}^{(k)}$ for $k = 1, 2, \ldots, K$. Using regularization and the considered constraints, an optimization problem corresponding to a minimum-volume solution is
\begin{equation} \label{eq:vol_min_optimization}
\begin{aligned}
    \text{min} \quad & g(\vec{\tilde{f}}, \vec{\tilde{v}}^{(1)}, \ldots, \vec{\tilde{v}}^{(K)}, \mat{\tilde{C}}^{(1)}, \ldots, \mat{\tilde{C}}^{(K)}) \\
    \text{s.t.} \quad & \vec{\tilde{f}} \geq 0, \norm{\vec{\tilde{f}}} = 1, \\
    & \vec{\tilde{v}}^{(k)} \geq 0, \norm{\vec{\tilde{v}}^{(k)}} = 1, \enspace \text{for $k = 1, 2, \ldots, K$}, \\
    & \mat{\tilde{C}}^{(k)} \geq 0, \enspace \text{for $k = 1, 2, \ldots, K$}.
\end{aligned}
\end{equation}
The minimization problem is separable with respect to each estimated background-illumination signature $\vec{\tilde{v}}^{(k)}$ and estimated coefficient matrix $\mat{\tilde{C}}^{(k)}$ for $k = 1, 2, \ldots, K$. Additionally, the minimization problem for each of these terms depends only on the fitting error of the corresponding patch $\mat{Y}^{(k)}$. Note that the problem is not separable with respect to $\vec{\tilde{f}}$. This suggests an approach based on projected block coordinate descent. An algorithm for this approach is listed in Algorithm~\ref{alg:regularized_fit}. For details on the projection onto the intersection of the non-negative orthant and the surface of the unit sphere, refer to Appendix~\ref{app:projection}. Let $N$ be the number of pixels distributed among all patches; the per-iteration complexity of this algorithm is $O(MN)$.

\begin{algorithm}[t]
\caption{$\mathtt{MinVolFit}$: find a solution to \eqref{eq:vol_min_optimization}.}
\begin{algorithmic} \label{alg:regularized_fit}
    \REQUIRE $\mathcal{Y} = \{\mat{Y}^{(1)}, \ldots, \mat{Y}^{(K)}\}, \lambda, N_\text{iters}$
    \FOR{k = 1, 2, \ldots, K}
        \STATE Randomly initialize $\vec{v}_0^{(k)} \in \mathds{R}_{++}^{M}$ and $\mat{C}_0^{(k)} \in \mathds{R}_{+}^{M \times N_k}$
    \ENDFOR
    \STATE Randomly initialize $\vec{f}_0 \in \mathds{R}_{++}^{M}$
    \STATE Define $f_k(\cdot) = \norm{\mat{Y}^{(k)} - \diag(\vec{v}^{(k)}) \begin{bmatrix} \vec{f} & \vec{1} \end{bmatrix} \mat{C}^{(k)}}_F^2$
    \STATE Define $g(\cdot) = \sum_{k = 1}^K f_k(\cdot) + \lambda \vol(\vec{f})$
    \FOR{$j = 1, \ldots, N_\text{iters}$}
        \STATE In each step, select $\eta$ appropriately via backtracking
        \FOR{k = 1, 2, \ldots, K}
            \STATE $\mat{C}_j^{(k)} \gets P_+ \left( \mat{C}_{j-1}^{(k)} - \eta \left( \left. \nabla_{\mat{C}^{(k)}} f_k(\cdot) \right|_{\vec{v}_{j - 1}^{(k)}, \mat{C}_{j - 1}^{(k)}} \right) \right)$
            \STATE \texttt{\% $P_+$ is the projection onto the non-negative orthant}
            \STATE $\vec{v}_j^{(k)} \gets P_{U+} \left( \vec{v}_{j-1}^{(k)} - \eta \left( \left. \nabla_{\vec{v}^{(k)}} f_k(\cdot) \right|_{\vec{v}_{j - 1}^{(k)}, \mat{C}_{j}^{(k)}} \right) \right)$
            \STATE \texttt{\% $P_{U+}$ is the projection onto the intersection of the non-negative orthant and the surface of the unit sphere}
        \ENDFOR
        \STATE $\vec{f}_j \gets P_{U+} \left( \vec{f}_{j-1} - \eta \left( \left. \nabla_{\vec{f}} g(\cdot) \right|_{\vec{f}_{j - 1}, \vec{v}_{j}^{(k)}, \mat{C}_{j}^{(k)}, \forall k} \right) \right)$
    \ENDFOR
    \RETURN $\vec{f}_{N_\text{iters}}, \vec{v}_{N_\text{iters}}^{(1)}, \ldots, \vec{v}_{N_\text{iters}}^{(K)}, \mat{C}_{N_\text{iters}}^{(1)}, \ldots, \mat{C}_{N_\text{iters}}^{(K)}$
\end{algorithmic}
\end{algorithm}

\subsection{Finding Endpoint Fit Solutions}

Selecting an optimal hyperparameter $\lambda$ for \texttt{RegFit} is not trivial, and experimental results show that the accuracy of the estimated foreground material signature is very sensitive to the choice of hyperparameter. We seek an alternative method that is less sensitive to the choice of hyperparameter. The next algorithmic approach we consider is to find a foreground material signature that satisfies the endpoint fit criterion (see Definition~\ref{def:endpoint_fit_solution}). Consider the following lemma:

\begin{lemma} \label{lem:endpoint_fit}
Let $\mat{Y}^{(1)}, \ldots, \mat{Y}^{(K)}$ be a set of patches generated according to the bag-of-patches model \eqref{eq:bag_of_patches_model}. Suppose $\vec{d}_1, \ldots, \vec{d}_K$ are element-wise positive vectors such that
\begin{equation*}
    \begin{bmatrix}
        \diag(\vec{d}_1) \mat{Y}^{(1)} & \cdots & \diag(\vec{d}_K) \mat{Y}^{(K)}
    \end{bmatrix} = \begin{bmatrix}
        \vec{v}_1 & \vec{v}_2
    \end{bmatrix} \mat{H}.
\end{equation*}
If $\vec{\tilde{y}_i}$ and $\vec{\tilde{y}_j}$ are distinct columns from the matrix $\begin{bmatrix} \diag(\vec{d}_1) \mat{Y}^{(1)} & \cdots & \diag(\vec{d}_K) \mat{Y}^{(K)} \end{bmatrix}$ that minimize
\begin{equation} \label{eq:cosine_similarity}
    \frac{\vec{\tilde{y}_i}^\trans \vec{\tilde{y}_j}}{\norm{\vec{\tilde{y}_i}}_2 \norm{\vec{\tilde{y}_j}}_2},
\end{equation}
then $\vec{f^*} = \vec{\tilde{y}_i} \oslash \vec{\tilde{y}_j}$ is an endpoint fit solution for the set of patches $\mat{Y}^{(1)}, \ldots, \mat{Y}^{(K)}$. 
\end{lemma}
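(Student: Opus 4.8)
The plan is to reduce the statement of Lemma~\ref{lem:endpoint_fit} to Theorem~\ref{thm:solutions_to_endfit} (equivalently Corollary~\ref{cor:equivalence_of_solutions}) by showing two things: first, that the candidate $\vec{f^*} = \vec{\tilde{y}_i} \oslash \vec{\tilde{y}_j}$ is a \emph{solution} to the bag-of-patches model in the sense of Definition~\ref{def:solution}; and second, that it satisfies the \emph{endpoint fit} property of Definition~\ref{def:endpoint_fit_solution}. For the first part, I would use the hypothesis that the rescaled data $\begin{bmatrix} \diag(\vec{d}_1)\mat{Y}^{(1)} & \cdots & \diag(\vec{d}_K)\mat{Y}^{(K)} \end{bmatrix} = \begin{bmatrix} \vec{v}_1 & \vec{v}_2 \end{bmatrix}\mat{H}$ is rank~$2$ with column span equal to $\mathrm{span}\{\vec{v}_1,\vec{v}_2\}$; since each original patch is rank~$2$ and the $\vec{d}_k$ are element-wise positive, every rescaled column lies in this common $2$-dimensional cone, and one can partition $\mat{H}$ blockwise to recover, for each $k$, an estimated $\vec{\tilde{v}}^{(k)}$ (proportional to $\vec{v}_1 \oslash \vec{d}_k$ or the appropriate combination) and a non-negative $\mat{\tilde{C}}^{(k)}$. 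The key observation is that $\vec{\tilde{y}_i}$ and $\vec{\tilde{y}_j}$, being the pair of rescaled columns minimizing cosine similarity \eqref{eq:cosine_similarity}, must be the two extreme rays of the cone generated by all rescaled columns — otherwise a more extreme pair would give a smaller cosine. Hence every rescaled column is a non-negative combination of $\vec{\tilde{y}_i}$ and $\vec{\tilde{y}_j}$, so taking $\vec{v}_1 = \vec{\tilde{y}_i}$, $\vec{v}_2 = \vec{\tilde{y}_j}$, setting $\vec{\tilde{v}}^{(k)} = \vec{v}_2 \oslash \vec{d}_k$ (up to normalization) and defining $\mat{\tilde{C}}^{(k)}$ by the non-negative coordinates of the $k$-th block yields a valid factorization with $\vec{\tilde{f}} = \vec{v}_1 \oslash \vec{v}_2 = \vec{\tilde{y}_i}\oslash\vec{\tilde{y}_j}$; strict positivity of $\vec{\tilde{f}}$ and $\vec{\tilde{v}}^{(k)}$ follows because $\vec{v}_1, \vec{v}_2$ inherit strict positivity from the strictly positive data columns and the strictly positive $\vec{d}_k$.

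For the second part, I would verify the endpoint fit property directly from the same construction. The column $\vec{\tilde{y}_i}$ is itself one of the rescaled data columns, say from patch $k_1$; in the coordinate system $\{\vec{v}_1, \vec{v}_2\} = \{\vec{\tilde{y}_i}, \vec{\tilde{y}_j}\}$ it has coordinates $\begin{bmatrix} x & 0 \end{bmatrix}^\trans$ with $x > 0$, so the corresponding column of $\mat{\tilde{C}}^{(k_1)}$ (after undoing the $\diag(\vec{d}_{k_1})$ rescaling, which only introduces a positive diagonal scaling and does not change which coordinate vanishes) is of the form $\begin{bmatrix} x & 0 \end{bmatrix}^\trans$. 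Symmetrically, $\vec{\tilde{y}_j}$ comes from some patch $k_2$ and gives a column $\begin{bmatrix} 0 & y \end{bmatrix}^\trans$ in $\mat{\tilde{C}}^{(k_2)}$ with $y > 0$. This is exactly the condition in Definition~\ref{def:endpoint_fit_solution}, so $\vec{f^*}$ is an endpoint fit solution. (One should note the rescaling by $\diag(\vec{d}_k)$ commutes with the $\diag(\vec{\tilde{v}}^{(k)})$ structure of the model precisely because both are positive diagonal matrices, which is why the endpoint structure of the coordinate columns is preserved.)

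The main obstacle I expect is the careful justification that the cosine-minimizing pair $\vec{\tilde{y}_i}, \vec{\tilde{y}_j}$ are the true extreme rays of the cone of rescaled columns, and hence that \emph{all} remaining columns expand non-negatively in this pair — this is what makes the $\mat{\tilde{C}}^{(k)} \geq 0$ constraint hold globally rather than just for the two selected columns. In two dimensions this is geometrically clear (the pair of vectors subtending the largest angle bounds all others), but writing it cleanly requires fixing an orientation/angular parameterization of the rays in the $2$-dimensional span and arguing that any column strictly outside the angular interval spanned by $\vec{\tilde{y}_i}$ and $\vec{\tilde{y}_j}$ would form a strictly smaller cosine with one of them, contradicting minimality. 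A secondary, more routine point is handling degenerate configurations — e.g., if the cone happens to be a half-line or the minimizing pair is not unique — which are ruled out or rendered harmless by the rank-$2$ assumption on at least one patch and the footnote in Definition~\ref{def:endpoint_fit_solution} permitting $k_1 = k_2$.
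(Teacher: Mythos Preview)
Your proposal is correct and follows essentially the same route as the paper: show that the cosine-minimizing pair $\vec{\tilde{y}_i},\vec{\tilde{y}_j}$ are the extreme rays of the cone generated by the rescaled columns (the paper isolates this as a separate lemma in an appendix), factor the concatenated matrix as $\diag(\vec{\tilde{y}_j})[\vec{\tilde{y}_i}\oslash\vec{\tilde{y}_j}\ \ \vec{1}]\mat{\tilde{H}}$ with $\mat{\tilde{H}}\geq 0$, partition blockwise and absorb $\diag(\vec{d}_k)^{-1}$ into $\vec{\tilde{v}}^{(k)}$, and observe that the two selected columns themselves furnish the $[x,0]^\trans$ and $[0,y]^\trans$ endpoints. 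One small remark: your opening sentence about reducing to Theorem~\ref{thm:solutions_to_endfit} is unnecessary --- you (and the paper) verify Definition~\ref{def:endpoint_fit_solution} directly, and Theorem~\ref{thm:solutions_to_endfit} is not invoked.
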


The proof of this lemma is given in Appendix~\ref{app:pf_endpoint_fit}. In the noiseless case, $\mathtt{MinVolFit}$ with $\lambda = 0$ produces a factorization of a set of patches such that the $k$th patch lies in the non-negative span of $\vec{\tilde{v}}^{(k)} \odot \vec{f}$ and $\vec{\tilde{v}}^{(k)}$ for $k = 1, 2, \ldots, K$. Noting that each $\vec{\tilde{v}}^{(k)}$ will be strictly positive (see Appendix~\ref{app:pf_solutions_to_minvol}, we have $\diag(\vec{\tilde{v}}^{(k)})^{-1} \mat{Y}^{(k)} = \begin{bmatrix} \vec{\tilde{f}} & \vec{1} \end{bmatrix} \mat{\tilde{C}}^{(k)}$. Concatenating across all patches for $k = 1, 2, \ldots, K$ yields
\begin{multline}
    \begin{bmatrix}
        \diag(\vec{\tilde{v}}^{(1)})^{-1} \mat{Y}^{(1)} & \cdots & \diag(\vec{\tilde{v}}^{(k)})^{-1} \mat{Y}^{(K)}
    \end{bmatrix} \\
    = \begin{bmatrix}
        \vec{\tilde{f}} & \vec{1}
    \end{bmatrix} \begin{bmatrix} \mat{\tilde{C}}^{(1)} & \cdots & \mat{\tilde{C}}^{(K)}] \end{bmatrix}.
\end{multline}
This follows the form of Lemma~\ref{lem:endpoint_fit}, so an endpoint fit solution is given by the columns of the left matrix that maximize the normalized inner product in \eqref{eq:cosine_similarity}. If patches contain random noise, then each matrix $\diag(\vec{\tilde{v}}^{(k)})^{-1} \mat{Y}^{(k)}$ lies approximately in the non-negative span of $\vec{\tilde{f}}$ and $\vec{1}$, and the endpoint fit solution suggested by Lemma~\ref{lem:endpoint_fit} is an approximate solution. A procedure for finding an endpoint fit solution to the bag-of-patches model using this principle and Algorithm~\ref{alg:regularized_fit} is stated in Algorithm~\ref{alg:endpoint_fit}.

\begin{algorithm}[t]
\caption{$\mathtt{EPFit}$: find a solution to the bag-of-patches model \eqref{eq:bag_of_patches_model} satisfying the endpoint fit criterion.}
\begin{algorithmic} \label{alg:endpoint_fit}
    \REQUIRE $\mathcal{Y} = \{\mat{Y}^{(1)}, \ldots, \mat{Y}^{(K)}\}$
    \STATE $\vec{\tilde{v}}^{(1)}, \ldots, \vec{\tilde{v}}^{(K)} \gets \mathtt{MinVolFit}(\mathcal{Y}, \lambda = 0)$
    \STATE $\mat{\tilde{Y}} \gets \begin{bmatrix} \diag\left(\vec{\tilde{v}}^{(1)}\right)^{-1} \mat{Y}^{(1)} & \cdots & \diag\left(\vec{\tilde{v}}^{(K)}\right)^{-1} \mat{Y}^{(K)} \end{bmatrix}$
    \STATE $\vec{u_1}, \vec{u_2} \gets \vec{\tilde{y}_1}, \vec{\tilde{y}_2}$
    \FOR{$n = 3, 4, \ldots, N$}
        \STATE $\vec{w} \gets \vec{\tilde{y}_n}$
        \IF{$\vec{u_1}^\trans \vec{w} / (\norm{\vec{u_1}}_2 \norm{\vec{w}}_2) > \vec{u_1}^\trans \vec{u_2} / (\norm{\vec{u_1}}_2 \norm{\vec{u_2}}_2)$}
            \STATE $\vec{u_2} \gets \vec{w}$
        \ELSIF{$\vec{w}^\trans \vec{u_2} / (\norm{\vec{w}}_2 \norm{\vec{u_2}}_2) > \vec{u_1}^\trans \vec{u_2} / (\norm{\vec{u_1}}_2 \norm{\vec{u_2}}_2)$}
            \STATE $\vec{u_1} \gets \vec{w}$
        \ENDIF
    \ENDFOR
    \STATE $\vec{f^*} \gets \vec{u_1} \oslash \vec{u_2}$
    \RETURN $\vec{f^*}$
\end{algorithmic}
\end{algorithm}

\section{Experiments}
\label{sec:experiments}

Our experiments are intended to provide empirical verification of our algorithms in comparison to benchmark approaches in a variety of settings. We consider both synthetic and real data experiments. Synthetic data experiments demonstrate the effectiveness of our algorithms in response to particular choices of SNR and data distribution. Real data experiments demonstrate the robustness of our algorithms to the unknown variations in a practical scenario.

\subsection{Synthetic Data Experiments}
\label{ssec:synth_experiments}

Our goals for synthetic data experiments are as follows: to verify that our proposed algorithms can identify foreground material signatures with lower error than benchmark approaches; to demonstrate the effectiveness of each proposed algorithm as a function of SNR; and to show the effect of different data modeling assumptions on the accuracy of all methods. We will show that our algorithms, which can adapt to varying backgrounds per patch, will obtain more accurate foreground material signature estimates than a benchmark algorithm which does not account for varying backgrounds. We will show that the $\mathtt{EPFit}$ algorithm will show an inverse relationship between SNR and the angular difference between the estimated and true foreground material signatures. For the $\mathtt{MinVolFit}$ algorithm, we will show that for each value of SNR there exists an appropriate choice of regularization weight that will minimize the angular difference. Finally, we will show that the $\mathtt{MinVolFit}$ algorithm will show a higher resilience to noise than the $\mathtt{EPFit}$ algorithm.

\noindent {\bf Data generation:} To facilitate synthetic data experiments, we create $K$ patches consisting of $N$ pixels per patch and with $M$ entries per pixel, with pixels lying in the cone between a background-illumination vector and the foreground-background product vector. We generate a single foreground material signature shared among all patches, and distinct background-illumination vectors for each patch. The background-illumination vectors are formed by summing a shared component and weighted individual component with weight $r$. Patches are randomly assigned as \emph{tight} or \emph{non-tight} with probability $p$. We consider two distinct patch settings for our synthetic data experiments: a setting with strictly tight patches, and a setting with only partially tight patches. In the \emph{strictly tight} patch setting, tight patches are generated such that at least one pixel is a scaled version of the background-illumination vector, and another pixel is a scaled version of the product vector. In the \emph{partially tight} patch setting, tight patches are generated such that one pixel is a scaled version of either the background-illumination vector or the product vector, and not both. Finally, the pixels are perturbed by i.i.d. additive Gaussian noise with variance $\sigma^2$. The parameter $\mathtt{is\_strict}$ indicates whether patches may be generated as strictly-tight or partially-tight.
The output of the data generation procedure are the patches $\mat{Y}^{(k)}$, ground truth background-illumination vectors $\vec{v}^{(k)}$, and ground truth coefficient matrices $\mat{C}^{(k)}$ for $k = 1, \ldots, K$, and the ground truth foreground material signature $\vec{f}$.

\noindent {\bf Algorithms:} For our experiments, we consider a benchmark method and our two proposed methods. The benchmark method is a variation of the volume-minimizing NMF method proposed in \cite{fu2019nonnegative}. Given a non-negative input matrix $\mat{Y} \in \mathds{R}_+^{M \times N}$, the benchmark method solves the following problem:
\begin{equation} \label{eq:minvol_nnmf}
\begin{aligned}
    \text{min} \quad & \norm{\mat{Y} - \mat{W} \mat{H}}_F^2 + \lambda \operatorname{logdet} \left( \mat{W}^\trans \mat{W} + \delta \mat{I} \right) \\
    \text{s.t.} \quad & \mat{W} \in \mathds{R}_+^{M \times 2}, \enspace \mat{H} \in \mathds{R}_+^{2 \times N},
\end{aligned}
\end{equation}
where the $\delta \mat{I}$ term for small but sufficiently large $\delta > 0$ ensures that the determinant is positive. To solve this problem, we use the $\mathtt{MinVolNMF}$ algorithm provided in \cite{leplat2019minimum}. Given a patch following the bag-of-patches model, the NMF problem is equivalent to finding matrices $\mat{W} = \begin{bmatrix} \vec{v} \odot \vec{f} & \vec{v} \end{bmatrix}$ and $\mat{H} = \mat{C}$. The foreground material signature for the patch can be extracted by taking the element-wise ratio of columns of $\mat{W}$. Note that volume-minimizing NMF methods allow for permutation and scaling of the recovered columns of $\mat{W}$. If the correct columns of $\mat{W}$ are identified, then the extracted foreground material signature $\vec{\tilde{f}}$ for the patch may be of the form $\vec{\tilde{f}} \propto \vec{f}$ or $\vec{\tilde{f}} \propto \vec{1} \oslash \vec{f}$. We adapt this benchmark method to the multiple patch setting by concatenating each patch $\mat{Y}^{(k)}$ for $k = 1, \ldots, K$ such that $\mat{Y}_\text{all} = \begin{bmatrix} \mat{Y}^{(1)} & \cdots & \mat{Y}^{(K)} \end{bmatrix}$. Then, we solve \eqref{eq:minvol_nnmf} for the concatenated matrix. See Algorithm~\ref{alg:benchmark_1} for reference.

\begin{algorithm}[t]
\caption{Adaptation of $\mathtt{MinVolNMF}$: foreground signature extraction via concatenated patches.}
\begin{algorithmic} \label{alg:benchmark_1}
    \REQUIRE $\mat{Y}^{(1)}, \ldots, \mat{Y}^{(K)}, \delta, \lambda$
    \STATE $\mat{Y}_\text{all} \gets \begin{bmatrix} \mat{Y}^{(1)} & \cdots & \mat{Y}^{(K)} \end{bmatrix}$
    \STATE $\mat{W} \gets \text{MinVolNMF}(\mat{Y}_\text{all},\lambda)$
    \STATE $W_{ij} \gets \delta$ for all $i, j$ s.t. $W_{ij} < \delta$
    \STATE $\vec{f^*} \gets \vec{W}_{:,1} \oslash \vec{W}_{:,2}$
    \RETURN $\vec{f^*}$
\end{algorithmic}
\end{algorithm}

For our proposed methods, we have several hyperparameters to select. For the $\mathtt{MinVolFit}$ algorithm, we must select a regularization weight and an iteration limit. After initial testing, we selected $N_\text{iters} = 1 \times 10^6$; we explore the impact of various choices of regularization weight. For the $\mathtt{EPFit}$ algorithm, we must select an iteration limit. By similar approach, we selected $N_\text{iters} = 5 \times 10^4$.

\noindent {\bf Evaluation metric:} We determine the accuracy of estimates for the foreground material signature by measuring the angular difference between the estimated and true foreground material signatures. The angular difference between two vectors $\vec{u}$ and $\vec{v}$ in degrees is
\begin{equation}
    \Delta \theta = \frac{180}{\pi} \cdot \cos^{-1}\left( \frac{\vec{u}^\trans \vec{v}}{\norm{\vec{u}} \norm{\vec{v}}} \right).
\end{equation}
The angular difference shares a particular relation to the normalized MSE measure; the angular difference can be computed from normalized MSE using
\begin{equation} \label{eq:angular_difference_from_mse}
    \Delta \theta = 2 \sin^{-1}(\sqrt{\text{nMSE}(\vec{u}, \vec{v})}/2) \times \frac{180}{\pi},
\end{equation}
where nMSE computed between $\vec{u}$ and $\vec{v}$ is
\begin{equation}
    \text{nMSE}(\vec{u}, \vec{v}) := \norm{\frac{\vec{u}}{\norm{\vec{u}}} - \frac{\vec{v}}{\norm{\vec{v}}}}^2.
\end{equation}
Our proposed methods, and the benchmark method, can only provide an estimate $\vec{\tilde{f}}$ of the true foreground material signature $\vec{f}$ up to the variations of scaling and elementwise inverse. To accommodate these variations, we consider both the normal and elementwise inverse forms of the identified solution as candidates for comparison. The nMSE for the standard and inverse forms are given by $\text{nMSE}(\vec{f}, \vec{\tilde{f}})$ and $\text{nMSE}(\vec{f}, \vec{1} \oslash \vec{\tilde{f}})$, respectively. We take the minimum of the two measures as the value of nMSE in \eqref{eq:angular_difference_from_mse} and compute the angular difference.

\noindent {\bf Results and analysis:} For our experiments, we explore the relation between SNR and the angular difference between estimated foreground material signatures and true signatures for all considered algorithms over a range of data generation schemes. We conduct three experiments in a strictly tight patch setting, and three experiments in a partially tight patch setting. For each patch setting, we conduct one experiment each with the following ratios of expected magnitudes for the base and varying background-illumination signature components: $r = 0.1$, $r = 0.2$, and $r = 1$. For each experiment, we consider 10 SNR values logarithmically spaced over the range of $10^2$ to $10^6$. Each value of SNR has a corresponding value of noise variance in the data generation procedure described previously; this value is computed as the ratio of the average squared magnitude of all pixel entries in the data and SNR. For each noise level, we generate a set of 20 bags consisting of 10 patches each as detailed in the data generation section above, with each patch containing 25 pixels of dimension 30. In all cases we use $p = 0.5$. To each bag of patches, we apply: $\mathtt{MinVolFit}$ with seven values of $\lambda$ logarithmically spaced between $1 \times 10^{-5}$ and $1 \times 10^{-3}$, $\mathtt{EPFit}$, and adapted $\mathtt{MinVolNMF}$ with seven values of $\lambda$ logarithmically spaced between $1 \times 10^{-2}$ and $1 \times 10^{0}$ and with the default $\delta = 0.1$. We then compute the angular difference between the obtained estimates of the foreground material signature and the true foreground material signature as described in the evaluation metric section above. We report the median angular difference as a function of SNR for each algorithm.

\begin{figure*}[htbp]
    \centering
    \includegraphics[width=0.9\textwidth]{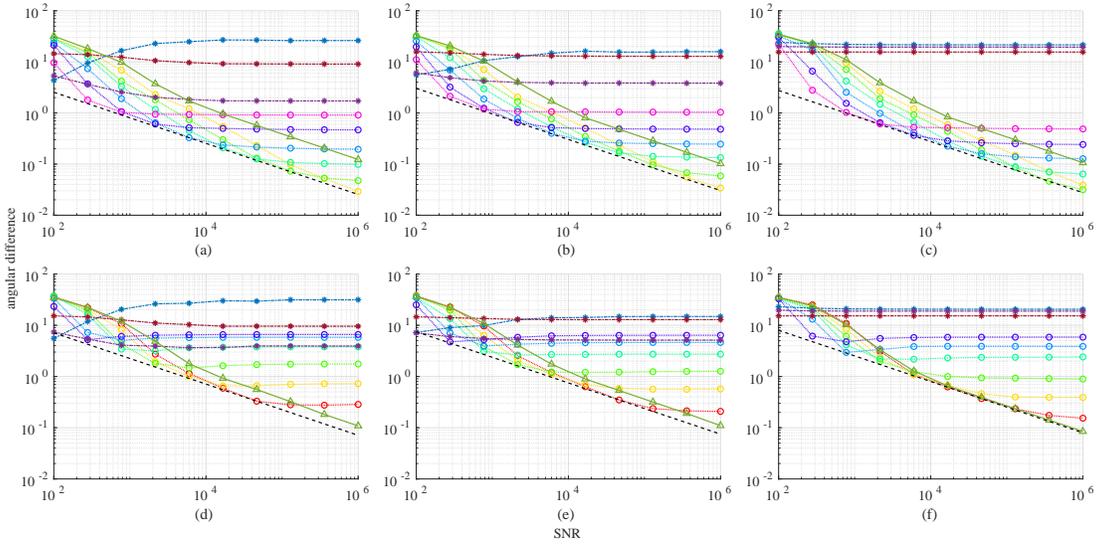}
    \caption{Plot of median angular difference (measured in degrees) between the estimated and true foreground signatures as a function of SNR for the following algorithms: $\mathtt{MinVolNMF}$ (dash-dot line, *) w/ $\lambda = 1 \times 10^{-2}$ (blue), $\lambda = 1 \times 10^{-1}$ (red), and $\lambda = 1 \times 10^{0}$ (purple), $\mathtt{MinVolFit}$ (dotted line, circle) w/ $\lambda = 1 \times 10^{-5}$ (red), $\lambda = 2.154 \times 10^{-5}$ (yellow), $\lambda = 4.642 \times 10^{-5}$ (green), $\lambda = 1 \times 10^{-4}$ (cyan), $\lambda = 2.154 \times 10^{-4}$ (blue), $\lambda = 4.642 \times 10^{-4}$ (purple), and $\lambda = 1 \times 10^{-3}$ (magenta), and $\mathtt{EPFit}$ (olive green, solid line, triangle). A reference slope for angular difference vs SNR matching the lower envelope of $\mathtt{MinVolFit}$ is shown (black dashed line). Each algorithm is applied to a collection of ten patches containing twenty-five pixels each, generated according to the strictly-tight patch setting (a)-(c) or the partially-tight patch setting (d)-(f), and with parameter $r = 0.1$ (a) and (d), $r = 0.2$ (b) and (e), or $r = 1$ (c) and (f). See Section~\ref{ssec:synth_experiments}: data generation for further details. Plots for \texttt{MinVolNMF} w/ $\lambda = 1 \times 10^{-5}$ (red) are omitted from (a)-(c), and w/ $\lambda = 1 \times 10^{-3}$ (magenta) from (d)-(f), as these curves do not intersect the lower envelope line within the selected range of SNR.}
    \label{fig:synth_results}
\end{figure*}

Figure~\ref{fig:synth_results} illustrates the performance of the proposed $\mathtt{MinVolFit}$ and $\mathtt{EPFit}$ algorithms, and the adapted $\mathtt{MinVolNMF}$ benchmark, for both the strictly tight (top row) and partially tight (bottom row) patch settings. For the $\mathtt{MinVolFit}$ algorithm, each choice of regularization weight yields a performance curve with decreasing angular difference as SNR increases up to some limit in SNR, after which the angular difference no longer decreases. As the regularization weight decreases, the SNR value at which the minimum angular difference is achieved increases. The lower envelope of performance curves among all choices of regularization weight (depicted by the black line) shows a consistent inverse relationship between SNR and angular difference. This may be explained by considering the effect of different choices of regularization weight. In a noisy setting, the noise on endpoints may cause the optimal fit of the foreground material signature with respect to fitting error to have a larger volume relative to the true foreground material signature. Volume regularization allows for an increase in fitting error by rewarding a decrease in volume, so for a given value of SNR there should exist an optimal choice of regularization weight that sufficiently reduces the volume of the estimated foreground material signature. As SNR increases, the effect of noisy endpoints is reduced and a smaller regularization weight will be optimal.

For the $\mathtt{EPFit}$ algorithm, in all plots we observe the angular difference decreasing as SNR increases. In contrast to both proposed methods, the benchmark algorithm demonstrates minimal decrease in angular difference as SNR increases. This may be explained by considering the effect of a per-patch varying background material signature. The benchmark approach finds a rank-2 NMF representation of the concatenated set of patches, but the varying background material signatures act as a secondary source of noise when considering the concatenated set of patches. Thus, for sufficiently high SNR the effect of the varying background on benchmark performance is more significant than the effect of additive noise, leading to a lower bound for angular difference. The proposed algorithms account for varying background material signatures among patches, so we do not observe the same issue in these methods. This explanation is further supported by the effect of changing the ratio of expected magnitudes of background material signature components. As we increase the ratio (from subfigures (a) to (c) and from subfigures (d) to (f)), the background material signatures become more varied between patches. Subsequently, the minimum angular difference observed for the benchmark algorithm increases. Again, the proposed algorithms are designed to allow for varying background material signatures, so we do not see a notable impact on the performance of these methods.

In comparing the performance of the proposed methods, we observe that for a given value of SNR, the $\mathtt{MinVolFit}$ algorithm achieves a smaller median angular difference than the $\mathtt{EPFit}$ for at least one choice of hyperparameter. This is reasonable: the $\mathtt{EndpointFit}$ is sensitive to noise affecting endpoints. In contrast, the $\mathtt{MinVolFit}$ algorithm can accommodate some amount of noise via careful selection of the regularization weight $\tau$. Any non-zero value of $\tau$ allows a slight increase in fitting error if it allows a decrease in the volume measure, which can allow the estimated foreground material signature to have a smaller volume measure than would be given by noisy endpoints. However, selecting an optimal regularization weight is non-trivial. In contrast, the $\mathtt{EPFit}$ algorithm does not require any hyperparameter selection.

\subsection{Real Data Experiments}

To verify that our algorithms can perform in a practical setting with potentially unknown variations, we consider experiments in a real data scenario. Our goals for real data experiments are as follows: to demonstrate that algorithms which account for per-patch background variation can perform better than algorithms which do not take such variation into account; and to verify that our algorithms have some robustness to unknown variations that may be present in real data. We will show that our algorithms, which can adapt to varying backgrounds per patch, will obtain more accurate foreground material signature estimates than a benchmark algorithm which does not account for varying backgrounds. For the $\mathtt{MinVolFit}$ algorithm, we will show that for an appropriate choice of regularization weight the estimated foreground material signature will be close to the expected signature. Similarly, for the $\mathtt{EPFit}$ algorithm we will show that the estimated foreground material signature will be close to the expected signature.

\noindent {\bf Data sampling:}
We use the dataset created by Kendler {\em et al.}~\cite{kendler2019detection}. The dataset consists of several annotated hyperspectral cubes of a particular scene. The scene consists of many background materials, with a mix of five distinct foreground materials (sugar, polystyrene, silicone, white silicone, and jam) deposited on the surfaces of several background materials. The locations of background and foreground materials in the scene are annotated in the dataset. However, the specific coverage at the per-pixel level is unknown; the annotation specifies only whether a pixel contains any amount foreground material. To facilitate experiments, we selected regions containing two different background materials (ceramic tile and plywood) and one shared foreground material (silicone deposit). We sampled patches by sweeping a $12 \times 12$ square window with a one-pixel offset through each region.

\noindent {\bf Algorithms:}
To account for the significant noise present in real data, we make slight modifications to the $\mathtt{EPFit}$ algorithm. In this, the matrix $\mat{\tilde{Y}}$ contains the vectors of all patches projected onto a rank-2 span. The final step of the algorithm is to find the pair of vectors with maximum angular difference, under the assumption that these vectors represent endpoints in the original data. In the presence of noise, it is possible for vectors which are not endpoints to be perturbed such that they produce a larger angular difference with respect to other vectors in the span than the true endpoints. To account for these possible errors, we consider removing some columns from $\mat{\tilde{Y}}$ which yield the largest angular difference. This procedure is given in Algorithm~\ref{alg:noisy_endpoint_removal}.

\begin{algorithm}[t]
\caption{Endpoint identification method with noisy endpoint removal.}
\begin{algorithmic} \label{alg:noisy_endpoint_removal}
    \REQUIRE $\mat{\tilde{Y}}, \alpha$
    \STATE $\mat{V} \gets \text{rank-2 basis of $\mat{\tilde{Y}}$}$
    \STATE $\mat{C} \gets \mat{V}^+ \mat{\tilde{Y}}$
    \STATE $\vec{r} \gets \mat{C}_{1,:} \oslash \mat{C}_{2,:}$
    \STATE remove columns from $\mat{\tilde{Y}}$ with the $\alpha$ smallest and largest coefficient ratios in $\vec{r}$
    \STATE $\vec{u_1}, \vec{u_2} \gets \vec{\tilde{y}_1}, \vec{\tilde{y}_2}$
    \FOR{$n = 3, 4, \ldots, N$}
        \STATE $\vec{w} \gets \vec{\tilde{y}_n}$
        \IF{$\vec{u_1}^\trans \vec{w} / (\norm{\vec{u_1}}_2 \norm{\vec{w}}_2) > \vec{u_1}^\trans \vec{u_2} / (\norm{\vec{u_1}}_2 \norm{\vec{u_2}}_2)$}
            \STATE $\vec{u_2} \gets \vec{w}$
        \ELSIF{$\vec{w}^\trans \vec{u_2} / (\norm{\vec{w}}_2 \norm{\vec{u_2}}_2) > \vec{u_1}^\trans \vec{u_2} / (\norm{\vec{u_1}}_2 \norm{\vec{u_2}}_2)$}
            \STATE $\vec{u_1} \gets \vec{w}$
        \ENDIF
    \ENDFOR
    \STATE $\vec{f^*} \gets \vec{u_1} \oslash \vec{u_2}$
    \RETURN $\vec{f^*}$
\end{algorithmic}
\end{algorithm}

\noindent {\bf Evaluation scheme:}
Unlike the synthetic data experiments, the true foreground material signatures for the real data experiments are unknown. However, with the label information for the locations of foreground and background materials, we can make a reasonable inference as to the value of the true foreground material signature up to scaling and noise effects. Given a region containing exactly one background and foreground material, we may compare all possible foreground pixels with all possible background pixels and extract candidate foreground signatures by taking the elementwise ratio of these combinations. To avoid issues of small variations in the background material, we select pairs of pixels that are within 10 pixels of each other. This is similar to the background subtraction method in \cite{kendler2019detection}. Then, we may sort all candidate foreground material signatures by their volume measure (see \eqref{eq:volume_measure}) and obtain a denoised reference foreground material signature by averaging the top $k$ signatures. For our experiments, we let $k = 10$. See Algorithm~\ref{alg:oracle} for reference.

\begin{algorithm}[t]
\caption{Oracle: extract reference foreground material signature for real data experiments using label information.}
\begin{algorithmic} \label{alg:oracle}
    \REQUIRE $\mat{X}, \vec{y}, \delta, K$
    \STATE $\mathcal{F} \gets \emptyset$
    \FOR{all $i$ s.t. $y_i = \text{foreground}$}
        \FOR{all $j$ s.t. $y_j = \text{background}$}
            \IF{distance between $\vec{X}_i$ and $\vec{X}_j$ is less than $\delta$}
                \STATE $\mathcal{F} \gets \mathcal{F} \cup \{\vec{X}_i \oslash \vec{X}_j\}$
            \ENDIF
        \ENDFOR
    \ENDFOR
    \STATE $\vec{f^*} \gets \text{mean of top $K$ elements of $\mathcal{F}$ w.r.t. $\vol$}$
    \RETURN $\vec{f^*}$
\end{algorithmic}
\end{algorithm}

\begin{figure*}[htbp]
    \centering
    \includegraphics[width=0.9\textwidth]{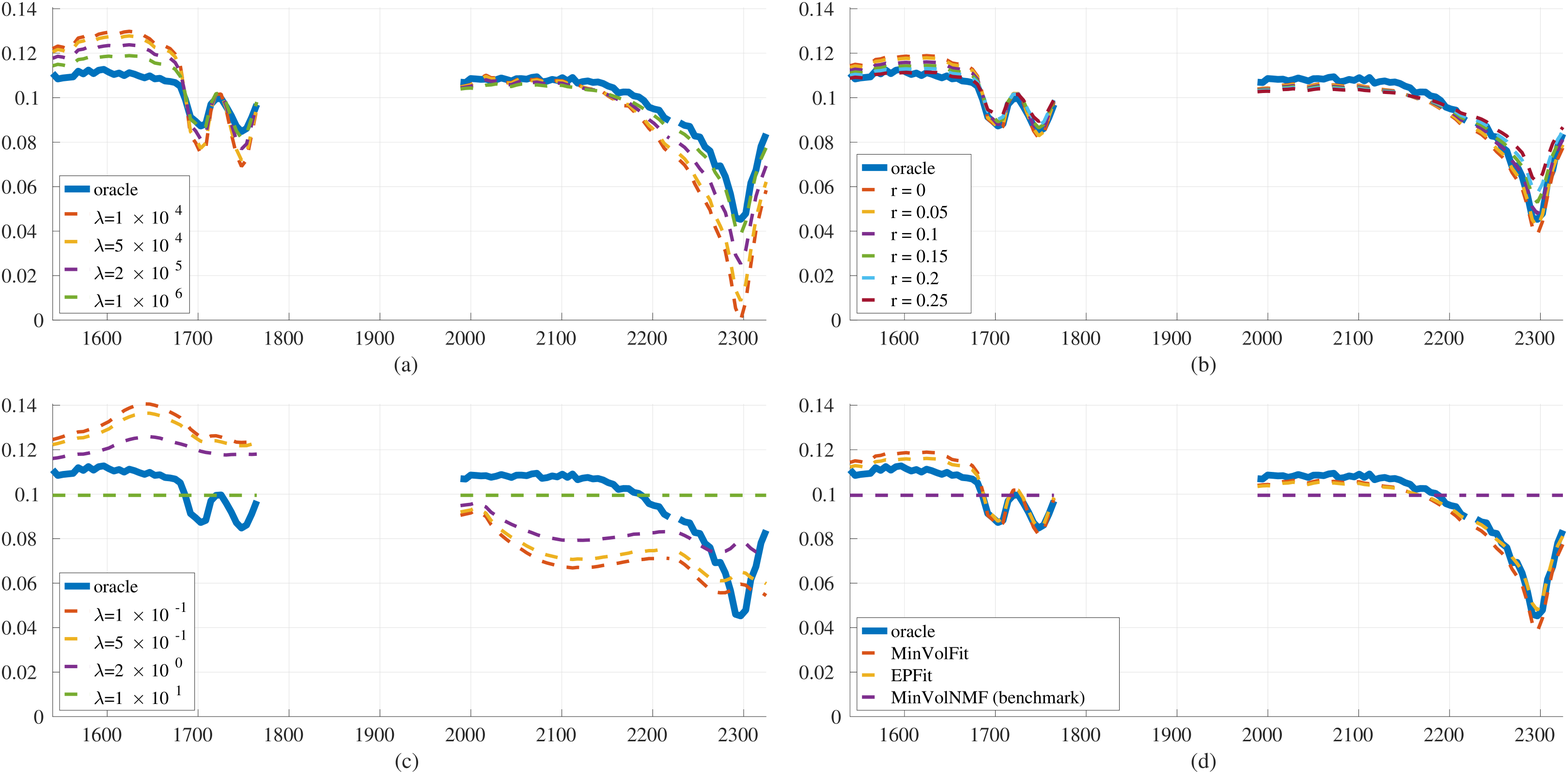}
    \caption{Recovered foreground material signatures of a silicone deposit on both ceramic tile and plywood backgrounds obtained using (a) the proposed $\mathtt{MinVolFit}$ algorithm with various choices of regularization weight $\lambda$; (b) the proposed $\mathtt{EPFit}$ algorithm with various choices of endpoint removal ratio $r$; and (c) the $\mathtt{MinVolNMF}$ benchmark algorithm with various choices of regularization weight $\lambda$. (d) Collection of most accurate estimated foreground material signatures (by smallest angular difference from oracle signature) obtained with each method.}
    \label{fig:silicone_on_ceramic_plywood}
\end{figure*}

\begin{table}[t]
\renewcommand{\arraystretch}{1.25}
\centering
\caption{Total runtime and average per-iteration runtime for various algorithms and benchmarks.}
\begin{tabular}{|c|c|c|c|}
\hline
Algorithm & $\lambda$ or $r$ & Total Time (s) & Avg. Per-Iter. Time (s) \\
\hline
\multirow{4}{*}{$\mathtt{MinVolFit}$} & $10^4$ & $2.217 \times 10^2$ & \multirow{4}{*}{$2.508 \times 10^{-1}$} \\
\cline{2-3}
& $5 \times 10^4$ & $2.479 \times 10^2$ & \\
\cline{2-3}
& $2 \times 10^5$ & $3.111 \times 10^2$ & \\
\cline{2-3}
& $10^6$ & $2.239 \times 10^2$ & \\
\hline
$\mathtt{EPFit}$ & all & $2.454 \times 10^2$ & $2.454 \times 10^{-1}$ \\
\hline
\multirow{4}{*}{$\mathtt{MinVolNMF}$} & 0.1 & $3.162 \times 10^1$ & \multirow{4}{*}{$3.172 \times 10^{-2}$} \\
\cline{2-3}
& 0.5 & $3.097 \times 10^1$ & \\
\cline{2-3}
& 2 & $3.264 \times 10^1$ & \\
\cline{2-3}
& 10 & $3.168 \times 10^1$ & \\
\hline
\end{tabular}
\end{table}

\noindent {\bf Results and Analysis:} 
Results for foreground signature extraction of silicone deposit on ceramic tile and plywood are shown in Figure~\ref{fig:silicone_on_ceramic_plywood}. We observe that for all three of our algorithms, there is a choice of hyperparameter that yields a foreground signature estimate that is close to the reference signature. In contrast, both benchmarks show some difference between the estimates and reference. In particular, benchmark 1 shows significant deviation from the reference signature. This is expected, as the approach of benchmark 1 is to concatenate all patches and solve for endpoint vectors as a rank-2 NMF problem, an assumption which does not hold when we consider a scenario with multiple backgrounds.

\section{Conclusion}
\label{sec:conclusion}

In this paper, we explored the problem of foreground material signature extraction under an intimate mixing bag-of-patches model. The problem of non-uniqueness of the solution for the foreground signature was identified and the space of all feasible solutions was derived. Conditions and criteria under which identifiable solutions are guaranteed were suggested and proven. Several algorithms with identifiability guarantees were proposed based on the previously suggested criteria. Experiments on synthetic and real data demonstrated the capability of the proposed algorithms to obtain identifiable solutions in settings where existing methods do not succeed.

\appendix

\subsection{Proof of Property~\ref{prop:space_of_solutions}}
\label{app:pf_space_of_solutions}

\begin{proof}
Let $\mathcal{Y} = \{ \mat{Y}^{(1)}, \ldots, \mat{Y}^{(K)} \}$ be a set of patches following the bag-of-patches model in \eqref{eq:bag_of_patches_model} with a true foreground material signature $\vec{f} \in \mathds{R}_{++}^M$, and assume $\vec{1}, \vec{f}, \vec{f} \odot \vec{f}$ are linearly independent. We may write the $k$th patch as
\begin{equation}
    \mat{Y}^{(k)} = \diag(\vec{v}^{(k)}) \begin{bmatrix} \vec{f} & \vec{1} \end{bmatrix} \mat{C}^{(k)},
\end{equation}
where each $\vec{v}^{(k)} \in \mathds{R}_{++}^M$ is strictly positive and each $\mat{C}^{(k)} \in \mathds{R}_{+}^{2 \times N_k}$ is non-negative for $k = 1, 2, \ldots, K$. By definition of the bag-of-patches model in \eqref{eq:bag_of_patches_model}, each patch $\mat{Y}^{(k)}$ must be rank 2. Suppose there exists an alternative solution $\vec{\tilde{f}}$ to the bag-of-patches model for the same set of patches $\mathcal{Y}$. That is, there exists $\vec{\tilde{v}}^{(k)} \in \mathds{R}_{++}^M$ and $\mat{\tilde{C}}^{(k)} \in \mathds{R}_{+}^{2 \times N_k}$ for $k = 1, 2, \ldots, K$ such that
\begin{equation}
    \mat{Y}^{(k)} = \diag(\vec{\tilde{v}}^{(k)}) \begin{bmatrix} \vec{\tilde{f}} & \vec{1} \end{bmatrix} \mat{\tilde{C}}^{(k)}.
\end{equation}
For each patch $k = 1, 2, \ldots, K$, it holds that
\begin{equation} \label{eq:proof1_eq1}
    \diag(\vec{v}^{(k)}) \begin{bmatrix} \vec{f} & \vec{1} \end{bmatrix} \mat{C}^{(k)} = \diag(\vec{\tilde{v}}^{(k)}) \begin{bmatrix} \vec{\tilde{f}} & \vec{1} \end{bmatrix} \mat{\tilde{C}}^{(k)}.
\end{equation}
Right-multiplying with the right-psuedoinverse $\mat{(\tilde{C}}^{(k)})^\dagger$ of $\mat{\tilde{C}}^{(k)}$ gives
\begin{equation} \label{eq:proof_prop_eq2}
    \diag(\vec{v}^{(k)}) \begin{bmatrix} \vec{f} & \vec{1} \end{bmatrix} \mat{C}^{(k)} \mat{(\tilde{C}}^{(k)})^\dagger = \diag(\vec{\tilde{v}}^{(k)}) \begin{bmatrix} \vec{\tilde{f}} & \vec{1} \end{bmatrix}.
\end{equation}
Note that if $\mat{Y}^{(k)}$ is rank 2, then each of $\diag(\vec{v}^{(k)})$, $\begin{bmatrix} \vec{f} & \vec{1} \end{bmatrix}$, and $\mat{C}^{(k)}$ are at least rank 2. Then $\mat{C}^{(k)}$ is full row rank. This holds similarly for $\mat{\tilde{C}}^{(k)}$. Then $\mat{C}^{(k)} \mat{(\tilde{C}}^{(k)})^\dagger = \begin{bmatrix} \alpha & \gamma \\ \beta & \delta \end{bmatrix}$ is rank 2. Thus, $\det(\mat{C}^{(k)} \mat{(\tilde{C}}^{(k)})^\dagger) = \alpha \delta - \beta \gamma \neq 0$.

The entry-wise ratio of the first and second columns on each side of \eqref{eq:proof_prop_eq2} yields
\begin{equation}
    \vec{\tilde{f}} = (\alpha \vec{f} + \beta \vec{1}) \oslash (\gamma \vec{f} + \delta \vec{1}).
\end{equation}
Substituting this definition of $\vec{\tilde{f}}$ in \eqref{eq:proof1_eq1} yields
\begin{equation}
\begin{aligned}
    & \diag(\vec{v}^{(k)}) \begin{bmatrix} \vec{f} & \vec{1} \end{bmatrix} \mat{C}^{(k)} \\
    &\qquad = \diag(\vec{\tilde{v}}^{(k)}) \begin{bmatrix} (\alpha \vec{f} + \beta \vec{1}) \oslash (\gamma \vec{f} + \delta \vec{1}) & \vec{1} \end{bmatrix} \mat{\tilde{C}}^{(k)}.
\end{aligned}
\end{equation}
Rearranging the terms on the RHS gives
\begin{equation} \label{eq:proof_prop_eq5}
\begin{aligned}
    & \diag(\vec{v}^{(k)}) \begin{bmatrix} \vec{f} & \vec{1} \end{bmatrix} \mat{C}^{(k)} \\
    &\qquad = \diag(\vec{\tilde{v}}^{(k)} \oslash (\gamma \vec{f} + \delta \vec{1})) \begin{bmatrix} \vec{f} & \vec{1} \end{bmatrix} \begin{bmatrix} \alpha & \gamma \\ \beta & \delta \end{bmatrix} \mat{\tilde{C}}^{(k)}.
\end{aligned}
\end{equation}
Using that each side is rank 2, it holds that their spans are equal. From this, we have the following system of equations:
\begin{equation} \label{eq:proof_prop_eq3}
\begin{aligned}
    (\vec{\tilde{v}}^{(k)} \odot \vec{f}) \oslash (\gamma \vec{f} + \delta \vec{1}) &= c_1 \vec{v}^{(k)} \odot \vec{f} + c_2 \vec{v}^{(k)} \\
    \vec{\tilde{v}}^{(k)} \oslash (\gamma \vec{f} + \delta \vec{1}) &= c_3 \vec{v}^{(k)} \odot \vec{f} + c_4 \vec{v}^{(k)}
\end{aligned}
\end{equation}
Substitution of $\vec{\tilde{v}}^{(k)} \oslash (\gamma \vec{f} + \delta \vec{1})$ gives
\begin{equation} \label{eq:proof_prop_eq4}
    c_3 \vec{v}^{(k)} \odot (\vec{f} \odot \vec{f}) + c_4 \vec{v}^{(k)} \odot \vec{f} = c_1 \vec{v}^{(k)} \odot \vec{f} + c_2 \vec{v}^{(k)}.
\end{equation}
Using $\vec{1}, \vec{f}, \vec{f} \odot \vec{f}$ independent, \eqref{eq:proof_prop_eq4} holds iff $c_2=c_3=0$ and $c_1=c_4$. Let $\epsilon_k = c_1=c_4$. Applying these conditions to \eqref{eq:proof_prop_eq3} and rearranging yields
\begin{equation}
    \vec{\tilde{v}}^{(k)} = \epsilon_k (\gamma \vec{f} + \delta \vec{1}) \odot \vec{v}^{(k)}.
\end{equation}
Substituting $\vec{\tilde{v}}^{(k)}$ in \eqref{eq:proof_prop_eq5} yields
\begin{equation}
    \diag(\vec{v}^{(k)}) \begin{bmatrix} \vec{f} & \vec{1} \end{bmatrix} \mat{C}^{(k)} = \diag(\vec{v}^{(k)}) \begin{bmatrix} \vec{f} & \vec{1} \end{bmatrix} \epsilon_k \begin{bmatrix} \alpha & \gamma \\ \beta & \delta \end{bmatrix} \mat{\tilde{C}}^{(k)}.
\end{equation}
From here, it is clear that $\mat{\tilde{C}}^{(k)} = \frac{1}{\epsilon_k} \begin{bmatrix} \alpha & \gamma \\ \beta & \delta \end{bmatrix}^{-1} \mat{C}^{(k)}$.
\end{proof}

\subsection{Proof of Theorem~\ref{thm:solutions_to_minvol}}
\label{app:pf_solutions_to_minvol}

\noindent We use the following result in the proof of Theorem~\ref{thm:solutions_to_minvol}:
\begin{lemma} \label{lem:support_vol}
Assume $\vec{s}=c (t \vec{v} + (1-t)\vec{1}) \oslash (u\vec{v} + (1-u)\vec{1})$ with $c > 0$ and $\vec{v}$ strictly positive such that $\vec{v} \odot \vec{v}, \vec{v}, \vec{1}$ are independent. Let $K_1(\vec{s}) = \|\vec{s}\|^2 \|\vec{1}\|^2 - (\vec{s}^\trans \vec{1})^2$ and $K_2(\vec{s}) = -(\|\vec{s}\|^4 - (\vec{s}^\trans (\vec{s} \odot \vec{s}))(\vec{s}^\trans \vec{1}))$.
If $t-u \neq 0$, then
\begin{enumerate}
    \item $\vec{1}$, $\vec{s}$, and $\vec{s} \odot \vec{s}$ are linearly independent and 
    \item $K_1(\vec{s})$ and $K_2(\vec{s})$ are strictly positive.
\end{enumerate}
\end{lemma}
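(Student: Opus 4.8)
The fact I would build on is that each coordinate of $\vec{s}$ is the image of the corresponding coordinate of $\vec{v}$ under one fixed scalar M\"obius map $\phi(x) = c\,(tx + (1-t))/(ux + (1-u))$, whose coefficient matrix $\begin{bmatrix} t & 1-t \\ u & 1-u \end{bmatrix}$ has determinant $t-u \neq 0$. Consequently $\phi$ is injective wherever it is defined, and — since $\vec{s}$ is given as a well-defined vector — $u v_i + (1-u) \neq 0$ for every $i$. The plan is to push ``richness'' of the entries of $\vec{v}$ through $\phi$ to $\vec{s}$, and then read off both claims as statements about linear independence / Gram determinants.

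For claim~1 I would first observe that $\vec{v}$ has at least three distinct entries: if its entries took only two values $\alpha \neq \beta$ then $(v_i-\alpha)(v_i-\beta)=0$ for all $i$, i.e.\ $\vec{v}\odot\vec{v} = (\alpha+\beta)\vec{v} - \alpha\beta\vec{1}$, contradicting the assumed independence of $\vec{1},\vec{v},\vec{v}\odot\vec{v}$ (and they cannot all coincide, for the same reason). Injectivity of $\phi$ then forces $\vec{s}$ to have at least three distinct entries as well. Now if $a\vec{1} + b\vec{s} + d\,(\vec{s}\odot\vec{s}) = \vec{0}$, every entry $s_i$ is a root of $q(x) = dx^2 + bx + a$; a nonzero polynomial of degree $\le 2$ has at most two roots, so $q$ must be the zero polynomial and $a=b=d=0$. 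Hence $\vec{1},\vec{s},\vec{s}\odot\vec{s}$ are linearly independent.

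For claim~2, note that $K_1(\vec{s}) = \|\vec{s}\|^2\|\vec{1}\|^2 - (\vec{s}^\trans\vec{1})^2$ is precisely the Gram determinant of $\{\vec{s},\vec{1}\}$ for the Euclidean inner product, so it is nonnegative and vanishes only when $\vec{s}$ is a scalar multiple of $\vec{1}$; claim~1 rules this out, giving $K_1(\vec{s})>0$. For $K_2$ I would rewrite
\begin{equation*}
K_2(\vec{s}) = (\vec{s}^\trans(\vec{s}\odot\vec{s}))(\vec{s}^\trans\vec{1}) - (\vec{s}^\trans\vec{s})^2 = \det\begin{bmatrix} \sum_i s_i & \sum_i s_i^2 \\ \sum_i s_i^2 & \sum_i s_i^3 \end{bmatrix},
\end{equation*}
and recognize the right-hand matrix as the Gram matrix of $\{\vec{1},\vec{s}\}$ for the weighted bilinear form $\langle\vec{x},\vec{y}\rangle_{\vec{s}} := \sum_i s_i x_i y_i$. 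In the setting where this lemma is applied $\vec{s}$ is a legitimate foreground-signature candidate, so $t\vec{v}+(1-t)\vec{1}$ and $u\vec{v}+(1-u)\vec{1}$ are strictly positive and $\vec{s}>0$; then $\langle\cdot,\cdot\rangle_{\vec{s}}$ is a genuine inner product, its Gram determinant is nonnegative, and it vanishes only if $\vec{1},\vec{s}$ are linearly dependent — again excluded by claim~1. Hence $K_2(\vec{s})>0$.

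I expect claim~1 to be the crux — specifically the transfer step, ``$\vec{v}$ has $\ge 3$ distinct entries $\Rightarrow$ $\vec{s}$ has $\ge 3$ distinct entries'', carried out uniformly through the ``nonzero degree-$\le 2$ polynomial has $\le 2$ roots'' bound so that the degenerate subcases $d=0$ and $d=b=0$ need no separate handling. After that, claim~2 is routine; the one point to watch is that strict positivity of $K_2(\vec{s})$ genuinely relies on $\vec{s}>0$ (positive definiteness of $\langle\cdot,\cdot\rangle_{\vec{s}}$), not merely on $c>0$, so the positivity of $\vec{s}$ inherited from the solution constraints is what makes it go through.
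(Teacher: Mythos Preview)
Your proposal is correct, and your treatment of claim~2 is essentially the paper's: the paper proves $K_1>0$ by Cauchy--Schwarz on $\vec{s},\vec{1}$ (your Gram-determinant phrasing is equivalent), and proves $K_2>0$ by Cauchy--Schwarz on $\sqrt{\vec{s}},\sqrt{\vec{s}}^{\,3}$, which is exactly your weighted-inner-product Gram argument unpacked componentwise. Your observation that $K_2>0$ genuinely needs $\vec{s}>0$ from the ambient feasibility constraints, not just $c>0$, is apt; the paper's use of $\sqrt{\vec{s}}$ carries the same implicit reliance.

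Where you differ is claim~1. The paper clears denominators in $A\vec{1}+B\vec{s}+C\,\vec{s}\odot\vec{s}=0$, collects by powers of $\vec{v}$, and obtains a $3\times 3$ linear system in $(A,B,C)$ whose coefficient matrix has determinant $-c^3(t-u)^3$; nonvanishing of this determinant plus the assumed independence of $\vec{1},\vec{v},\vec{v}\odot\vec{v}$ forces $A=B=C=0$. Your route is more conceptual: you use the equivalence ``$\vec{1},\vec{w},\vec{w}\odot\vec{w}$ independent $\iff$ $\vec{w}$ has $\ge 3$ distinct entries'' in both directions, with M\"obius injectivity (determinant $t-u\neq 0$) transporting the ``$\ge 3$ distinct entries'' property from $\vec{v}$ to $\vec{s}$. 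This avoids the explicit $3\times 3$ determinant computation and makes transparent why $t-u\neq 0$ is the right nondegeneracy condition; the paper's computation, on the other hand, exhibits the exact change-of-basis matrix between the $(\vec{1},\vec{v},\vec{v}\odot\vec{v})$ and $(\vec{1},\vec{s},\vec{s}\odot\vec{s})$ frames, which is a bit more information than you need here but connects directly to the Jacobian calculation used elsewhere in the proof of Theorem~\ref{thm:solutions_to_minvol}.
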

\noindent The proof of Lemma~\ref{lem:support_vol} is given in Appendix~\ref{app:pf_support_vol}. We proceed with the proof of Theorem~\ref{thm:solutions_to_minvol}:

\begin{proof}
Let $\mathcal{S}$ be the set of all feasible solutions as given in Proposition~\ref{cor:patch_independent_solution}. For every feasible solution $\vec{s} \in \mathcal{S}$, there must exist parameters $\alpha, \beta, \gamma, \delta$ such that the following hold:
\begin{equation} \label{eq:S_in_abgd}
\begin{gathered}
    \vec{s} = (\alpha \vec{f} + \beta \vec{1}) \oslash (\gamma \vec{f} + \delta \vec{1}), \\
    \frac{\delta -\gamma r_a}{\alpha \delta - \beta\gamma} \geq 0, \enspace \frac{\delta r_b - \gamma}{\alpha \delta - \beta\gamma} \geq 0, \\
    \frac{\alpha - \beta r_b}{\alpha \delta - \beta\gamma} \geq 0, \enspace \frac{\alpha r_a - \beta}{\alpha \delta - \beta\gamma} \geq 0, \\
    \alpha \max f_i + \beta > 0, \enspace \alpha \min f_i + \beta > 0, \\
    \gamma \max f_i + \delta > 0, \enspace \gamma \min f_i + \delta > 0, \\
    \text{and} \enspace \alpha \delta - \beta \gamma \neq 0,
\end{gathered}
\end{equation}
where $r_a$ and $r_b$ are defined as in \eqref{eq:r}. The volume-minimization problem in Definition~\ref{def:minimum_volume_solution}, i.e., the volume minimization of \eqref{eq:volume_measure} may be replaced with the following maximization problem:
\begin{equation} \label{eq:min_vol_objective}
    \max_{\vec{s} \in \mathcal{S}} \quad g(\vec{s}) = \frac{(\vec{s}^\trans \vec{1})}{\|\vec{s}\|_2}.
\end{equation}

\noindent \textbf{Reparameterization:}  
Before we proceed, we would like to point out that w.l.o.g.~we make the assumption that ${\min_i f_i < 1 < \max_i f_i}$. \footnote{Note that since $\vec{f}$ is linearly independent of $\vec{1}$, it cannot be constant and hence $\min_i f_i < \max_i f_i$. To ensure that $\vec{f}$ used in the proof satisfies $\min_i f_i < 1< \max_i f_i$, a scaling can be applied to the original $\vec{f}$ so that $\vec{s}$ is defined in terms of the scaled version of $\vec{f}$ without loss of generality. \label{ft:f_straddle_one}}
Consider the following reparameterization of the problem using 
\begin{equation}
    \vec{s}=c (t \vec{f} + (1-t) \vec{1}) \oslash (u \vec{f} + (1-u) \vec{1}).
\end{equation}
Every element in $\mathcal{S}$ has a representation in $(c,t,u)$. To show this, note that every element $\vec{s} \in \mathcal{S}$ can be parameterized by $(\alpha, \beta, \gamma, \delta)$. Consider
\begin{equation}
\begin{aligned}
    \vec{s} &= (\alpha \vec{f} + \beta \vec{1}) \oslash (\gamma \vec{f} + \delta \vec{1}) \\
    &= \frac{\gamma+\delta}{\alpha+\beta} \left( \frac{\alpha}{\alpha+\beta} \vec{f} + \frac{\beta}{\alpha+\beta} \vec{1} \right) \oslash \left( \frac{\gamma}{\gamma+\delta} \vec{f} + \frac{\delta}{\gamma+\delta} \vec{1} \right).
\end{aligned}
\end{equation}
Note that dividing by $\alpha+\beta$ and $\gamma+\delta$ is always well-defined. Suppose $\alpha+\beta=0$: then $\alpha \min f_i + \beta = \alpha (\min f_i - 1) \leq 0$ (by assumption that $\min f_i < 1$). This violates the constraint in \eqref{eq:S_in_abgd}, so $\alpha+\beta$ must be non-zero. This follows similarly for $\gamma+\delta$. Finally, taking $c = \frac{\gamma+\delta}{\alpha+\beta}$, $t = \frac{\alpha}{\alpha+\beta}$, and $u = \frac{\beta}{\alpha+\beta}$, we have that $\vec{s}$ has a representation in $(c,t,u)$.

The mapping $(c,t,u) \mapsto \vec{s}$ is bijective for feasible solutions (see Appendix~\ref{app:pf_bijective_mapping}). Further, the manifold produced by this mapping is differentiable everywhere. To show this, consider the Jacobian of $\vec{s}$ with respect to the parameter vector $\theta=[c,t,u]^T$:
\begin{equation}
    \frac{d\vec{s}}{d\theta^\trans} = \frac{1}{c(t-u)} \begin{bmatrix}
        \vec{s} \odot \vec{s} & \vec{s} & \vec{1}
    \end{bmatrix} \begin{bmatrix}
        0 & 0 & -1 \\
        t-u & c & c \\
        0 & -c^2 & 0
    \end{bmatrix}.
\end{equation}
The Jacobian is well-defined within the set of feasible solutions. \footnote{Simple algebraic manipulation of the Jacobian shows that the denominator $c(t-u)$ cancels out of each term in the Jacobian. The Jacobian is undefined only if $u f_i + (1-u) = 0$ for some $i$, which violates the feasibility constraints.} The determinant of the matrix on the right is $\det = c^2(t-u)$, which is strictly non-zero for all feasible $\vec{s}$ as $c > 0$ and $t-u \neq 0$, and therefore this matrix has full rank. The columns $\vec{s} \odot \vec{s}$, $\vec{s}$, and $\vec{1}$ in the matrix on the left are linearly independent, so this matrix also has full rank. Then the Jacobian has full rank in the set of feasible solutions. Thus, the mapping $(c,t,u) \mapsto \vec{s}$ is differentiable everywhere within the feasibility set of the maximization problem. In summary, the mapping $(c,t,u) \mapsto \vec{s}$ produces a differentiable manifold, and therefore we may recast the problem of maximization over $\vec{s}$ into a problem of maximization over $(c,t,u)$.

Using the new parameterization, the volume minimization problem can be recast as follows
\begin{equation} \label{eq:min_vol_objective:2}
\begin{aligned}
    \max_{c,t,u} & ~~g(\vec{s}) = \frac{(\vec{s}^\trans \vec{1})}{\|\vec{s}\|_2} \\
    \text{s.t.} &\quad \frac{1-u - u r_a}{c(t-u)} \geq 0, \enspace \frac{(1-u) r_b - u}{c(t-u)} \geq 0, \\
    &\quad \frac{c(t - (1-t) r_b)}{c(t-u)} \geq 0, \enspace \frac{c(t r_a - (1-t))}{c(t-u)} \geq 0, \\
    &\quad c(t \max f_i + (1-t)) > 0, \enspace c(t \min f_i + (1-t)) > 0, \\
    &\quad u \max f_i + (1-u) > 0, \enspace u \min f_i + (1-u) > 0, \\
    &\quad \text{and} \enspace c(t-u) \neq 0.
\end{aligned}
\end{equation}

\noindent \textbf{Finding local maxima:} Let $\theta = [c,t,u]^\trans$. To identify local maxima, we can consider the maximization over $c(t-u)>0$ and over $c(t-u)<0$ as two separate maximization problems and identify local maxima in each.

\noindent \underline{Case $c(t-u) > 0$}:  
If $c(t-u)>0$, then we can rewrite the problem as 
\begin{equation} \label{eq:min_vol_objective:3}
\begin{aligned}
    \max_\theta & ~~g(\vec{s}) = \frac{(\vec{s}^\trans \vec{1})}{\|\vec{s}\|_2} \\
    \text{s.t.} &\quad 1-u - u r_a \geq 0, \enspace (1-u) r_b - u \geq 0, \\
    &\quad c(t - (1-t) r_b) \geq 0, \enspace c(t r_a - (1-t)) \geq 0, \\
    &\quad c(t \max f_i + (1-t)) > 0, \enspace c(t \min f_i + (1-t)) > 0, \\
    &\quad u \max f_i + (1-u) > 0, \enspace u \min f_i + (1-u) > 0, \\
    & \quad c(t-u)>0.
\end{aligned}
\end{equation}
The constraints simplify to
\begin{equation} \label{eq:simplified_constraints}
\begin{gathered}
    c > 0,\\
     -\frac{1}{\max f_i-1} < u \leq \frac{r_b}{1+r_b}, \\
      \frac{1}{1+r_a}  \leq t  < \frac{1}{1-\min f_i}.
\end{gathered}
\end{equation}
The derivation of this simplified set of constraints is given in Appendix~\ref{app:simplified_constraints}. Note that $r_b / (1+r_b) < 1/(1+r_a)$, so $t>u$ and hence $c(t-u) > 0$ holds implicitly. An illustration of the feasibility set on $(c,t,u)$ is given in Figure~\ref{fig:constraint_and_gradient}. Specifically, the feasibility set in \eqref{eq:simplified_constraints} is the rectangle encompassing the blue arrows.

\begin{figure}[htbp]
    \centering
    \includegraphics[width=0.5\textwidth]{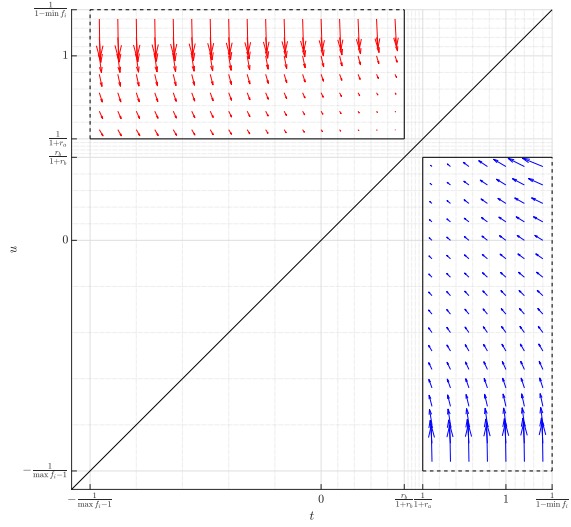}
    \caption{Visualization of feasibility sets and gradient for $\theta \in \Theta_1$ (blue arrows) and $\theta \in \Theta_2$ (red arrows) with a choice of fixed $c > 0$. the diagonal line indicates the non-feasible points with $t = u$. Note that the contiguous regions of feasible solutions lie strictly away from this line.}
    \label{fig:constraint_and_gradient}
\end{figure}

Denote the feasibility set containing all $\theta$ which satisfy the aforementioned constraints by $\Theta_1$. We can write the maximization as
\begin{equation}
\max_{\theta \in \Theta_1} J(c,t,u),
\end{equation}
where $J(c,t,u) = \frac{\vec{s}^\trans(c,t,u) \vec{1}}{\|\vec{s}(c,t,u)\|}$. \footnote{To find a local maximum, we need to find $c^*,t^*,u^*$ such that $J(c^*,t^*,u^*) \geq J(c,t,u)$ in the neighborhood $\| s(\theta^*)-s(\theta)\| \le \epsilon$ for some $\epsilon > 0$.} To show the existence of a local maximum, we consider the derivative of the objective with respect to $\theta$:
\begin{equation} \label{eq:gradient_minvol}
\begin{aligned}
    dJ/d\theta^T &= \frac{1}{(t-u)\|\vec{s}\|^3} [0,-cK_1(\vec{s}), \tfrac{1}{c} K_2(\vec{s})], \\
    K_1(\vec{s}) &= \|\vec{s}\|^2 \|\vec{1}\|^2 - (\vec{s}^\trans \vec{1})^2, \\
    K_2(\vec{s}) &= (\vec{s}^\trans (\vec{s} \odot \vec{s}))(\vec{s}^\trans \vec{1}) - \|\vec{s}\|^4.
\end{aligned}
\end{equation}
The computation of this derivative is given in Appendix~\ref{app:derivative}. Note that $c > 0$ and $c(t-u) > 0$ for any $\theta \in \Theta_1$, and therefore $t-u > 0$. Given that $t-u \neq 0$, it follows from Lemma~\ref{lem:support_vol} that $K_1(\vec{s})$ and $K_2(\vec{s})$ are both strictly positive. Based on the derivative, the objective is constant along $c$, monotonically decreasing along $t$, and monotonically increasing along $u$. Our feasibility set is defined by box constraints on $t$ and $u$, so a local maximum is obtained for any $c^*>0$ at the lower bound for $t$ and the upper bound for $u$. At any other point, it is possible to either move in decreasing $t$ or increasing $u$ and further increase the objective. Thus, a local maximum only exists at $(c,\frac{1}{1+r_a},\frac{r_b}{1+r_b})$ for any $c>0$. Substituting the parameters for a local maximum yields 
\begin{equation}
\vec{s}=\frac{c(1+r_b)}{1+r_a} ( \vec{f} + r_a \vec{1}) \oslash (r_b \vec{f} +  \vec{1}) = c' \vec{f}_0.
\end{equation}

\noindent \underline{Case $c(t-u) < 0$}: 
For the case of $c(t-u)<0$, we can rewrite (\ref{eq:min_vol_objective}) as
\begin{equation} \label{eq:min_vol_objective:4}
\begin{aligned}
    \max_\theta & ~~g(\vec{s}) = \frac{(\vec{s}^\trans \vec{1})}{\|\vec{s}\|_2} \\
    \text{s.t.} &\quad 1-u - u r_a \leq 0, \enspace (1-u) r_b - u \leq 0, \\
    &\quad c(t - (1-t) r_b) \leq 0, \enspace c(t r_a - (1-t)) \leq 0, \\
    &\quad c(t \max f_i + (1-t)) > 0, \enspace c(t \min f_i + (1-t)) > 0, \\
    &\quad u \max f_i + (1-u) > 0, \enspace u \min f_i + (1-u) > 0 \\
    & \quad c(t-u)<0
\end{aligned}
\end{equation}
Simplified constraints may be obtained similarly to the previous case, and are
\begin{equation} 
\begin{aligned}
    & c>0 ,\\
      \frac{1}{1+r_a}  \leq & u < \frac{1}{1-\min f_i}, \\
    -\frac{1}{\max f_i-1} < & t \leq \frac{r_b}{1+r_b} .
\end{aligned}
\end{equation}
An illustration of the feasibility set is given in Figure~\ref{fig:constraint_and_gradient} as the rectangle encompassing the red arrows. Note that the resulting set of constraints ensures $c(t-u)<0$. Denote the feasibility set containing all $\theta$ which satisfy the aforementioned constraints by $\Theta_2$. Note that $c > 0$ and $c(t-u) < 0$ for any $\theta \in \Theta_2$, and therefore $t-u < 0$. Following the approach used for $\Theta_1$, we have that the derivative is now (i.e., for $\Theta_2$) constant in $c$, increasing in $t$, and decreasing in $u$. Hence, a local maximum exists at every $c>0$, the right most point for $t$, and at the left most point for $u$, i.e., at $(c,\frac{r_b}{r_b+1},\frac{1}{r_a+1})$ for any $c>0$. Substituting the parameters for a local maximum yields\begin{equation}
\vec{s}=\frac{c(1+r_a)}{1+r_b} ( r_b \vec{f} +  \vec{1}) \oslash ( \vec{f} + r_a \vec{1}) = c' \vec{f}_0^{-1}.
\end{equation}
This completes the proof.
\end{proof}

\subsection{Proof of Theorem~\ref{thm:solutions_to_endfit}}
\label{app:pf_solutions_to_endfit}

\begin{proof}
Let $\mathcal{Y} = \{ \mat{Y}^{(1)}, \ldots, \mat{Y}^{(K)} \}$ be a set of patches following the bag-of-patches model in \eqref{eq:bag_of_patches_model} with a true foreground material signature $\vec{f} \in \mathds{R}_{++}^M$, and assume $\vec{1}, \vec{f}, \vec{f} \odot \vec{f}$ are linearly independent. We will begin by showing that any endpoint fit solution $\vec{f^*}$ satisfies $\vec{f^*} = c \vec{f_0}$ or $\vec{f^*} = c \vec{1} \oslash \vec{f_0}$. 

\vspace{\baselineskip}
\noindent \textbf{Forward direction:} Let $\vec{f^*}$ be an endpoint fit solution. According to Property~\ref{prop:space_of_solutions}, every coefficient matrix of a feasible solution has the form
\begin{equation}
    \mat{\tilde{C}}^{(k)} = \frac{1}{\epsilon_k} \begin{bmatrix}
        \alpha & \gamma \\
        \beta & \delta
    \end{bmatrix}^{-1} \mat{C}^{(k)}
\end{equation}
where $\epsilon_k > 0$ and $\alpha \delta - \beta \gamma \neq 0$. The $i$th column of the $k$th coefficient matrix is therefore
\begin{equation}
    \vec{\tilde{c}}_i^{(k)} = \frac{1}{\epsilon_k (\alpha \delta - \beta \gamma)} \begin{bmatrix}
            \delta c_{1i}^{(k)} - \gamma c_{2i}^{(k)} \\
            \alpha c_{2i}^{(k)} - \beta c_{1i}^{(k)}
    \end{bmatrix}.
\end{equation}
Note that the coefficient matrices $\mat{\tilde{C}}^{(k)}$ for $k = 1, \ldots, K$ must be non-negative (see Definition~\ref{def:solution}). Recall also that $\epsilon_k > 0$. Then for all $i$ and $k$, the following inequalities must hold:
\begin{equation} \label{eq:endpoint_inequalities}
\begin{aligned}
    \frac{\delta c_{1i}^{(k)} - \gamma c_{2i}^{(k)}}{\alpha \delta - \beta \gamma} &\geq 0, \\
    \frac{\alpha c_{2i}^{(k)} - \beta c_{1i}^{(k)}}{\alpha \delta - \beta \gamma} &\geq 0.
\end{aligned}
\end{equation}
Recall the definition of an endpoint fit solution: there exists indices $k_1$ and $k_2$ such that $\mat{\tilde{C}}^{(k_1)}$ contains a column $[x, 0]^\trans$ and $\mat{\tilde{C}}^{(k_2)}$ contains a column $[0, y]^\trans$ where $x,y > 0$. Thus, an endpoint fit solution must satisfy the inequalities in \eqref{eq:endpoint_inequalities} for every column of every coefficient matrix, and must satisfy the two equality conditions for some column(s). The set of endpoint fit solutions is therefore all choices of $\alpha, \beta, \gamma, \delta$ such that the following hold:
\begin{equation} \label{eq:set_of_epf_sol}
\begin{gathered}
    \frac{\delta c_{1i}^{(k)} - \gamma c_{2i}^{(k)}}{\alpha \delta - \beta \gamma} \geq 0 \quad \text{and} \quad \frac{\alpha c_{2i}^{(k)} - \beta c_{1i}^{(k)}}{\alpha \delta - \beta \gamma} \geq 0, \enspace \forall i,k, \\
    \exists i_1, k_1 \enspace \text{s.t.} \enspace \delta c_{1i_1}^{(k_1)} - \gamma c_{2i_1}^{(k_1)} = 0, \\
    \exists i_2, k_2 \enspace \text{s.t.} \enspace \alpha c_{2i_2}^{(k_2)} - \beta c_{1i_2}^{(k_2)} = 0.
\end{gathered}
\end{equation}

\noindent \underline{Case of $\alpha \delta - \beta \gamma > 0$}: Assume $\alpha \delta - \beta \gamma > 0$. The set defined by \eqref{eq:set_of_epf_sol} simplifies to
\begin{equation} \label{eq:set_of_epf_sol_pos}
\begin{gathered}
    \alpha c_{2i}^{(k)} - \beta c_{1i}^{(k)} \geq 0 \quad \text{and} \quad \delta c_{1i}^{(k)} - \gamma c_{2i}^{(k)} \geq 0, \enspace \forall i,k, \\
    \exists i_1, k_1 \enspace \text{s.t.} \enspace \alpha c_{2i_1}^{(k_1)} - \beta c_{1i_1}^{(k_1)} = 0, \\
    \exists i_2, k_2 \enspace \text{s.t.} \enspace \delta c_{1i_2}^{(k_2)} - \gamma c_{2i_2}^{(k_2)} = 0.
\end{gathered}
\end{equation}
Let $\mathcal{S}_1 = \{(i,k) \,|\, c_{1i}^{(k)} \neq 0\}$, and let $\mathcal{S}_1' = \{(i,k) \,|\, c_{1i}^{(k)} = 0\}$. The sets $\mathcal{S}_1$ and $\mathcal{S}_1'$ partition the set of all indices. The requirement that each $\mat{Y}^{(k)}$ is rank 2 implies that each $\mat{C}^{(k)}$ is rank 2, and therefore $\mathcal{S}_1$ is non-empty. \footnote{If $\mathcal{S}_1$ is empty, then $c_{1i}^{(k)} = 0$ and therefore every $\mat{C}^{(k)}$ is rank 1, which is a contradiction.} Further, the restriction that no column of $\mathcal{Y}^{(k)}$ for all $k$ is a zero-valued column implies that $c_{1i}^{(k)} \neq 0$ for all $(i,k) \in \mathcal{S}_1'$.  Consider the constraints on $\alpha$ and $\beta$, partitioned by $\mathcal{S}_1$ and $\mathcal{S}_1'$:
\begin{equation}
\begin{gathered}
    \beta \leq \alpha \tfrac{c_{2i}^{(k)}}{c_{1i}^{(k)}}, \enspace \forall (i,k) \in \mathcal{S}_1, \\
    0 \leq \alpha, \enspace \forall (i,k) \in \mathcal{S}_1', \\
    \exists (i_1, k_1) \in \mathcal{S}_1 \enspace \text{s.t.} \enspace \beta = \alpha \tfrac{c_{2i_2}^{(k_2)}}{c_{1i_2}^{(k_2)}} \enspace \text{or} \enspace \exists (i_1, k_1) \in \mathcal{S}_1' \enspace \text{s.t.} \enspace \alpha = 0.
\end{gathered}
\end{equation}
We may discard the case of $\exists (i_1, k_1) \in \mathcal{S}_1'$ s.t. $\alpha = 0$, as this implies $\beta \leq 0$ and therefore $\alpha \vec{f} + \beta \vec{1} \leq 0$, which contradicts \ref{enum:prop1_cond2} in Proposition~\ref{prop:feasible_space_solution}. Then we are left with the case of $\exists (i_1, k_1) \in \mathcal{S}_1$ s.t. $\beta = \alpha \tfrac{c_{2i_2}^{(k_2)}}{c_{1i_2}^{(k_2)}}$. For $\beta$ to be both a lower bound of the set $\{\alpha c_{2i}^{(k)} / c_{1i}^{(k)}\}_{\forall (i,k) \in \mathcal{S}_1}$ and equal to an element of the set, it must hold that
\begin{equation}
    \beta = \min_{i,k} \alpha \tfrac{c_{2i}^{(k)}}{c_{1i}^{(k)}},
\end{equation}
where $\tfrac{c_{2i}^{(k)}}{c_{1i}^{(k)}}$ is defined as $\infty$ if $c_{1i}^{(k)} = 0$. Note that if $\alpha = 0$, then $\beta = 0$ and therefore $\alpha \delta - \beta \gamma = 0$, which is a contradiction. Also, if $\alpha < 0$ then $\beta \leq 0$, and therefore $\alpha \vec{f} + \beta \vec{1} < 0$, which contradicts \ref{enum:prop1_cond2} in Proposition~\ref{prop:feasible_space_solution}. Then it must hold that $\alpha > 0$, and therefore an endpoint satisfies $\alpha = \beta \min_{i,k} \tfrac{c_{2i}^{(k)}}{c_{1i}^{(k)}} = \delta r_a$ (see \eqref{eq:r}).

Now, we will consider the constraints on $\gamma$ and $\delta$. Let $\mathcal{S}_2 = \{(i,k) \,|\, c_{2i}^{(k)} \neq 0\}$, and let $\mathcal{S}_2' = \{(i,k) \,|\, c_{2i}^{(k)} = 0\}$. The sets $\mathcal{S}_2$ and $\mathcal{S}_2'$ partition the set of all indices. Note that $\mathcal{S}_2$ is non-empty, and $c_{2i}^{(k)} \neq 0$ for all $(i,k) \in \mathcal{S}_2'$, by similar reasoning as above. The constraints on $\gamma$ and $\delta$, partitioned by $\mathcal{S}_1$ and $\mathcal{S}_1'$, may be expressed as
\begin{equation}
\begin{gathered}
    \gamma \leq \delta \tfrac{c_{1i}^{(k)}}{c_{2i}^{(k)}}, \enspace \forall (i,k) \in \mathcal{S}_2, \\
    0 \leq \delta, \enspace \forall (i,k) \in \mathcal{S}_2', \\
    \exists (i_2, k_2) \in \mathcal{S}_2 \enspace \text{s.t.} \enspace \gamma = \delta \tfrac{c_{1i_1}^{(k_1)}}{c_{2i_1}^{(k_1)}} \enspace \text{or} \enspace \exists (i_2, k_2) \in \mathcal{S}_2' \enspace \text{s.t.} \enspace \delta = 0.
\end{gathered}
\end{equation}
We may discard the case of $\exists (i_2, k_2) \in \mathcal{S}_2'$ s.t. $\delta = 0$, as this implies $\gamma \leq 0$ and therefore $\gamma \vec{f} + \delta \vec{1} \leq 0$, which contradicts \ref{enum:prop1_cond3} in Proposition~\ref{prop:feasible_space_solution}. Then we are left with the case of $\exists (i_2, k_2) \in \mathcal{S}_2$ s.t. $\gamma = \delta \tfrac{c_{1i_2}^{(k_2)}}{c_{2i_2}^{(k_2)}}$. For $\gamma$ to be both a lower bound of the set $\{\delta c_{1i}^{(k)} / c_{2i}^{(k)}\}_{\forall (i,k) \in \mathcal{S}_2}$ and equal to an element of the set, it must hold that
\begin{equation}
    \gamma = \min_{i,k} \delta \tfrac{c_{1i}^{(k)}}{c_{2i}^{(k)}},
\end{equation}
where $\tfrac{c_{1i}^{(k)}}{c_{2i}^{(k)}}$ is defined as $\infty$ if $c_{2i}^{(k)} = 0$. Note that if $\delta = 0$, then $\gamma = 0$ and therefore $\alpha \delta - \beta \gamma = 0$, which is a contradiction. Also, if $\delta < 0$ then $\gamma \leq 0$, and therefore $\gamma \vec{f} + \delta \vec{1} < 0$, which contradicts \ref{enum:prop1_cond3} in Proposition~\ref{prop:feasible_space_solution}. Then it must hold that $\delta > 0$. Therefore, an endpoint satisfies $\gamma = \delta \min_{i,k} \tfrac{c_{1i}^{(k)}}{c_{2i}^{(k)}} = \delta r_b$ (see \eqref{eq:r}).

Note that for $\beta = \alpha r_a$ and $\gamma = \delta r_b$, it holds that $\alpha \delta - \beta \gamma = \alpha \delta (1 - r_a r_b) > 0$. Finally, we have
\begin{equation}
    \vec{f^*} = (\alpha \vec{f} + \beta \vec{1}) \oslash (\beta \vec{f} + \gamma \vec{1}) = \frac{\alpha}{\delta} (\vec{f} + r_a \vec{1}) \oslash (r_b \vec{f} + \vec{1}) = c \vec{f_0}.
\end{equation}

\noindent \underline{Case of $\alpha \delta - \beta \gamma < 0$}: Assume $\alpha \delta - \beta \gamma < 0$. The set defined by \eqref{eq:set_of_epf_sol} simplifies to
\begin{equation} \label{eq:set_of_epf_sol_neg}
\begin{gathered}
     \beta c_{1i}^{(k)} - \alpha c_{2i}^{(k)} \geq 0 \quad \text{and} \quad \gamma c_{2i}^{(k)} - \delta c_{1i}^{(k)} \geq 0, \enspace \forall i,k, \\
    \exists i_1, k_1 \enspace \text{s.t.} \enspace \beta c_{1i_1}^{(k_1)} - \alpha c_{2i_1}^{(k_1)} = 0, \\
    \exists i_2, k_2 \enspace \text{s.t.} \enspace \gamma c_{2i_2}^{(k_2)} - \delta c_{1i_2}^{(k_2)} = 0.
\end{gathered}
\end{equation}
Arguing by symmetry, an endpoint satisfies $\alpha = \beta \min_{i,k} \frac{c_{1i}^{(k)}}{c_{2i}^{(k)}} = \beta r_b$ and $\delta = \gamma \min_{i,k} \frac{c_{2i}^{(k)}}{c_{1i}^{(k)}} = \gamma r_a$. Note that for $\alpha = \beta r_b$ and $\delta = \gamma r_a$, it holds that $\alpha \delta - \beta \gamma = \beta \gamma (r_a r_b - 1) < 0$. Finally, we have
\begin{equation}
    \vec{f^*} = (\alpha \vec{f} + \beta \vec{1}) \oslash (\beta \vec{f} + \gamma \vec{1}) = \frac{\beta}{\gamma} (r_b\vec{f} + \vec{1}) \oslash (\vec{f} + r_a\vec{1}) = c \vec{1} \oslash \vec{f_0}.
\end{equation}

\noindent \textbf{Backward direction:} Let $\vec{f^*} = c \vec{f_0} = (c \vec{f} + cr_a \vec{1}) \oslash (r_b \vec{f} + \vec{1})$. We will use Property~\ref{prop:space_of_solutions} to show that such a solution is an endpoint fit solution. A matching parameterization is $(\alpha, \beta, \gamma, \delta) = (c, cr_a, r_b, 1)$. From Property~\ref{prop:space_of_solutions}, the $i$th column of the $k$th coefficient matrix is
\begin{equation}
    \vec{\tilde{c}}_i^{(k)} = \frac{1}{\epsilon_k c(1-r_a r_b)} \begin{bmatrix}
            c_{1i}^{(k)} - r_b c_{2i}^{(k)} \\
            c c_{2i}^{(k)} - cr_a c_{1i}^{(k)}
    \end{bmatrix},
\end{equation}
where $\epsilon_k > 0$. Recall that $r_a := \min_{i,k} \frac{c_{2i}^{(k)}}{c_{1i}^{(k)}}$ and $r_b := \min_{i,k} \frac{c_{1i}^{(k)}}{c_{2i}^{(k)}}$. Let $(i_1, k_1) = \arg \min_{i,k} \frac{c_{2i}^{(k)}}{c_{1i}^{(k)}}$. Then
\begin{equation}
    c c_{2i_1}^{(k_1)} - cr_a c_{1i_1}^{(k_1)} = c (c_{2i_1}^{(k_1)} - \tfrac{c_{2i_1}^{(k_1)}}{c_{1i_1}^{(k_1)}} c_{1i_1}^{(k_1)}) = 0.
\end{equation}
Letting $(i_2, k_2) = \arg \min_{i,k} \frac{c_{1i}^{(k)}}{c_{2i}^{(k)}}$, it holds that
\begin{equation}
    c_{1i_2}^{(k_2)} - r_b c_{2i_2}^{(k_2)} = c_{1i_2}^{(k_2)} - \tfrac{c_{1i_2}^{(k_2)}}{c_{2i_2}^{(k_2)}} c_{2i_2}^{(k_2)} = 0.
\end{equation}
Thus, for solutions of the form $\vec{f^*} = c \vec{f_0}$, the conditions for an endpoint fit solution are satisfied. The argument for solutions of the form $\vec{f^*} = c \vec{1} \oslash \vec{f_0}$ follows similarly.
\end{proof}

\subsection{Proof of Lemma~\ref{lem:alternate_coefficient_constraint}}
\label{app:pf_alternate_coefficient_constraint}

\begin{proof}
%\hl{RR: overall, I recommend to remove any instance of it holds that is not backed up by proof.For example, you have ;'it holds' before (47). The goal is to prove (47) not to state that it holds. Instance, I would say since multiplying $T^{-1}c_i \ge 0 $ by a positive scalar does not change the inequality , we have $T^{-1}\bar{c}_i \ge 0 $, likewise since since multiplying $T^{-1}\bar{c}_i \ge 0 $ by ... hence it holds that}
Let $\mat{C}^{(k)} \in \mathds{R}_+^{2 \times N_k}$ for $k = 1, 2, \ldots, K$ be elementwise non-negative full row-rank matrices such that no column is equal to the zero vector. Let $\mat{C} = \begin{bmatrix} \mat{C}^{(1)} & \cdots & \mat{C}^{(K)} \end{bmatrix}$. For any invertible matrix $\mat{T} \in \mathds{R}^{2 \times 2}$, we have $\mat{T}^{-1} \mat{C}= \mat{T}^{-1} \begin{bmatrix} \mat{C}^{(1)} & \cdots & \mat{C}^{(K)} \end{bmatrix} =  \begin{bmatrix} \mat{T}^{-1} \mat{C}^{(1)} & \cdots & \mat{T}^{-1} \mat{C}^{(K)} \end{bmatrix}. $ Hence, if the matrix on the left $\mat{T}^{-1} \mat{C}$ is elementwise non-negative the matrix on the right is elementwise non-negative  or equivalently its submatrices $\mat{T}^{-1} \mat{C}^{(k)}$ for $k=1,2,\ldots,K$ are elementwise non-negative. Similarly, if we define the $i$th column of $\mat{C}$ as $\vec{c}_i$, then $\mat{C}=\begin{bmatrix} \vec{c}_1 & \cdots & \vec{c}_n \end{bmatrix} $ and consequently $\mat{T}^{-1} \mat{C}=  \begin{bmatrix} \mat{T}^{-1} \vec{c}_{1} & \cdots & \mat{T}^{-1} \vec{c}_{n} \end{bmatrix}.$ Elementwise non-negativity of the LHS implies the elementwise non-negativity of the RHS and vice versa, i.e., 
$\mat{T}^{-1} \mat{C} \geq 0$ if and only if  $\mat{T}^{-1} \vec{c}_i \geq 0, \forall i$. Thus,
\begin{equation}\label{eq:46}
    \mat{T}^{-1} \mat{C}^{(k)} \geq 0, \forall k \iff \mat{T}^{-1} \vec{c}_i \geq 0, \forall i.
\end{equation}
Consider the inequality on the RHS of (\ref{eq:46}). Since multiplication by a non-negative constant preserves the inequality, we have $\mat{T}^{-1} \vec{c}_i \ge 0 \iff \gamma_i \mat{T}^{-1} \vec{c}_i  \ge 0 \iff  \mat{T}^{-1}  \gamma_i \vec{c}_i \ge 0$ for all $i$, where $\gamma_i > 0$.
Using that no column is equal to the zero vector, we can define a scaling term $\gamma_i = 1/(c_{1i} + c_{2i})$ for the $i$th column for all $i$.  Using this choice of $\gamma_i$, we can introduce the scaled $i$ column $\vec{\bar{c}}_i=\gamma_i \vec{{c}}_i= [\tfrac{c_{1i}}{c_{1i}+c_{2i}},\tfrac{c_{2i}}{c_{1i}+c_{2i}} ]^T$. Therefore, we have
\begin{equation} \label{eq:lemma_1_ineq}
    \mat{T}^{-1} \vec{c}_i \geq 0, \forall i \iff \mat{T}^{-1} \vec{\bar{c}}_i %= \mat{T}^{-1} \begin{bmatrix} \tfrac{c_{1i}}{c_{1i}+c_{2i}} \\ \tfrac{c_{2i}}{c_{1i}+c_{2i}} \end{bmatrix} 
    \geq 0, \forall i.
\end{equation}
% \begin{equation}
%     \mat{T}^{-1} \vec{\bar{c}}_i = \mat{T}^{-1} \begin{bmatrix} \tfrac{c_{1i}}{c_{1i}+c_{2i}} \\ \tfrac{c_{2i}}{c_{1i}+c_{2i}} \end{bmatrix} \geq 0, \forall i.
% \end{equation}
Define $\alpha_i = c_{1i}/(c_{1i}+c_{2i})$; the scaled $i$th column may be expressed as $\vec{\bar{c}}_i = [\alpha_i, 1-\alpha_i]^\trans$. Let $a = \arg \max_i \alpha_i$ and $b = \arg \min_i \alpha_i$. Let $\bar{\alpha}_i = \tfrac{\alpha_i - \alpha_b}{\alpha_a - \alpha_b}$. Since every $\alpha_i \in [\alpha_a, \alpha_b]$, it holds that $\bar{\alpha}_i \in [0, 1]$. Using the definition of $\bar{\alpha}_i$, we may express each $\vec{\bar{c}}_i$ as
\begin{equation}
\begin{aligned}
    \vec{\bar{c}}_i = \begin{bmatrix}
        \alpha_i \\
        1 - \alpha_i
    \end{bmatrix} &= \bar{\alpha}_i \begin{bmatrix} \alpha_a \\ 1 - \alpha_a \end{bmatrix} + (1-\bar{\alpha}_i) \begin{bmatrix} \alpha_b \\ 1 - \alpha_b \end{bmatrix} \\
    &= \bar{\alpha}_i \vec{\bar{c}}_a + (1-\bar{\alpha}_i) \vec{\bar{c}}_b,
\end{aligned}
\end{equation}
%It is clear that each $\vec{\bar{c}}_i$ can be expressed
i.e., as a convex combination of the vectors $\vec{\bar{c}}_a$ and $\vec{\bar{c}}_b$. Substituting $\bar{\vec{c}}_i$ into ... the RHS of \eqref{eq:lemma_1_ineq}, yields $\mat{T}^{-1} \vec{\bar{c}}_i = \bar{\alpha}_i \mat{T}^{-1} \vec{\bar{c}}_a + (1 - \bar{\alpha}_i) \mat{T}^{-1} \vec{\bar{c}}_b \geq 0, \forall i$. With $\bar{\alpha}_i$ and $1-\bar{\alpha}_i$ non-negative, the inequality holds if $\mat{T}^{-1} \vec{\bar{c}}_a \geq 0$ and $\mat{T}^{-1} \vec{\bar{c}}_b \geq 0$. Also, if $\mat{T}^{-1} \vec{\bar{c}}_i \geq 0$ for all $i$, then it holds for $i=a$ and $i=b$. Thus,
\begin{equation}
    \mat{T}^{-1} \vec{\bar{c}}_i \geq 0, \forall i
    \iff \mat{T}^{-1} \vec{\bar{c}}_a \geq 0 \enspace \text{and} \enspace \mat{T}^{-1} \vec{\bar{c}}_b \geq 0.
\end{equation}
Again, this holds for arbitrary positive scaling of each $\vec{\bar{c}}_i$. Note that $\alpha_a > 0$ from $\alpha_a \geq \alpha_i, \forall i$ and $\mat{C}$ having full row-rank. Similarly, $1-\alpha_b > 0$. Consider scaling $\vec{\bar{c}}_a$ by $\tfrac{1}{\alpha_a}$ and $\vec{\bar{c}}_b$ by $\tfrac{1}{1-\alpha_b}$. Then
\begin{equation}
\begin{aligned}
    \mat{T}^{-1} \vec{\bar{c}}_a \geq 0 &\iff \mat{T}^{-1} \begin{bmatrix}
        1 \\
        \tfrac{1-\alpha_a}{\alpha_a}
    \end{bmatrix} = \mat{T}^{-1} \begin{bmatrix}
        1 \\
        \tfrac{c_{2a}}{c_{1a}}
    \end{bmatrix} \geq 0, \\
    \mat{T}^{-1} \vec{\bar{c}}_b \geq 0 &\iff \mat{T}^{-1} \begin{bmatrix}
        \tfrac{\alpha_b}{1-\alpha_b} \\
        1
    \end{bmatrix} = \mat{T}^{-1} \begin{bmatrix}
        \tfrac{c_{1b}}{c_{2b}} \\
        1
    \end{bmatrix} \geq 0.
\end{aligned}
\end{equation}
Note that
\begin{equation} \label{eq:lem1_r_ordering}
\begin{aligned}
    \alpha_a \geq \alpha_i, \forall i &\iff \tfrac{c_{2a}}{c_{1a}} = \tfrac{1-\alpha_a}{\alpha_a} \leq \tfrac{1-\alpha_i}{\alpha_i} = \tfrac{c_{2i}}{c_{1i}}, \forall i, \\
    \alpha_b \leq \alpha_i, \forall i &\iff \tfrac{c_{1b}}{c_{2b}} = \tfrac{\alpha_b}{1-\alpha_b} \leq \tfrac{\alpha_i}{1-\alpha_i} = \tfrac{c_{1i}}{c_{2i}}, \forall i.
\end{aligned}
\end{equation}
Define $r' = \min_i \tfrac{c_{2i}}{c_{1i}}$ and $r'' = \min_i \tfrac{c_{1i}}{c_{2i}}$. From \eqref{eq:lem1_r_ordering} it holds that $r' = \tfrac{c_{2a}}{c_{1a}}$ and $r'' = \tfrac{c_{1b}}{c_{2b}}$. Thus,
\begin{equation}
    \mat{T}^{-1} \mat{C}^{(k)} \geq 0, \forall k \iff \mat{T}^{-1} \begin{bmatrix}
        1 \\ r'
    \end{bmatrix} \geq 0 \enspace \text{and} \enspace \mat{T}^{-1} \begin{bmatrix}
        r'' \\ 1
    \end{bmatrix} \geq 0.
\end{equation}
\end{proof}

\subsection{Proof of Lemma~\ref{lem:endpoint_fit}}
\label{app:pf_endpoint_fit}

\begin{proof}
For a matrix $\mat{A}$ and vectors $\vec{u}$, $\vec{v}$, define ${\mat{A} \subset \cone(\{\vec{u}, \vec{v}\})}$ to mean that the columns of $\mat{A}$ lie in the convex cone defined by the vectors $\vec{u}$ and $\vec{v}$.

Assume the conditions of the lemma hold. Let ${\mat{\tilde{Y}} = \begin{bmatrix} \diag(\vec{d}_1) \mat{Y}^{(1)} & \cdots & \diag(\vec{d}_K) \mat{Y}^{(K)} \end{bmatrix}}$. Note that the column space of $\mat{\tilde{Y}}$ is rank 2. Given that $\vec{\tilde{y}_i}$ and $\vec{\tilde{y}_j}$ are columns of $\mat{\tilde{Y}}$, and $\vec{\tilde{y}_i}$ and $\vec{\tilde{y}_j}$ have the minimum cosine similarity among columns of $\mat{\tilde{Y}}$, it follows that ${\mat{\tilde{Y}} \subset \cone(\{\vec{\tilde{y}_i}, \vec{\tilde{y}_j}\})}$ (see Appendix~\ref{app:min_cos_similarity}), i.e., columns of $\mat{\tilde{Y}}$ can be written as a nonnegative combination of $\vec{\tilde{y_i}}$ and $\vec{\tilde{y_j}}$. Then we may write
\begin{equation}
    \mat{\tilde{Y}} = \begin{bmatrix}
        \vec{\tilde{y}_i} & \vec{\tilde{y}_j}
    \end{bmatrix} \mat{\tilde{H}} = \diag(\vec{\tilde{y}_j}) \begin{bmatrix}
        \vec{\tilde{y}_i} \oslash \vec{\tilde{y}_j} & \vec{1}
    \end{bmatrix} \mat{\tilde{H}},
\end{equation}
where $\mat{\tilde{H}} \geq 0$. Given that $\vec{\tilde{y}_i}$ and $\vec{\tilde{y}_j}$ are columns of $\mat{\tilde{Y}}$, and must be linearly independent by $\mat{\tilde{Y}}$ rank 2, then the $i$th column of $\mat{\tilde{H}}$ is $[1, 0]^\trans$ and similarly the $j$th column is $[0, 1]^\trans$. Finally, using the partitioning ${\mat{\tilde{H}} = \begin{bmatrix} \mat{\tilde{C}}^{(1)} & \cdots & \mat{\tilde{C}}^{(K)} \end{bmatrix}}$, grouping related submatrices, and multiplying through by each $\diag(\vec{d_k})^{-1}$, we have
\begin{equation} \label{eq:lem2_bop}
    \mat{Y}^{(k)} = \diag(\vec{\tilde{y}_j} \oslash \vec{d_k}) \begin{bmatrix}
        \vec{\tilde{y}_i} \oslash \vec{\tilde{y}_j} & \vec{1}
    \end{bmatrix} \mat{\tilde{C}}^{(k)}, \enspace k = 1, \ldots, K.
\end{equation}
Since \eqref{eq:lem2_bop} satisfies the bag-of-patches model with ${\vec{f^*} = \vec{\tilde{y}_i} \oslash \vec{\tilde{y}_j}}$, and there exists columns $[x,0]^\trans$ and $[0,y]^\trans$ with $x,y > 0$ among the columns of $\mat{\tilde{C}}^{(k)}$ for $k = 1, \ldots, K$, then $\vec{f^*}$ is an endpoint fit solution.
\end{proof}

\subsection{Proof of Lemma~\ref{lem:support_vol}}
\label{app:pf_support_vol}

\begin{proof}
We begin by proving the linear independence of $\vec{1}$, $\vec{s}$, $\vec{s} \odot \vec{s}$ when $t-u \neq 0$. This is equivalent to $A\vec{1}+B\vec{s}+C\vec{s} \odot \vec{s} = 0$ having no non-trivial solutions. Substituting the expressions for $\vec{s}$ and its powers into the equation and gathering the equations by the powers of $\vec{v}$ yields
\begin{equation}
\begin{aligned}
    0 &= A \vec{1} + B c\frac{t\vec{v}+(1-t)\vec{1}}{u\vec{v}+(1-u)\vec{1}} + C c^2\left(\frac{t\vec{v}+(1-t)\vec{1}}{u\vec{v}+(1-u)\vec{1}}\right)^2 \\
    &= A (u^2\vec{v}^2 + 2u(1-u)\vec{v} + (1-u)^2\vec{1}) \\
    &\quad + Bc (tu\vec{v}^2 + (t(1-u)+(1-t)u)\vec{v} \\
    &\quad + (1-t)(1-u)\vec{1}) + Cc^2 (t^2\vec{v}^2 + 2t(1-t)\vec{v} + (1-t)^2\vec{1}) \\
    &= (u^2A + ctuB + c^2t^2C) \vec{v}^2 \\
    &\quad + (2u(1-u)A + c(t(1-u)+(1-t)u)B + 2c^2t(1-t)C) \vec{v} \\
    &\quad + ((1-u)^2A + c(1-t)(1-u)B + c^2(1-t)^2C) \vec{1}.
\end{aligned}
\end{equation}

\begin{equation}
    \begin{bmatrix}
        u^2 & ctu & c^2t^2 \\
        2u(1-u) & ct(1-u)+c(1-t)u & 2c^2t(1-t) \\
        (1-u)^2 & c(1-t)(1-u) & c^2(1-t)^2
    \end{bmatrix} \begin{bmatrix}
        A \\
        B \\
        C
    \end{bmatrix} = \begin{bmatrix}
        0 \\
        0 \\
        0
    \end{bmatrix}.
\end{equation}
If the matrix is non-singular, then the only solution is the trivial solution. To test whether the matrix is singular we examine its determinant, which simplifies to
\begin{equation}
    \text{det} = -c^3 (t-u)^3.
\end{equation}
The determinant is zero only if $t-u = 0$, which is a contradiction. Then $\vec{1}$, $\vec{s}$, and $\vec{s} \odot \vec{s}$ are linearly independent.

Next, we prove that  $K_1(\vec{s}) = \|\vec{s}\|^2 \|\vec{1}\|^2 - (\vec{s}^\trans \vec{1})^2$ is strictly positive. By Cauchy-Schwartz (CS) inequality, we have (i) $(\vec{s}^\trans \vec{1})^2 \leq \|\vec{s}\|^2 \|\vec{1}\|^2$ and (ii) $(\vec{s}^\trans \vec{1})^2 = \|\vec{s}\|^2 \|\vec{1}\|^2$ $\iff$ $\vec{s}$ and $\vec{1}$ are linearly dependent. From (i) we have $K_1(\vec{s} \geq 0$, and from (ii) and linear independence of $\vec{s}$ and $\vec{1}$ we have $K_1(\vec{s}) \neq 0$. Thus, $K_1(\vec{s} > 0$.

Similarly, we prove that $K_2(\vec{s}) =  -(\|\vec{s}\|^4 - (\vec{s}^\trans (\vec{s} \odot \vec{s}))(\vec{s}^\trans \vec{1}))$ is strictly positive. Setting $\vec{u}=\sqrt{\vec{s}}$ and $\vec{v}=\sqrt{\vec{s}}^3$ (defined element-wise) and using CS yields (i) $((\sqrt{\vec{s}})^\trans (\sqrt{\vec{s}}^{3}))^2 \leq \|\sqrt{\vec{s}}\|^2 \|\sqrt{\vec{s}}^3\|^2$ or alternatively $(\vec{s}^\trans \vec{s})^2  \leq (\vec{s}^\trans \vec{1})(\vec{s}^\trans (\vec{s} \odot \vec{s}))$, and (ii) $(\vec{s}^\trans \vec{s})^2  = (\vec{s}^T\vec{1})(\vec{s}^\trans (\vec{s} \odot \vec{s})) $ $\iff$ $\sqrt{\vec{s}} $ and $\sqrt{\vec{s}}^3 $ are linearly independent. From (i) we have $K_2(\vec{s}) \geq 0$, and from (ii) and linear independence of $\vec{1}$ and $\vec{s}$ we have $K_2(\vec{s}) \neq 0$. Thus, $K_2(\vec{s}) > 0$.
\end{proof}

\subsection{Projection onto intersection of non-negative orthant and unit sphere surface}
\label{app:projection}

The projection onto the intersection of the non-negative orthant and the surface of the unit sphere is given by the solution to the following problem:
\begin{equation}
    \vec{x}^* = \arg \min \norm{\vec{x} - \vec{y}}^2 \quad \text{s.t.} \quad \vec{x} \geq 0, \enspace \norm{\vec{x}}^2 = 1.
\end{equation}
Define $(\vec{y})_+$ to be the vector $\vec{y}$ with negative entries replaced with zero. Let $i^* = \arg \min_i |{y_i}|$ (selecting any entry if there are multiple minima). Let $\vec{e}_i$ be the $i$th canonical vector, i.e. a vector with zero-valued entries except for the one-valued $i$th entry. The solution for the projection is
\begin{equation}
    \vec{x}^* = \begin{cases}
        (\vec{y})_+ / \norm{(\vec{y})_+}, & \text{$\vec{y}$ has a positive entry}, \\
        \vec{e}_{i^*}, & \text{$\vec{y}$ is non-positive}.
    \end{cases}
\end{equation}

\subsection{Proof of bijective mapping}
\label{app:pf_bijective_mapping}

\begin{proof}
Let the set of feasible solutions $\mathcal{S}$ be defined as in \eqref{eq:S_in_abgd}. Also, w.l.o.g. assume $\min_i f_i < 1 < \max_i f_i$ (see footnote \ref{ft:f_straddle_one}).

\vspace{0.5\baselineskip}
\noindent \textbf{Injective mapping:} We will prove by contradiction. Suppose there exists $\vec{s_1} = c_1 (t_1 \vec{f} + (1-t_1) \vec{1}) \oslash (u_1 \vec{f} + (1-u_1) \vec{1})$ and $\vec{s_2} = c_2 (t_2 \vec{f} + (1-t_2) \vec{1}) \oslash (u_2 \vec{f} + (1-u_2) \vec{1})$ such that $\vec{s_1} = \vec{s_2}$ and $(c_1,t_1,u_1) \neq (c_2,t_2,u_2)$. Then
\begin{equation} \label{eq:s1=s2}
\begin{aligned}
    \vec{s_1} = \vec{s_2} \iff & (c_1 t_1 u_2 - c_2 t_2 u_1) \vec{f} \odot \vec{f} \\
    &+ (c_1 t_1(1-u_2) + c_1(1-t_1)u_2 \\
    &\quad - c_2 t_2(1-u_1) - c_2(1-t_2)u_1) \vec{f} \\
    &+ (c_1(1-t_1)(1-u_2) - c_2(1-t_2)(1-u_1)) \vec{1} \\
    &= \vec{0}.
\end{aligned}
\end{equation}
By linear independence of $\vec{1}$, $\vec{f}$, and $\vec{f} \odot \vec{f}$, it follows that \eqref{eq:s1=s2} holds if and only if the coefficients for each term $\vec{1}$, $\vec{f}$, and $\vec{f} \odot \vec{f}$ are equal to zero. This yields the system of equations
\begin{equation}
\begin{aligned}
    c_1 t_1 u_2 &= c_2 t_2 u_1, \\
    c_1 t_1(1-u_2) + c_1(1-t_1)u_2 &= c_2 t_2(1-u_1) + c_2(1-t_2)u_1, \\
    c_1(1-t_1)(1-u_2) &= c_2(1-t_2)(1-u_1).
\end{aligned}
\end{equation}
Expansion and substitution yields
\begin{equation}
\begin{aligned}
    c_1 t_1 u_2 &= c_2 t_2 u_1, \\
    c_1 t_1 + c_1 u_2 &= c_2 t_2 + c_2 u_1 \\
    c_1 &= c_2.
\end{aligned}
\end{equation}
Note that $c_1 = c_2 \neq 0$ in order for $\vec{s_1}$ and $\vec{s_2}$ to be feasible solutions. Then we may divide everywhere to remove $c_1$ and $c_2$, yielding
\begin{equation}
\begin{aligned}
    t_1 u_2 &= t_2 u_1, \\
    t_1 + u_2 &= t_2 + u_1 \\
    c_1 &= c_2.
\end{aligned}
\end{equation}
The only solutions for the above system of equations are
\begin{equation}
\begin{aligned}
    c_1 &= c_2, \\
    t_1 &= t_2, \\
    u_1 &= u_2,
\end{aligned} \quad \text{and} \quad \begin{aligned}
    c_1 &= c_2, \\
    t_1 &= u_1, \\
    t_2 &= u_2.
\end{aligned}
\end{equation}
Note that the latter solution implies $\vec{s_1} = \vec{s_2} \propto \vec{1}$, which is not in the feasible set $\mathcal{S}$. The remaining solution implies that $\vec{s_1}$ and $\vec{s_2}$ have the same parameterization, which is a contradiction. Thus, the mapping must be injective.

\vspace{0.5\baselineskip}
\noindent \textbf{Surjective mapping:} This property was shown in the proof of Theorem~\ref{thm:solutions_to_minvol}. It is repeated for here for reference.

Recall that every element $\vec{s} \in \mathcal{S}$ can be parameterized by $(\alpha, \beta, \gamma, \delta)$. Consider
\begin{equation}
\begin{aligned}
    \vec{s} &= (\alpha \vec{f} + \beta \vec{1}) \oslash (\gamma \vec{f} + \delta \vec{1}) \\
    &= \frac{\gamma+\delta}{\alpha+\beta} \left( \frac{\alpha}{\alpha+\beta} \vec{f} + \frac{\beta}{\alpha+\beta} \vec{1} \right) \oslash \left( \frac{\gamma}{\gamma+\delta} \vec{f} + \frac{\delta}{\gamma+\delta} \vec{1} \right).
\end{aligned}
\end{equation}
Note that dividing by $\alpha+\beta$ and $\gamma+\delta$ is always well-defined. Suppose $\alpha+\beta=0$: then $\alpha \min f_i + \beta = \alpha (\min f_i - 1) \leq 0$ (by assumption that $\min f_i < 1$). This violates the constraint in \eqref{eq:S_in_abgd}, so $\alpha+\beta$ must be non-zero. This follows similarly for $\gamma+\delta$. Finally, taking $c = \frac{\gamma+\delta}{\alpha+\beta}$, $t = \frac{\alpha}{\alpha+\beta}$, and $u = \frac{\beta}{\alpha+\beta}$, we have that $\vec{s}$ has a representation in $(c,t,u)$. This holds for all $\vec{s} \in \mathcal{S}$. Thus, the mapping is surjective.

\vspace{0.5\baselineskip}
\noindent \textbf{Bijective mapping:} We have shown that the mapping $(c,t,u) \mapsto \mathcal{S}$ is injective and surjective, and therefore the mapping is bijective.
\end{proof}

\subsection{Derivation of simplified constraints in \eqref{eq:simplified_constraints}}
\label{app:simplified_constraints}

\begin{proof}
Consider the constraints
\begin{equation}
\begin{gathered}
    1-u - u r_a \geq 0, \enspace (1-u) r_b - u \geq 0, \\
    c(t - (1-t) r_b) \geq 0, \enspace c(t r_a - (1-t)) \geq 0, \\
    c(t \max f_i + (1-t)) > 0, \enspace c(t \min f_i + (1-t)) > 0, \\
    u \max f_i + (1-u) > 0, \enspace u \min f_i + (1-u) > 0, \\
    c(t-u) > 0,
\end{gathered}
\end{equation}
where $r_a$ and $r_b$ are defined as in \eqref{eq:r}. First, we reorganize the constraints to obtain bounds on $c$, $t$, and $u$:
\begin{equation}
\begin{aligned}
    1 - u - ur_a \geq 0 &\iff \frac{1}{1+r_a} \geq u, \\
    (1-u)r_b - u \geq 0 &\iff \frac{r_b}{1+r_b} \geq u, \\
    c(t - (1-t) r_b) \geq 0 &\iff ct \geq c\frac{r_b}{1+r_b} \\
    &\iff \begin{cases}
        t \geq \frac{r_b}{1+r_b}, & c > 0, \\
        t \leq \frac{r_b}{1+r_b}, & c < 0, \\
        c = 0,
    \end{cases} \\
    c(t r_a - (1-t)) \geq 0 &\iff ct \geq c \frac{1}{1+r_a} \\
    &\iff \begin{cases}
        t \geq \frac{1}{1+r_a}, & c > 0, \\
        t \leq \frac{1}{1+r_a}, & c < 0, \\
        c = 0,
    \end{cases} \\
    c(t \max f_i + (1-t)) > 0 &\iff ct > -c \frac{1}{\max_i f_i - 1} \\
    &\iff \begin{cases}
        t > -\frac{1}{\max_i f_i - 1}, & c > 0, \\
        t < \frac{1}{\max_i f_i - 1}, & c < 0,
    \end{cases} \\
    c(t \min f_i + (1-t)) > 0 &\iff ct < c \frac{1}{1 - \min_i f_i} \\
    &\iff \begin{cases}
        t < \frac{1}{1 - \min_i f_i}, & c > 0, \\
        t > \frac{1}{1 - \min_i f_i}, & c < 0,
    \end{cases} \\
    u \max f_i + (1-u) > 0 &\iff u > -\frac{1}{\max_i f_i - 1}, \\
    u \min f_i + (1-u) > 0 &\iff u < \frac{1}{1 - \min_i f_i}.
\end{aligned}
\end{equation}
Consider the constraints on $u$:
\begin{equation}
\begin{gathered}
    u \leq \frac{1}{1+r_a}, \enspace u \leq \frac{r_b}{1+r_b}, \enspace u < \frac{1}{1 - \min_i f_i}, \\
    \text{and} \enspace -\frac{1}{\max_i f_i - 1} < u.
\end{gathered}
\end{equation}
Note that $1 < \frac{1}{1-\min_i f_i}$, while $\frac{1}{1+r_a} \leq 1$ and $\frac{r_b}{1+r_b} < 1$. Further, using that $r_a < r_b^{-1}$ it follows that
\begin{equation}
    \frac{\frac{r_b}{1+r_b}}{\frac{1}{1+r_a}} = (1+r_a) \frac{r_b}{1+r_b} = \frac{1 + r_a}{1 + r_b^{-1}} < 1.
\end{equation}
Thus, $\frac{r_b}{1+r_b} < \frac{1}{1+r_a}$. It is clear that $u \leq \frac{r_b}{1+r_b}$ is the strictest upper bound, so the constraints on $u$ simplify to
\begin{equation}
    -\frac{1}{\max_i f_i - 1} < u \leq \frac{r_b}{1+r_b}.
\end{equation}
Now, suppose $c > 0$. The constraints on $t$ are
\begin{equation}
\begin{gathered}
    \frac{r_b}{1+r_b} \leq t, \enspace \frac{1}{1+r_a} \leq t, \enspace -\frac{1}{\max_i f_i - 1} < t, \\ \text{and} \enspace t < \frac{1}{1 - \min_i f_i}.
\end{gathered}
\end{equation}
Note that $\frac{r_b}{1+r_b}$ and $\frac{1}{1+r_a}$ are non-negative, while $-\frac{1}{\max_i f_i - 1}$ is strictly negative. Using that $\frac{r_b}{1+r_b} < \frac{1}{1+r_a}$, the strictest lower bound is $\frac{1}{1+r_a} \leq t$. Thus, the constraints on $t$ when $c > 0$ become
\begin{equation}
    \frac{1}{1+r_a} \leq t < \frac{1}{1-\min_i f_i}.
\end{equation}
Suppose instead that $c < 0$. The constraints on $t$ are
\begin{equation}
\begin{gathered}
    t \leq \frac{r_b}{1+r_b}, \enspace t \leq \frac{1}{1+r_a}, \enspace t < -\frac{1}{\max_i f_i - 1}, \\
    \text{and} \enspace \frac{1}{1 - \min_i f_i} < t.
\end{gathered}
\end{equation}
Recall that $\frac{r_b}{1+r_b}$ and $\frac{1}{1+r_a}$ are non-negative, while $-\frac{1}{\max_i f_i - 1}$ is strictly negative; the strictest upper bound is $t < -\frac{1}{\max_i f_i - 1}$. However, the lower bound on $t$ is $\frac{1}{1-\min_i f_i} < t$, where $\frac{1}{1-\min_i f_i}$ is positive. It cannot hold that $t$ has a strict negative upper bound and a strict positive lower bound, so the case of $c < 0$ is non-feasible. In summary, the constraints are
\begin{equation}
\begin{gathered}
    c > 0, \\
    -\frac{1}{\max_i f_i - 1} < u \leq \frac{r_b}{1+r_b}, \\
    \frac{1}{1+r_a} \leq t < \frac{1}{1-\min_i f_i}.
\end{gathered}
\end{equation}
\end{proof}

\subsection{Computation of gradient in \eqref{eq:gradient_minvol}}
\label{app:derivative}

The objective is

The derivative of the objective with respect to $\vec{s}$ is
\begin{equation}
    \frac{dJ}{d\vec{s}} = \frac{\|\vec{s}\| \vec{1} - (\vec{s}^\trans \vec{1}) \|\vec{s}\|^{-1} \vec{s}}{\vec{s}^\trans \vec{s}} = \frac{(\vec{s}^\trans \vec{s}) \vec{1} - (\vec{s}^\trans \vec{1}) \vec{s}}{\|\vec{s}\|^3}.
\end{equation}
The derivative of $\vec{s}$ with respect to each parameter in $\theta = [c, t, u]^\trans$ is
\begin{equation}
\begin{aligned}
    \frac{d\vec{s}}{dc} &= \vec{s}/c, \\
    \frac{d\vec{s}}{dt} &= c (\vec{f} - \vec{1}) \oslash (u\vec{f} + (1-u)\vec{1}) \\
    &= \frac{1}{t-u} \frac{c (t-u) (\vec{f} - \vec{1})}{u\vec{f} + (1-u)\vec{1}} \\
    &= \frac{1}{t-u} \left( \frac{c (t\vec{f} - t\vec{1})}{u\vec{f} + (1-u)\vec{1}} - \frac{c (u\vec{f} - u\vec{1})}{u\vec{f} + (1-u)\vec{1}} \right) \\
    &= \frac{1}{t-u} \left( \frac{c (t\vec{f} + (1-t)\vec{1})}{u\vec{f} + (1-u)\vec{1}} - \frac{c (u\vec{f} + (1-u)\vec{1})}{u\vec{f} + (1-u)\vec{1}} \right) \\
    &= \frac{1}{t-u} \left( \vec{s} - c \vec{1} \right), \\
    \frac{d\vec{s}}{du} &= -\vec{s} \odot \frac{\vec{f} - \vec{1}}{u\vec{f} + (1-u)\vec{1}} \\
    &= -\vec{s} \odot \tfrac{1}{c} \frac{d\vec{s}}{dt} \\
    &= -\frac{1}{t-u} \left( \tfrac{1}{c} \vec{s} \odot \vec{s} - \vec{s} \right). \\
\end{aligned}
\end{equation}
Finally, the derivative of the objective with respect to the parameter vector $\theta$ is given by
\begin{equation}
\begin{aligned}
    \frac{dJ}{d\vec{s}^\trans} \cdot \frac{d\vec{s}}{dc} &= \frac{\|\vec{s}\| (\vec{1}^\trans \vec{s}) - (\vec{s}^\trans \vec{1}) \|\vec{s}\|}{\vec{s}^\trans \vec{s}} = 0, \\
    \frac{dJ}{d\vec{s}^\trans} \cdot \frac{d\vec{s}}{dt} &= \frac{1}{t-u} \frac{(\vec{s}^\trans \vec{s}) (\vec{1}^\trans \vec{s}) - (\vec{s}^\trans \vec{1}) (\vec{s}^\trans \vec{s})}{\|\vec{s}\|^3} \\
    &\quad - \frac{c}{t-u} \frac{(\vec{s}^\trans \vec{s}) (\vec{1}^\trans \vec{1}) - (\vec{s}^\trans \vec{1}) (\vec{s}^\trans \vec{1})}{\|\vec{s}\|^3} \\
    &= -\frac{c}{(t-u)\|\vec{s}\|^3} (\|\vec{s}\|^2 \|\vec{1}\|^2 - (\vec{s}^\trans \vec{1})^2) \\
    &= -\frac{c K_1(\vec{s})}{(t-u)\|\vec{s}\|^3}, \\
    \frac{dJ}{d\vec{s}^\trans} \cdot \frac{d\vec{s}}{du} &= -\frac{1/c}{t-u} \frac{(\vec{s}^\trans \vec{s}) (\vec{1}^\trans (\vec{s} \odot \vec{s})) - (\vec{s}^\trans \vec{1}) (\vec{s}^\trans (\vec{s} \odot \vec{s}))}{\|\vec{s}\|^3} \\
    &\quad + \frac{1}{t-u} \frac{(\vec{s}^\trans \vec{s}) (\vec{1}^\trans \vec{s}) - (\vec{s}^\trans \vec{1}) (\vec{s}^\trans \vec{s})}{\|\vec{s}\|^3} \\
    &= -\frac{1/c}{(t-u)\|\vec{s}\|^3} (\|\vec{s}\|^4 - (\vec{s}^\trans (\vec{s} \odot \vec{s}))(\vec{s}^\trans \vec{1})) \\
    &= \frac{K_2(\vec{s}) / c}{(t-u)\|\vec{s}\|^3}, \\
    \frac{dJ}{d\theta^\trans} &= \frac{1}{(t-u)\|\vec{s}\|^3} [0,-K_1(\vec{s}),+K_2(\vec{s})],
\end{aligned}
\end{equation}
where $K_1=\|\vec{s}\|^2 - (\vec{1}^\trans \vec{s})^2$ and $K_2= (\vec{s}^\trans (\vec{s} \odot \vec{s}))(\vec{s}^\trans \vec{1}) - \|\vec{s}\|^4$.

\subsection{Minimum cosine similarity implies basis of cone}
\label{app:min_cos_similarity}

\begin{lemma}
Let $\mat{A}$ be a rank-2 matrix with all positive entries, and suppose $\vec{a_i}$ and $\vec{a_j}$ are columns of $\mat{A}$ that have the smallest cosine similarity among all columns in $\mat{A}$. Then ${\mat{A} \subset \cone(\{\vec{a_i},\vec{a_j}\})}$.
\end{lemma}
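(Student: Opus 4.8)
The plan is to pass to the two–dimensional column space of $\mat{A}$ and reduce the statement to an elementary fact about angles in the plane. Since $\mat{A}$ has rank $2$, its column space $V$ is a $2$–dimensional subspace; I would fix an orthonormal basis of $V$ so that each column $\vec{a}_\ell$ of $\mat{A}$ is represented by a nonzero planar vector (nonzero because its entries are positive) with polar coordinates $(\rho_\ell,\theta_\ell)$, $\rho_\ell>0$. In this representation the cosine similarity of columns $\vec{a}_k$ and $\vec{a}_l$ is exactly $\cos(\theta_k-\theta_l)$. Because $\mat{A}$ is rank $2$ it has at least two non-parallel columns, so the minimum of the cosine similarity over distinct pairs is attained by a genuinely two-dimensional pair; in particular $\vec{a}_i$ and $\vec{a}_j$ are linearly independent.

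Next I would use positivity to confine all the angles to a short arc. Let $\vec{p}=\proj_V(\vec{1})$ be the orthogonal projection of the all-ones vector onto $V$. For every column, $\langle \vec{a}_\ell,\vec{p}\rangle=\langle \vec{a}_\ell,\vec{1}\rangle=\sum_m (\vec{a}_\ell)_m>0$, which in particular forces $\vec{p}\neq\vec{0}$. Writing $\phi$ for the angle of $\vec{p}$ in the chosen basis, the inequality $\langle \vec{a}_\ell,\vec{p}\rangle>0$ says $\cos(\theta_\ell-\phi)>0$, so every $\theta_\ell$ lies in the open arc $(\phi-\tfrac{\pi}{2},\phi+\tfrac{\pi}{2})$; choosing the angular branch so these values are genuine reals in that interval, the fact that there are only finitely many columns gives $D:=\max_\ell\theta_\ell-\min_\ell\theta_\ell<\pi$, with the extremes $\theta_{\min},\theta_{\max}$ attained.

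Now all pairwise differences $\theta_k-\theta_l$ lie in $(-\pi,\pi)$, where $x\mapsto\cos x$ is a strictly decreasing function of $|x|$ on $[0,\pi)$. Hence $\cos(\theta_k-\theta_l)$ is minimized over distinct pairs precisely when $|\theta_k-\theta_l|$ is maximal, and among points of $[\theta_{\min},\theta_{\max}]$ the largest separation is $\theta_{\max}-\theta_{\min}$, achieved only by the pair $\{\theta_{\min},\theta_{\max}\}$. Therefore, after possibly swapping $i$ and $j$, $\theta_i=\theta_{\min}$ and $\theta_j=\theta_{\max}$, and every remaining column has angle in $[\theta_i,\theta_j]$, an arc of length $\theta_j-\theta_i\in(0,\pi)$. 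For such an arc the two endpoint directions are linearly independent and every direction strictly between them is a strictly positive combination of them; multiplying through by the positive radii $\rho_\ell$ shows each $\vec{a}_\ell$ is a nonnegative combination of $\vec{a}_i$ and $\vec{a}_j$, i.e.\ $\mat{A}\subset\cone(\{\vec{a}_i,\vec{a}_j\})$.

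The delicate step is the second one: one must genuinely invoke the positivity of the entries (rank $2$ alone is not enough) to prevent the column angles from wrapping around a large portion of the circle, and one must fix the angular branch consistently so that $\theta_{\max}-\theta_{\min}$ really is the maximal pairwise separation. Everything after that is routine planar trigonometry, and the only mild bookkeeping is to note that the minimizing pair is over distinct columns and that rank $2$ guarantees such a non-degenerate pair exists.
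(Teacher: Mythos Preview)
Your proof is correct but follows a genuinely different route from the paper's. You work geometrically: pass to the two-dimensional column space, use the projection of $\vec{1}$ onto that subspace as a witness that all column directions lie in a common open half-plane (hence in an arc of length $<\pi$), and then observe that the pair minimizing cosine similarity is exactly the pair of extreme angles, so every other column sits between them in the cone. The paper instead argues algebraically: after normalizing columns to unit length, it writes an arbitrary column $\vec{a}_k=\alpha\vec{a}_i+\beta\vec{a}_j$ and computes $(\alpha,\beta)$ explicitly via the $2\times2$ Gram matrix inverse, then bounds $\alpha,\beta\ge0$ using the two inequalities $\vec{a}_i^\trans\vec{a}_k\ge\vec{a}_i^\trans\vec{a}_j$ and $\vec{a}_j^\trans\vec{a}_k\ge\vec{a}_i^\trans\vec{a}_j$ together with $\vec{a}_i^\trans\vec{a}_j>0$ (from positivity) and $\vec{a}_i^\trans\vec{a}_k,\vec{a}_j^\trans\vec{a}_k\le1$.

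Each approach has its merits. Yours makes the role of positivity very transparent (it is exactly what forces all directions into a half-plane and prevents wraparound), and the argument generalizes cleanly to any finite set of vectors in a rank-$2$ cone. The paper's approach avoids choosing bases or angular branches and gives an explicit lower bound on the conic coefficients, which could be useful quantitatively; on the other hand, its key inequality chain relies on $\vec{a}_i^\trans\vec{a}_j>0$ in a way that is perhaps less immediately visible than your half-plane step.
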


\begin{proof}
Assume the conditions of the lemma hold. First, we show that the columns $\vec{a_i}$ and $\vec{a_j}$ must be linearly independent, by contradiction. Suppose $\vec{a_i}$ and $\vec{a_j}$ are linearly dependent: then the cosine similarity between $\vec{a_i}$ and $\vec{a_j}$ is 1. From the condition of the lemma, $\vec{a_i}$ and $\vec{a_j}$ have the smallest cosine similarity among all columns of $\mat{A}$. Then all pairs of columns of $\mat{A}$ have cosine similarity of 1, and therefore all columns of $\mat{A}$ are linearly dependent. This contradicts the condition that $\mat{A}$ is a rank 2 matrix. Thus, $\vec{a_i}$ and $\vec{a_j}$ are linearly independent.

Next, we show that every column of $\mat{A}$ can be expressed as a non-negative linear combination of the vectors $\vec{a_i}$ and $\vec{a_j}$. W.l.o.g., assume every column of $\mat{A}$ has a norm of one. \footnote{The cosine similarity measure of two vectors does not vary with the magnitude of the vectors. Additionally, the coefficients of the linear combination for an arbitrary column of $\mat{A}$ can be scaled in proportion to the magnitude of the column.} Let $\vec{a_k}$ be any column of $\mat{A}$. From the linear independence of $\vec{a_i}$ and $\vec{a_j}$, and using that $\mat{A}$ is rank 2, it follows that $\vec{a_k}$ can be expressed as a linear combination of $\vec{a_i}$ and $\vec{a_j}$. Denote this linear combination as $\vec{a_k} = \alpha \vec{a_i} + \beta \vec{a_j}$. The coefficients of this linear combination are given by the following formula:
\begin{equation}
\begin{aligned}
    \begin{bmatrix}
        \alpha \\
        \beta
    \end{bmatrix} &= (\mat{V}^\trans \mat{V})^{-1} \mat{V}^\trans \vec{a_k} = \begin{bmatrix}
        \vec{a_i}^\trans \vec{a_i} & \vec{a_i}^\trans \vec{a_j} \\
        \vec{a_j}^\trans \vec{a_i} & \vec{a_j}^\trans \vec{a_j}
    \end{bmatrix}^{-1} \begin{bmatrix}
        \vec{a_i}^\trans \vec{a_k} \\
        \vec{a_j}^\trans \vec{a_k}
    \end{bmatrix} \\
    &= \frac{1}{1-(\vec{a_i}^\trans \vec{a_j})^2} \begin{bmatrix}
        1 & -\vec{a_i}^\trans \vec{a_j} \\
        -\vec{a_i}^\trans \vec{a_j} & 1
    \end{bmatrix} \begin{bmatrix}
        \vec{a_i}^\trans \vec{a_k} \\
        \vec{a_j}^\trans \vec{a_k}
    \end{bmatrix} \\
    &= \frac{1}{1-(\vec{a_i}^\trans \vec{a_j})^2} \begin{bmatrix}
        \vec{a_i}^\trans \vec{a_k} - (\vec{a_i}^\trans \vec{a_j})(\vec{a_j}^\trans \vec{a_k}) \\
        \vec{a_j}^\trans \vec{a_k} - (\vec{a_i}^\trans \vec{a_j})(\vec{a_i}^\trans \vec{a_k})
    \end{bmatrix},
\end{aligned}
\end{equation}
where $\mat{V} = [\vec{a_i}, \vec{a_j}]$. Under the assumption that $\vec{a_i}$ and $\vec{a_j}$ have the smallest cosine similarity among all columns of $\mat{A}$, and all columns of $\mat{A}$ have norm 1, it holds that $\vec{a_i}^\trans \vec{a_k} \geq \vec{a_i}^\trans \vec{a_j}$ and $\vec{a_j}^\trans \vec{a_k} \geq \vec{a_i}^\trans \vec{a_j}$. Then the coefficients $\alpha$ and $\beta$ may be bounded below by
\begin{equation}
    \begin{bmatrix}
        \alpha \\
        \beta
    \end{bmatrix} \geq \frac{1}{1-(\vec{a_i}^\trans \vec{a_j})^2} \begin{bmatrix}
        (\vec{a_i}^\trans \vec{a_j})(1 - \vec{a_j}^\trans \vec{a_k}) \\
        (\vec{a_i}^\trans \vec{a_j})(1 - \vec{a_i}^\trans \vec{a_k})
    \end{bmatrix} \geq 0.
\end{equation}
This holds for any choice of $\vec{a_k}$. Thus, every column of $\vec{A}$ is a non-negative linear combination of the vectors $\vec{a_i}$ and $\vec{a_j}$, and therefore $\mat{A} \subset \cone(\{\vec{a_i}, \vec{a_j}\})$.
\end{proof}

\bibliographystyle{IEEEtran}
\bibliography{IEEEabrv,main}

% Generated by IEEEtran.bst, version: 1.14 (2015/08/26)
\begin{thebibliography}{10}
\providecommand{\url}[1]{#1}
\csname url@samestyle\endcsname
\providecommand{\newblock}{\relax}
\providecommand{\bibinfo}[2]{#2}
\providecommand{\BIBentrySTDinterwordspacing}{\spaceskip=0pt\relax}
\providecommand{\BIBentryALTinterwordstretchfactor}{4}
\providecommand{\BIBentryALTinterwordspacing}{\spaceskip=\fontdimen2\font plus
\BIBentryALTinterwordstretchfactor\fontdimen3\font minus
  \fontdimen4\font\relax}
\providecommand{\BIBforeignlanguage}[2]{{%
\expandafter\ifx\csname l@#1\endcsname\relax
\typeout{** WARNING: IEEEtran.bst: No hyphenation pattern has been}%
\typeout{** loaded for the language `#1'. Using the pattern for}%
\typeout{** the default language instead.}%
\else
\language=\csname l@#1\endcsname
\fi
#2}}
\providecommand{\BIBdecl}{\relax}
\BIBdecl

\bibitem{hollis2020foreground}
J.~Hollis, R.~Raich, J.~Kim, B.~Fishbain, and S.~Kendler, ``Foreground
  signature extraction for an intimate mixing model in hyperspectral image
  classification,'' in \emph{ICASSP 2020 - 2020 IEEE International Conference
  on Acoustics, Speech and Signal Processing (ICASSP)}, 2020, pp. 4732--4736.

\bibitem{winter1999n}
M.~E. Winter, ``N-findr: An algorithm for fast autonomous spectral end-member
  determination in hyperspectral data,'' in \emph{Imaging Spectrometry V}, vol.
  3753.\hskip 1em plus 0.5em minus 0.4em\relax International Society for Optics
  and Photonics, 1999, pp. 266--275.

\bibitem{dobigeon2009joint}
N.~Dobigeon, S.~Moussaoui, M.~Coulon, J.-Y. Tourneret, and A.~O. Hero, ``Joint
  bayesian endmember extraction and linear unmixing for hyperspectral
  imagery,'' \emph{IEEE Transactions on Signal Processing}, vol.~57, no.~11,
  pp. 4355--4368, 2009.

\bibitem{ghamisi2017advanced}
P.~Ghamisi, J.~Plaza, Y.~Chen, J.~Li, and A.~J. Plaza, ``Advanced spectral
  classifiers for hyperspectral images: A review,'' \emph{IEEE Geoscience and
  Remote Sensing Magazine}, vol.~5, no.~1, pp. 8--32, 2017.

\bibitem{camps2005kernel}
G.~Camps-Valls and L.~Bruzzone, ``Kernel-based methods for hyperspectral image
  classification,'' \emph{IEEE Transactions on Geoscience and Remote Sensing},
  vol.~43, no.~6, pp. 1351--1362, 2005.

\bibitem{bioucas2009variable}
J.~M. Bioucas-Dias, ``A variable splitting augmented lagrangian approach to
  linear spectral unmixing,'' in \emph{2009 First Workshop on Hyperspectral
  Image and Signal Processing: Evolution in Remote Sensing}, 2009, pp. 1--4.

\bibitem{heylen2014review}
R.~Heylen, M.~Parente, and P.~Gader, ``A review of nonlinear hyperspectral
  unmixing methods,'' \emph{IEEE Journal of Selected Topics in Applied Earth
  Observations and Remote Sensing}, vol.~7, no.~6, pp. 1844--1868, 2014.

\bibitem{dobigeon2014nonlinear}
N.~Dobigeon, J.-Y. Tourneret, C.~Richard, J.~C.~M. Bermudez, S.~McLaughlin, and
  A.~O. Hero, ``Nonlinear unmixing of hyperspectral images: Models and
  algorithms,'' \emph{IEEE Signal Processing Magazine}, vol.~31, no.~1, pp.
  82--94, 2014.

\bibitem{kendler2019detection}
S.~Kendler, I.~Ron, S.~Cohen, R.~Raich, Z.~Mano, and B.~Fishbain, ``Detection
  and identification of sub-millimeter films of organic compounds on
  environmental surfaces using short-wave infrared hyperspectral imaging:
  Algorithm development using a synthetic set of targets,'' \emph{IEEE Sensors
  Journal}, vol.~19, no.~7, pp. 2657--2664, 2019.

\bibitem{kendler2019non}
S.~Kendler, R.~Aharoni, S.~Cohen, R.~Raich, S.~Weiss, H.~Levy, Z.~Mano,
  B.~Fishbain, and I.~Ron, ``Non-contact and non-destructive detection and
  identification of bacillus anthracis inside paper envelopes,'' \emph{Forensic
  Science International}, vol. 301, pp. e55--e58, 2019.

\bibitem{kendler2022hyperspectral}
S.~Kendler, Z.~Mano, R.~Aharoni, R.~Raich, and B.~Fishbain, ``Hyperspectral
  imaging for chemicals identification: a human-inspired machine learning
  approach,'' \emph{Scientific Reports}, vol.~12, no.~1, p. 17580, 2022.

\bibitem{lu2013manifold}
X.~Lu, H.~Wu, Y.~Yuan, P.~Yan, and X.~Li, ``Manifold regularized sparse nmf for
  hyperspectral unmixing,'' \emph{IEEE Transactions on Geoscience and Remote
  Sensing}, vol.~51, no.~5, pp. 2815--2826, 2013.

\bibitem{boardman1993automating}
J.~W. Boardman, ``Automating spectral unmixing of aviris data using convex
  geometry concepts,'' in \emph{JPL, Summaries of the 4th Annual JPL Airborne
  Geoscience Workshop. Volume 1: AVIRIS Workshop}, 1993.

\bibitem{nascimento2005vertex}
J.~Nascimento and J.~Dias, ``Vertex component analysis: a fast algorithm to
  unmix hyperspectral data,'' \emph{IEEE Transactions on Geoscience and Remote
  Sensing}, vol.~43, no.~4, pp. 898--910, 2005.

\bibitem{chan2011simplex}
T.-H. Chan, W.-K. Ma, A.~Ambikapathi, and C.-Y. Chi, ``A simplex volume
  maximization framework for hyperspectral endmember extraction,'' \emph{IEEE
  Transactions on Geoscience and Remote Sensing}, vol.~49, no.~11, pp.
  4177--4193, 2011.

\bibitem{chan2009convex}
T.-H. Chan, C.-Y. Chi, Y.-M. Huang, and W.-K. Ma, ``A convex analysis-based
  minimum-volume enclosing simplex algorithm for hyperspectral unmixing,''
  \emph{IEEE Transactions on Signal Processing}, vol.~57, no.~11, pp.
  4418--4432, 2009.

\bibitem{leplat2019minimum}
V.~Leplat, A.~M. Ang, and N.~Gillis, ``Minimum-volume rank-deficient
  nonnegative matrix factorizations,'' in \emph{ICASSP 2019 - 2019 IEEE
  International Conference on Acoustics, Speech and Signal Processing
  (ICASSP)}, 2019, pp. 3402--3406.

\bibitem{li2008minimum}
J.~Li and J.~M. Bioucas-Dias, ``Minimum volume simplex analysis: A fast
  algorithm to unmix hyperspectral data,'' in \emph{IGARSS 2008 - 2008 IEEE
  International Geoscience and Remote Sensing Symposium}, vol.~3, 2008, pp. III
  -- 250--III -- 253.

\bibitem{nascimento2009nonlinear}
J.~M. Nascimento and J.~M. Bioucas-Dias, ``Nonlinear mixture model for
  hyperspectral unmixing,'' in \emph{Image and Signal Processing for Remote
  Sensing XV}, vol. 7477, 2009, pp. 157--164.

\bibitem{heinz2001fully}
D.~Heinz and Chein-I-Chang, ``Fully constrained least squares linear spectral
  mixture analysis method for material quantification in hyperspectral
  imagery,'' \emph{IEEE Transactions on Geoscience and Remote Sensing},
  vol.~39, no.~3, pp. 529--545, 2001.

\bibitem{yokoya2014nonlinear}
N.~Yokoya, J.~Chanussot, and A.~Iwasaki, ``Nonlinear unmixing of hyperspectral
  data using semi-nonnegative matrix factorization,'' \emph{IEEE Transactions
  on Geoscience and Remote Sensing}, vol.~52, no.~2, pp. 1430--1437, 2014.

\bibitem{ding2010convex}
C.~H. Ding, T.~Li, and M.~I. Jordan, ``Convex and semi-nonnegative matrix
  factorizations,'' \emph{IEEE Transactions on Pattern Analysis and Machine
  Intelligence}, vol.~32, no.~1, pp. 45--55, 2010.

\bibitem{halimi2011nonlinear}
A.~Halimi, Y.~Altmann, N.~Dobigeon, and J.-Y. Tourneret, ``Nonlinear unmixing
  of hyperspectral images using a generalized bilinear model,'' \emph{IEEE
  Transactions on Geoscience and Remote Sensing}, vol.~49, no.~11, pp.
  4153--4162, 2011.

\bibitem{donoho2004does}
D.~Donoho and V.~Stodden, ``When does non-negative matrix factorization give a
  correct decomposition into parts?'' in \emph{Advances in Neural Information
  Processing Systems}, vol.~16, 2004, pp. 1141--1148.

\bibitem{laurberg2008theorems}
H.~Laurberg, M.~G. Christensen, M.~D. Plumbley, L.~K. Hansen, and S.~H. Jensen,
  ``Theorems on positive data: On the uniqueness of nmf,'' \emph{Computational
  intelligence and neuroscience}, vol. 2008, 2008.

\bibitem{huang2014non}
K.~Huang, N.~D. Sidiropoulos, and A.~Swami, ``Non-negative matrix factorization
  revisited: Uniqueness and algorithm for symmetric decomposition,'' \emph{IEEE
  Transactions on Signal Processing}, vol.~62, no.~1, pp. 211--224, 2014.

\bibitem{fu2018identifiability}
X.~Fu, K.~Huang, and N.~D. Sidiropoulos, ``On identifiability of nonnegative
  matrix factorization,'' \emph{IEEE Signal Processing Letters}, vol.~25,
  no.~3, pp. 328--332, 2018.

\bibitem{jia2009constrained}
S.~Jia and Y.~Qian, ``Constrained nonnegative matrix factorization for
  hyperspectral unmixing,'' \emph{IEEE Transactions on Geoscience and Remote
  Sensing}, vol.~47, no.~1, pp. 161--173, 2009.

\bibitem{zhuang2019regularization}
L.~Zhuang, C.-H. Lin, M.~A.~T. Figueiredo, and J.~M. Bioucas-Dias,
  ``Regularization parameter selection in minimum volume hyperspectral
  unmixing,'' \emph{IEEE Transactions on Geoscience and Remote Sensing},
  vol.~57, no.~12, pp. 9858--9877, 2019.

\bibitem{fu2016robust}
X.~Fu, K.~Huang, B.~Yang, W.-K. Ma, and N.~D. Sidiropoulos, ``Robust volume
  minimization-based matrix factorization for remote sensing and document
  clustering,'' \emph{IEEE Transactions on Signal Processing}, vol.~64, no.~23,
  pp. 6254--6268, 2016.

\bibitem{miao2007endmember}
L.~Miao and H.~Qi, ``Endmember extraction from highly mixed data using minimum
  volume constrained nonnegative matrix factorization,'' \emph{IEEE
  Transactions on Geoscience and Remote Sensing}, vol.~45, no.~3, pp. 765--777,
  2007.

\bibitem{fu2019nonnegative}
X.~Fu, K.~Huang, N.~D. Sidiropoulos, and W.-K. Ma, ``Nonnegative matrix
  factorization for signal and data analytics: Identifiability, algorithms, and
  applications,'' \emph{IEEE Signal Processing Magazine}, vol.~36, no.~2, pp.
  59--80, 2019.

\end{thebibliography}

\end{document}